\newcommand{\shortqed}{\hfill \mbox{$\blacksquare$} \smallskip}
\newcommand{\Real}{\mathbb{R}}
\begin{document} 

\title{Plane Formation by Synchronous Mobile Robots in 
the Three Dimensional Euclidean Space}
\author{Yukiko Yamauchi \thanks{Corresponding author. 
Address: 744 Motooka, Nishi-ku, Fukuoka 819-0395, Japan. 
Fax: +81-92-802-3637. Email: \texttt{yamauchi@inf.kyushu-u.ac.jp}}, 
Taichi Uehara, 
Shuji Kijima, 
\and 
Masafumi Yamashita}
\institute{
Kyushu University, Japan. 
}

\maketitle 

\begin{abstract}
Creating a swarm of mobile computing entities frequently called
robots, agents or sensor nodes, 
with self-organization ability is a contemporary challenge 
in distributed computing.
Motivated by this,
we investigate the {\em plane formation problem} that 
requires a swarm of robots moving in the three dimensional Euclidean 
space to land on a common plane. 
The robots are fully synchronous and endowed with visual perception.
But they do not have identifiers,
nor access to the global coordinate system,
nor any means of explicit communication with each other. 
Though there are plenty of results on the agreement problem 
for robots in the two dimensional plane, for example, 
the point formation problem, the pattern formation problem, and so on, 
this is the first result for robots in the {\em three dimensional space}. 
This paper presents a necessary and sufficient condition
for fully-synchronous robots to solve the 
plane formation problem that does not depend on obliviousness 
i.e., the availability of local memory at robots. 
An implication of the result is somewhat counter-intuitive:
The robots {\em cannot} form a plane from most of the semi-regular 
polyhedra,
while they {\em can} form a plane from every regular polyhedron 
(except a regular icosahedron), 
whose symmetry is usually considered to be higher than any semi-regular
polyhedrdon. 
\end{abstract}

\noindent{\bf Keywords.}
symmetry breaking, 
mobile robots, plane formation, rotation group.

\section{Introduction}
\label{sec:intro}

Self-organization in a swarm of mobile computing entities 
frequently called robots, agents or sensor nodes, 
has gained much attention 
as sensing and controlling devices are developed and become cheaper. 
It is expected that mobile robot systems perform patrolling, 
sensing, and exploring in a harsh environment such as disaster area, 
deep sea, and space without any human intervention.
Theoretical aspect of such mobile robot systems in the
{\em two dimensional} Euclidean space (2D-space or plane) 
attracts much attention and distributed control of mobile robots
with very weak capabilities has been
investigated~\cite{AP06,AOSY99,BGT10,CFPS12,CP05,CP08,DFPSY16,DFSY15,EP09,FPS12,FPSV14,FPSW05,FPSW08,FYOKY15,IKY14,ISKI12,P05,SDY09,SY99,YK96,YS10,YY13,YY14}. 
The robots are anonymous, oblivious (memory-less),
have neither access to the global coordinate system nor
explicit communication medium. 
For robots moving in the {\em three dimensional} Euclidean space (3D-space),
we first investigate the {\em plane formation problem},
which is a fundamental self-organization problem that requires robots 
to occupy distinct positions on a common plane 
without making any multiplicity,   
mainly motivated by an obvious observation that 
robots on a plane would be easier to control than those moving 
in 3D-space.

In this paper, a robot is anonymous and is represented by a point 
in 3D-space. A robot repeats executing
a ``Look-Compute-Move'' cycle, during which, 
it observes, in a {\em Look phase}, 
the positions of all robots by taking a snapshot, 
which we call a {\em local observation} in this paper,
computes the next position based on a given deterministic algorithm 
in a {\em Compute phase},
and moves to the next position in a {\em Move phase}.
This definition of Look-Compute-Move cycle implies that
it has {\em full vision}, i.e., the vision is unrestricted 
and the move is atomic, 
i.e., each robot does not stop en route to the next position
and we do not care which route it takes. 
A robot is {\em oblivious} if in a Compute phase, 
it uses only the snapshot just taken in the preceding Look phase, 
i.e., the output of the algorithm depends neither on a snapshot
nor computation of the past cycles.
Otherwise, a robot is {\em non-oblivious}. 
A robot has no access to the global $x$-$y$-$z$ coordinate system 
and all actions are done in terms of its local $x$-$y$-$z$
coordinate system. 
We assume that all local coordinate systems are right-handed.
A {\em configuration} of such robot system is a set of
points observed in the global coordinate system. 
Each robot obtains a configuration translated with its local
coordinate system in a Look phase. 

The robots can see each other,
but do not have direct communication capabilities; 
communication among robots must take place 
solely by moving and observing robots' positions 
with tolerating possible inconsistency among the local coordinate systems.
The robots are {\em anonymous}; 
they have no unique identifiers and are indistinguishable by their looks 
and execute the same algorithm. 
Finally, they are fully synchronous (FSYNC);
they all start the $i$-th Look-Compute-Move cycle simultaneously
and synchronously execute each of its Look, Compute, and Move phases.

The purpose of this paper is to show a necessary and sufficient
condition for the robots to solve the plane formation
problem.\footnote{Because multiplicity is not
allowed, gathering at one point (i.e., point formation) is not
a solution for the plane formation problem.} 
The {\em line formation problem} in 2D-space 
is the counterpart of the plane formation problem in 3D-space
and is {\em unsolvable} from an initial configuration $P$ 
if $P$ is a regular polygon (i.e., the robots occupy the vertices of
a regular polygon),
intuitively because anonymous robots forming a regular polygon 
cannot break symmetry among themselves 
and lines they propose are also symmetric, 
so that they cannot agree on one line from them~\cite{SY99}.
Hence symmetry breaking among robots would play a crucial role
in the plane formation problem. 

The {\em pattern formation problem} requires robots 
to form a target pattern from an initial configuration 
and our plane formation problem is a subproblem of 
the pattern formation problem in 3D-space.
To investigate the pattern formation problem in 2D-space,
which contains the line formation problem as a subproblem,
Suzuki and Yamashita~\cite{SY99} used the concept of {\em symmetricity} 
to measure the degree of symmetry of a configuration 
in 2D-space.\footnote{
The symmetricity was originally introduced in~\cite{YK96} 
for anonymous networks to investigate the solvability 
of some agreement problems.}
Let $P$ be a configuration.
Then its symmetricity $\rho(P)$ is the order of the cyclic group of $P$, 
with the rotation center $o$ being the center of the smallest enclosing
circle of $P$, if $o \not\in P$.
That is, $\rho(P)$ is the number of angles such that
rotating $P$ by $\theta$ 
($\theta \in [0, 2 \pi)$) around $o$ produces $P$ itself,
which intuitively means that the $\rho(P)$ robots forming 
a regular $\rho(P)$-gon in $P$ may not be able to break symmetry among
them.\footnote{We consider a point as a regular $1$-gon with an
arbitrary center and a set of two points as a regular $2$-gon with
the center at the midpoint.}
However, when $o \in P$, the symmetricity $\rho(P)$ is defined to be $1$ 
independently of its rotational symmetry.
This is the crucial difference between the cyclic group 
and the symmetricity that reflects the fact that the robot at $o$ 
{\em can} translate $P$ into another configuration $P'$ with 
symmetricity $1$ by simply leaving $o$. 
The following result has been obtained~\cite{FYOKY15,SY99,YS10}:
A target pattern $F$ is formable from an initial configuration $P$,
if and only if $\rho(P)$ divides $\rho(F)$.

In this paper, based on the results in 2D-space,
we measure the symmetry among robots in 3D-space by rotation groups,
each of which is defined by a set of rotation axes and their
arrangement. 
In 3D-space, such rotation groups with finite order are classified into 
the cyclic groups, the dihedral groups, the tetrahedral group, 
the octahedral group, and the icosahedral group.
We call the cyclic groups and the dihedral groups {\em two-dimensional} (2D)
rotation groups 
in the sense that the plane formation problem is obviously solvable
from a configuration on which only a 2D rotation group acts,
since there is a single rotation axis or a principal rotation axis 
and all robots can agree on a plane perpendicular to the axis and 
containing the center of the smallest enclosing ball of the robots. 
Then the oblivious (thus, non-oblivious)
FSYNC robots can easily solve the plane formation problem
by moving onto the agreed plane.

The other three rotation groups are recognized as the groups formed by
the rotations on the corresponding regular polyhedra 
and they are also called polyhedral groups.
A regular polyhedron consists of congruent regular polygons 
and all its vertices are congruent. 
A regular polyhedron has {\em vertex-transitivity}, 
that is, there are rotations that replace any two vertices 
with keeping the polyhedron unchanged as a whole. 
For example, we can rotate a cube around any axis 
containing two opposite vertices, 
any axis containing the centers of opposite faces, 
and any axis containing the midpoints of opposite edges. 
For each regular polyhedron, 
the rotations applicable to it form a group 
and, in this way, the tetrahedral group, 
the octahedral group, and the icosahedral group 
are defined.\footnote{
There are five regular polyhedra; 
regular tetrahedron, regular cube, regular octahedron, 
regular dodecahedron, and a regular icosahedron. 
A cube and a regular octahedron are dual each other,
and so are a regular dodecahedron and a regular icosahedron.
A tetrahedron is a self-dual. 
Since the same rotations are applicable 
to a regular polyhedron and its dual,
there are three rotation groups.}
We call them {\em three-dimensional} (3D) rotation groups. 

When a 3D rotation group acts on a configuration, 
the robots are not on one plane. 
In addition, 
the vertex-transitivity among the robots may allow some of them 
to have identical local observations. 
This may result in an infinite execution,
where the robots keep symmetric movement in 3D-space
and never agree on a plane. 
A vertex-transitive set of points is obtained by 
specifying a seed point and a set of symmetry operations, 
which consists of rotations around an axis, 
reflections for a mirror plane ({\em bilateral symmetry}), 
reflections for a point ({\em central inversion}), 
and {\em rotation-reflections}~\cite{C97}. 
However, it is sufficient to consider vertex-transitive sets of points
obtained by transformations that preserve the center of 
the smallest enclosing ball of the robots 
and keep Euclidean distance and handedness, 
in other words, direct congruent transformations,
since otherwise, the robots can break the symmetry 
in a vertex-transitive set of points because all local coordinate
systems are righthanded. 
Such symmetry operations consist of rotations around some axes. 
(See \cite{C73} for more detail.) 

We define the {\em rotation group} of a configuration in 3D-space
as the rotation group that acts on the configuration, i.e.,  
a set of points. 
Let $P$ and  $\gamma(P)$ be a set of points in 3D-space 
and its rotation group, respectively. 
Then the robots are partitioned into vertex-transitive subsets 
regarding $\gamma(P)$, 
so that for each subset, 
the robots in it may have the same local observation. 
We call this decomposition {\em $\gamma(P)$-decomposition} of $P$. 
The goal of this paper is to show the following theorem:

\begin{theorem}
\label{theorem:main}
Let $P$ and $\{ P_1, P_2, \ldots, P_m \}$ be an initial configuration
and the $\gamma(P)$-decomposition of $P$, respectively. 
Then irrespective of obliviousness, 
FSYNC robots can form a plane from $P$ 
if and only if (i) $\gamma(P)$ is a 2D rotation group, or 
(ii) $\gamma(P)$ is a 3D rotation group
and there exists a subset $P_i$ such that $|P_i| \not\in \{12, 24, 60\}$. 
\end{theorem}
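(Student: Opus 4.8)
The plan is to prove both implications, with the crux being a purely group\nobreakdash-theoretic observation that recasts the arithmetic condition $|P_i|\in\{12,24,60\}$ as a statement about free group actions. Writing $T$, $O$, $I$ for the tetrahedral, octahedral, and icosahedral groups---whose orders are exactly $12$, $24$, $60$, and where $T$ is the smallest $3$D rotation group with a copy sitting inside each of $T$, $O$, $I$---I would first establish the following lemma: a $\gamma(P)$-orbit $P_i$ satisfies $|P_i|\in\{12,24,60\}$ if and only if $T$ acts \emph{freely} on $P_i$ (every robot in $P_i$ has trivial $T$-stabilizer). The point is that a free orbit of a $3$D rotation group has size equal to the group order, and one checks the remaining cases directly: e.g.\ the $12$ vertices of an icosahedron and the $12$ edge\nobreakdash-midpoints of an octahedron, although carrying nontrivial $\gamma(P)$-stabilizers of order $5$ resp.\ $2$, each form a single \emph{free} $T$-orbit, because $T$ contains no element of order $5$ and none of the relevant $2$-fold rotations. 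Consequently the condition ``all $|P_i|\in\{12,24,60\}$'' is equivalent to ``$T$ acts freely on the whole configuration $P$''.

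For the impossibility direction I would fix such a $P$ (so $\gamma(P)$ is $3$D and $T$ acts freely on $P$) and let the adversary defeat every algorithm by a symmetric frame assignment. Placing the center of the smallest enclosing ball at the origin, I pick one representative robot $r_0$ per $T$-orbit with an arbitrary right\nobreakdash-handed frame $f_0$ and assign to the robot $g\cdot r_0$ the frame $gf_0$ for $g\in T$; freeness makes this well defined, and since each $g$ is a rotation all assigned frames are right\nobreakdash-handed. A short computation shows that every robot of a $T$-orbit then records the \emph{identical} local view $f_0^{-1}(P-x_{r_0})$. In the FSYNC model identical views force identical target points in local frames, so the displacements are $T$-symmetric and the next configuration is again a union of $T$-orbits, hence $T$-symmetric; by induction this persists, and the same argument applies verbatim to non-oblivious robots, whose entire histories stay $T$-symmetric. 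Because a $3$D rotation group fixes no plane, a $T$-symmetric set of distinct points is never coplanar, while collapsing two robots would create the forbidden multiplicity; thus the robots can never land on a common plane.

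For sufficiency, case (i) is the observation already recorded in the introduction: a $2$D group has a (principal) axis, all robots agree on the plane perpendicular to it through the center of the smallest enclosing ball, and move onto it, which works obliviously. In case (ii) the lemma tells me that the good orbit $P_i$ is \emph{not} $T$-free, so some of its robots lie on a rotation axis of $\gamma(P)$ and carry a nontrivial stabilizer; their local views are strictly more symmetric than those of generic robots and are therefore recognizable. The plan is to use these ``axis robots'' as landmarks to collapse the $3$D symmetry to a $2$D one: have the algorithm designate, in a rotation\nobreakdash-invariant way, the distinguished family of axes carried by $P_i$, let the landmark robots execute a prescribed move that singles out one axis (reducing $\gamma$ to a cyclic or dihedral group fixing that axis), and then invoke the case-(i) procedure on the resulting $2$D\nobreakdash-symmetric configuration. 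The correctness claim is that, precisely because no $3$D subgroup acts freely on $P_i$, no adversarial frame assignment can symmetrize this move away and keep all candidate axes indistinguishable.

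The main obstacle I expect is exactly this last step: turning ``$T$ does not act freely on $P_i$'' into an explicit, frame\nobreakdash-robust algorithm that monotonically reduces the symmetry to $2$D for \emph{every} admissible initial configuration and for oblivious robots. One must treat each of $T$, $O$, $I$ and each non\nobreakdash-free orbit type separately ($3$- and $2$-fold axes for $T$; $4$- and $3$-fold for $O$; $3$- and $2$-fold for $I$), verify that the designated landmark move cannot be neutralized by the adversary's choice of frames, and---because oblivious robots cannot remember that a reduction is in progress---ensure that every intermediate configuration produced along the way is itself recognized and advanced rather than restarted into a worse symmetry. By contrast, the impossibility direction and the group\nobreakdash-theoretic lemma are comparatively routine once the free\nobreakdash-action reformulation is in hand.
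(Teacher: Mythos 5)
Your necessity half is sound and is essentially the paper's own argument: your ``free $T$-action'' reformulation is exactly the paper's notion of folding $\mu(p)=1$ (Lemma~\ref{lemma:folding} and Lemma~\ref{lemma:cardinality}), and your symmetric frame assignment $g\cdot r_0\mapsto gf_0$ is literally the construction in Lemma~\ref{lemma:spherical-nec} and Theorem~\ref{theorem:nec}; the paper varies the group $G\in\{T,O,I\}$ with $\min_i|P_i|$, whereas you always use an embedded copy of $T$, which is a harmless (even slightly cleaner) simplification, and your checks of the edge cases (icosahedron and cuboctahedron as free $T$-orbits) are correct. The genuine gap is in the sufficiency half, and it is not a small one: what you label ``the main obstacle'' is the technical heart of the paper's proof, and the sketch you give for overcoming it would not work as stated. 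Robots of the non-free orbit $P_i$ cannot ``execute a prescribed move that singles out one axis'': they form a single transitive orbit, the adversary can give them rotationally symmetric right-handed frames, and then any deterministic rule makes them move symmetrically, so no election of a single axis is possible. The paper's mechanism is different in kind: each robot of the special orbit (a regular tetrahedron, octahedron, cube, dodecahedron, or icosidodecahedron, by Table~\ref{table:vt-sets}) moves \emph{off} the rotation axes, to a point $\epsilon$ short of the center of an adjacent face (Algorithm~\ref{alg:break}); symmetry is broken not by choosing an axis but because no adversarial combination of face choices can retain \emph{any} 3D rotation group.

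Proving that last statement is where the real work lies, and your proposal does not contain it. It is not enough to observe that $\gamma(P_i)$ itself cannot survive (the folding/counting argument you allude to gives that immediately); one must exclude every 3D subgroup --- for instance, that no choice of faces by the $20$ robots of a dodecahedron leaves a subset forming a regular tetrahedron, cube, or octahedron. This is precisely the case analysis (Cases A--E) of Lemma~\ref{lemma:break}, driven by the geometric fact that, for the specific $\epsilon=\ell/100$, all candidate destinations cluster near the vertices of the dual polyhedron. Your treatment of case (i) has a second, independent gap: two robots placed mirror-symmetrically about the agreed plane have the same perpendicular foot, so ``move onto it'' creates a forbidden multiplicity. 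The paper needs the chirality trick of Function~\texttt{SelectDestination} (Algorithm~\ref{func:dest}) --- the two robots' right-handed frames induce opposite orientations on the plane, so clockwise perturbations send them to distinct points --- together with the bookkeeping of Lemma~\ref{lemma:landing}, and the same landing machinery is needed after your symmetry-breaking step in case (ii). Until you supply a concrete symmetry-breaking move with a proof that every adversarial outcome has a 2D rotation group, and a collision-free landing rule, the ``if'' direction of the theorem remains unproven.
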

We can rephrase this theorem as follows: 
FSYNC robots cannot form a plane from an initial configuration $P$ 
if and only if $\gamma(P)$ is a 3D rotation group
and $|P_i| \in \{12, 24, 60\}$ for each $P_i$. 
The impossibility proof is by a construction 
based on the $\gamma(P)$-decomposition of the robots.
Obviously $12, 24$, and $60$ are the cardinalities of the 3D rotation groups 
and when a vertex-transitive set has a cardinality in $\{12, 24, 60\}$, 
the corresponding rotation group allows ``symmetric'' local coordinate
systems of those robots 
that allows identical local observations of those robots.
Thus they move symmetrically regarding the rotation group that results
in an infinite execution 
where the robots' positions keep the 3D rotation group forever.
Local memory at robots does not improve the situation 
since there exists an initial configuration where
the positions and local coordinate systems of robots are symmetric 
and 
the contents of local memory at robots are identical, e.g., empty. 
Hence we have the same impossibility result for 
non-oblivious FSYNC robots. 

Theorem~\ref{theorem:main} implies the following,
which is somewhat counter-intuitive:
The plane formation problem is {\em solvable},
even if the robots form a regular polyhedron
except the regular icosahedron in an initial configuration $P$, 
while it is {\em unsolvable} for the semi-regular polyhedra
except an icosidodecahedron. 

For the possibility proof, 
we present a plane formation algorithm 
for oblivious FSYNC robots, that non-oblivious
FSYNC robots can execute by ignoring the content of
their local memory. 
The proposed algorithm consists of a symmetry breaking algorithm
and a landing algorithm.
When the rotation group $\gamma(P)$ of an initial configuration $P$
is a 3D rotation group, 
the symmetry breaking algorithm translates $P$ 
into another configuration $P'$ 
whose rotation group $\gamma(P')$ is a 2D rotation group. 
From the condition of Theorem~\ref{theorem:main},
the $\gamma(P)$-decomposition of $P$ contains one of the
above five (semi-)regular polyhedra, i.e., 
a regular tetrahedron, a regular octahedron, a cube,
a regular dodecahedron, and an icosidodecahedron. 
The symmetry breaking algorithm breaks the symmetry of these five
polyhedra so that the resulting configuration $P'$ as a whole has a
2D rotation group. 
Then the robots can agree on a plane as described before for the 
2D-rotation groups and the landing algorithm assigns distinct
landing points on the agreed plane. 
The landing algorithm is quite simple but contains some technical
subtleties. 
We describe the entire plane formation algorithm with its 
correctness proofs.

\medskip
\noindent{\bf Related works.~}
Autonomous mobile robot systems in 2D-space
has been extensively investigated and
the main research interest has been the computational power
of robots. 
Many fundamental distributed tasks have been introduced, for example,
{\em gathering}, {\em pattern formation}, {\em partitioning},
and {\em covering}. 
These problems brought us deep insights on 
the limit of computational power of autonomous mobile robot systems and 
revealed necessary assumptions of such systems to complete a given task. 
We survey the state of the art of autonomous mobile robot systems 
in 2D-space 
since there is few research on robots in 3D-space. 
The book by Flocchini et al.~\cite{FPS12} contains almost all results 
on autonomous mobile robot systems up to year 2012.

Asynchrony and movement of robots are considered to be
subject to the {\em adversary}. In other words, we consider
the worst case scenario. 
Besides fully synchronous (FSYNC) robots,
there are two other types of robots, 
{\em semi-synchronous} (SSYNC) and {\em asynchronous} (ASYNC) robots.
The robots are SSYNC if some robots do not start 
the $i$-th Look-Compute-Move cycle for some $i$, 
but all of those who have started the cycle synchronously 
execute their Look, Compute and Move phases~\cite{SY99}. 
The robots are ASYNC if no assumptions are made on the execution of 
Look-Compute-Move cycles~\cite{FPSW08}.
The movement of a robot is {\em non-rigid} if in each Move phase, 
the robot moves at least unknown minimum moving distance $\delta$,
but after moving $\delta$ it may stop on any arbitrary point on
the track to the next position. 
If the length of the track to the next position is smaller than $\delta$,
it stops at the next position.
If a robot reaches its next position 
in any Move phase, its movement is {\em rigid}.
Most existing papers consider non-rigid movement of robots. 
Another important assumption is whether the robots agree on
the clockwise direction, i.e., {\em chirality}. 
Most existing literature assumes non-rigid movement and
chirality. 

One of the most general form of formation tasks
for autonomous mobile robot
systems is the {\em pattern formation problem}
that requires the robots to form a given target pattern. 
The pattern formation problem in 2D-space 
includes the line formation problem as a subproblem and
Yamashita et al. investigated its solvability 
for each of the FSYNC, SSYNC and ASYNC models~\cite{FYOKY15,SY99,YS10},
that are summarized as follows:
(1) For non-oblivious FSYNC robots,
a pattern $F$ is formable from an initial configuration $P$
if and only if $\rho(P)$ divides $\rho(F)$.
(2) Pattern $F$ is formable from $P$ by oblivious ASYNC robots 
if $F$ is formable from $P$ by non-oblivious FSYNC robots,
except for $F$ being a point of multiplicity 2.

This exceptional case is called the rendezvous problem.
Indeed, it is trivial for two FSYNC robots,
but is unsolvable for two oblivious SSYNC (and hence ASYNC)
robots~\cite{SY99}.
On the other hand, 
oblivious SSYNC (and ASYNC) robots can converge to a point. 
Therefore it is a bit surprising to observe that the point formation problem 
for more than two robots is solvable even for ASYNC robots.
The result first appeared in \cite{SY99} for SSYNC robots
and then is extended for ASYNC robots in \cite{CFPS12}.
As a matter of fact, 
except the existence of the rendezvous problem,
the point formation problem for more than two robots (which is also
called as the gathering problem) is the easiest problem
in that it is solvable from any initial configuration $P$,
since $\rho(F) = n$ when $F$ is a point of multiplicity $n$,
and $\rho(P)$ is always a divisor of $n$ by the definition of the symmetricity,
where $n$ is the number of robots.

The other easiest case is a regular $n$-gon
(frequently called the circle formation problem), 
since $\rho(F) = n$.
A circle is formable from any initial configuration,
like the point formation problem for more than two robots.
Recently the circle formation problem for $n$ oblivious ASYNC 
robots ($n \neq 4$) 
is solved without chirality~\cite{FPSV14}.

Das et al. considered formation of a sequence of
patterns by oblivious SSYNC robots with rigid movement~\cite{DFSY15}.
They showed that the symmetricity of each pattern of 
a formable sequence should be identical and a multiple of the
symmetricity of an initial configuration.
Such sequence of patterns is a geometric global memory formed
by oblivious robots. 

To circumvent the symmetricity and enable arbitrary pattern formation, 
Yamauchi and Yamashita proposed a randomized algorithm  that
allows the robots
to probabilistically break the symmetricity of the initial configuration 
and showed that the oblivious ASYNC robots can form any
target pattern with probability $1$~\cite{YY14}.

The notion of {\em compass} was first introduced in~\cite{FPSW05} 
that assumes agreement of the direction and/or the orientation
of $x$-$y$ local coordinate systems. 
Flocchini et al. showed that if the oblivious ASYNC robots without
chirality agree on the 
directions and orientations of $x$ and $y$ axes, 
they can form any arbitrary target pattern~\cite{FPSW08}. 

Flocchini et al. showed that agreement of the directions
and orientation of both axes of local coordinate systems
allows oblivious ASYNC robots
with {\em limited visibility} to solve the point formation
problem~\cite{FPSW05}.
A robot has limited visibility if it can observe other robots
within unknown fixed distance from itself. 
Agreement of the direction and the orientation of two axes
can be replaced by agreement of direction and the orientation of
one axis and chirality.
Souissi et al. investigate the effect of the deviation of one axis
from the global coordinate system at robots with 
chirality on the point formation problem
and first introduced unreliable compasses, called
{\em eventually consistent compass}, that is inaccurate
for an arbitrary long time, i.e., it has an 
arbitrary deviation and the deviation dynamically changes,
but eventually stabilizes to accurate axes~\cite{SDY09}. 
Izumi et al. investigated the maximum static and dynamic deviation
of compass for the point formation problem of 
two oblivious ASYNC robots~\cite{ISKI12}. 

Robustness of autonomous mobile robot systems has been discussed against 
error in sensing, computation, control, and 
several kinds of faults. 
A system is {\em self-stabilizing} if it accomplishes its task
from an arbitrary initial configuration. 
A self-stabilizing system can tolerate any finite number of
transient faults by considering the configuration
after the final fault as an arbitrary initial configuration~\cite{D74}. 
Suzuki and Yamashita pointed out that any oblivious mobile robot system 
is self-stabilizing since it does not depend on previous cycles~\cite{SY99}. 
Cohen and Peleg considered error in sensing, computation, and 
control, and showed acceptable range of them for oblivious ASYNC robots
to converge to a point~\cite{CP08}. 
Two fundamental types of permanent faults in distributed computing are 
{\em crash fault} that stops the faulty entity and
{\em Byzantine fault} that allows arbitrary (malicious) behavior of
faulty entity.
Cohen and Peleg considered the effect of crash faults at robots on 
the convergence problem for oblivious ASYNC robots~\cite{CP05}. 
Bouzid et al. considered the effect of Byzantine faults at robots 
on the convergence problem in one-dimensional space 
(i.e., line) for SSYNC and ASYNC robots~\cite{BGT10}.
Agmon and Peleg considered both crash faults and Byzantine faults for the
point formation problem~\cite{AP06}. 

Efrima and Peleg considered the {\em partitioning problem}
that requires the robots to form teams of size $k$
that divides $n$~\cite{EP09}.
Without any compass, the partition problem is unsolvable
from a symmetric initial configuration and they considered the
availability of compass and asynchrony among robots.
Izumi et al. proposed an approximation algorithm for the
{\em set cover problem} of SSYNC robots
that requires that for a given set of 
target points, there is at least one robot in a unit distance 
from each target point~\cite{IKY14}. 
In contrast to the pattern formation problem, 
these problems have no (absolute) predefined final positions. 

Computational power of robots with limited visibility and without
any additional assumption has been also discussed. 
Yamauchi and Yamashita showed that oblivious FSYNC (thus SSYNC and
ASYNC) robots
with limited visibility have substantially weaker formation power
than the robots with unlimited visibility~\cite{YY13}. 
Ando et al. proposed a convergence algorithm for oblivious SSYNC
robots with limited visibility~\cite{AOSY99} 
while Flocchini et al. assumed consistent compass for
convergence of oblivious ASYNC robots with limited
visibility~\cite{FPSW05}. 

Peleg et al. first introduced the 
{\em luminous robot model} where each robot is equipped with
externally and/or internally visible lights~\cite{P05}. 
Light is an abstraction of both 
local memory and communication medium. 
Das et al. investigated the class of tasks that the luminous
robots can accomplish~\cite{DFPSY16}. 
They provided simulation algorithms for oblivious robots with
constant number of externally visible bits to simulate 
robots without lights in stronger synchronization model. 

All these papers discuss autonomous mobile robot systems
in 2D-space 
and little is known when the robots are placed in 3D-space.
This paper first investigates autonomous mobile robot systems in
3D-space and give a characterization of the plane formation problem.

\medskip
\noindent{\bf Organization.~~} 
In Section~\ref{sec:prel}, 
we first define the robot model and introduce the rotation group for
points in 3D-space. Then we briefly show our main idea for the
symmetry breaking algorithm. 
We start with some properties imposed on the robots by their 
rotation group in Section~\ref{sec:3Dsym}. 
In Section~\ref{sec:PlF}, 
we prove Theorem~\ref{theorem:main}
by showing the impossibility of symmetry breaking 
and by presenting a plane formation algorithm
for oblivious FSYNC robots for solvable instances. 
Finally, Section~\ref{sec:concl} concludes this paper 
by giving some concluding remarks. 

\section{Preliminary}
\label{sec:prel}

\subsection{Robot model} 
\label{SSmodel}

Let $R = \{r_1, r_2, \ldots, r_n\}$ be a set of anonymous $n$ robots 
each of which is represented by a point in 3D-space.
Their indices are used just for description. 
Without loss of generality, we assume $n \geq 4$,
since all robots are already on a plane when $n \leq 3$. 
By $Z_0$ we denote the global $x$-$y$-$z$ coordinate system.
Let $p_i(t) \in \Real^3$ be the position of $r_i$ at time $t$ in $Z_0$,
where $\Real$ is the set of real numbers.
A {\em configuration} of $R$ at time $t$ is denoted by 
$P(t) = \{p_1(t), p_2(t), \ldots, p_n(t)\}$. 
We assume that the robots initially occupy distinct positions,
i.e., $p_i(0) \not= p_j(0)$ for all $1 \leq i < j \leq n$. 
In general, $P(t)$ can be a multiset,
but it is always a set throughout this paper 
since the proposed algorithm avoids any multiplicity.\footnote{
It is impossible to break up multiple oblivious FSYNC robots 
(with the same local coordinate system)
on a single position as long as they execute the same algorithm. 
Our algorithm is designed to avoid any multiplicity.
However, we need to take into account any algorithm that may lead
$R$ to a configuration with multiplicity 
when proving the impossibility result by reduction to the absurd.}
The robots have no access to $Z_0$.
Instead, each robot $r_i$ has a local $x$-$y$-$z$ coordinate system $Z_i$,
where the origin is always its current location,
while the direction of each positive axis and the magnitude of 
the unit distance are arbitrary but never change.
We assume that $Z_0$ and all $Z_i$ are right-handed.
Thus $Z_i$ is either a uniform scaling, transformation,
rotation, or their combination of $Z_0$. 
By $Z_i(p)$ we denote the coordinates of a point $p$ in $Z_i$.

Each robot repeat a {\em Look-Compute-Move cycle}. 
We investigate fully synchronous (FSYNC) robots in this paper.
They all start the $t$-th Look-Compute-Move cycle simultaneously 
and synchronously execute each of its Look, Compute, and Move phases.
We specifically assume without loss of generality 
that the $(t+1)$-th Look-Compute-Move cycle starts 
at time $t$ and finishes before time $t+1$.
At time $t$,
each robot $r_i$ simultaneously looks 
and obtains a set\footnote{
Since $Z_i$ changes whenever $r_i$ moves,
notation $Z_i(t)$ is more rigid,
but we omit parameter $t$ to simplify its notation.}
\begin{equation*}
Z_i(P(t)) = \{ Z_i(p_1(t)), Z_i(p_2(t)), \ldots , Z_i(p_n(t)) \}.
\end{equation*}
We call $Z_i(P(t))$ the {\em local observation} of $r_i$ at $t$.
Next, $r_i$ computes its next position using an algorithm $\psi$,
which is common to all robots. 
If $\psi$ uses only $Z_i(P(t))$, we say that $r_i$ is {\em oblivious}. 
Thus $\psi$ is a total function from ${\cal P}_n^3$ to $\Real^3$,
where ${\cal P}_n^3 = (\Real^3)^n$ is the set of all
configurations.\footnote{A configuration generally contains
multiplicities and ${\cal P}_n^3$ contains such configurations.
However we do not assume multiplicity detection ability of robots.
Thus the input to an algorithm is a set of points. As we will show
later, the proposed pattern formation algorithm makes no multiplicity
during any execution thus the input to the algorithm is always a set of 
$n$ points.} 
Otherwise, we say $r_i$ is {\em non-oblivious}, i.e., 
$r_i$ can use past local observations and past outputs of $\psi$. 
We say that a non-oblivious robot is equipped with local memory. 
Finally, $r_i$ moves to $\psi(Z_i(P(t)))$ in $Z_i$ before time $t+1$.
Thus we assume rigid movement. 

An infinite sequence of configurations
${\cal E}: P(0), P(1), \ldots$ is called an {\em execution} 
from an {\em initial configuration} $P(0)$.
Observe that the execution $\cal E$ is uniquely determined,
once initial configuration $P(0)$, 
local coordinate systems at time $0$,
local memory contents (for non-oblivious robots),  
and algorithm $\psi$ are fixed.

We say that an algorithm $\psi$ {\em forms a plane}
from an initial configuration $P(0)$,
if, regardless of the choice of initial local coordinate systems of
robots and their initial memory contents (if any), 
for any execution $P(0), P(1), \ldots$, 
there exists finite $t \geq 0$ such that $P(t)$ satisfies the following
three conditions:
\begin{description}
\item[(a)]
$P(t)$ is contained in a plane, 
\item[(b)]
all robots occupy distinct positions in $P(t)$, and 
\item[(c)] 
for any $t' \geq t$, $P(t') = P(t)$. 
\end{description}
Because of $(b)$, gathering the robots to one point (i.e., point
formation) is not a solution for the
plane formation problem.

\subsection{Rotation groups in 3D-space} 

In 2D-space,
the symmetricity $\rho(P)$ of a set of points $P$
is defined by the order of its cyclic group, 
where the rotation center $o$ is the center of the smallest 
enclosing circle of $P$, if $o \not\in P$. 
Otherwise, $\rho(P) = 1$.
Then $P$ is decomposed into $n/\rho(P)$ regular 
$\rho(P)$-gons with $o$ being the common center, 
where $n = |P|$~\cite{SY99}.
(See Figure~\ref{fig:2dim-sym}.) 
Since the robots in the same regular $\rho(P)$-gon may have the
same local observation, 
no matter which deterministic algorithm they obey, 
we cannot exclude the possibility that they continue to keep 
a regular $\rho(P)$-gon during the execution.
This is the main reason that a target pattern $F$ is not formable
from an initial configuration $P$,
if $\rho(P)$ does not divide $\rho(F)$~\cite{FYOKY15,YK96,YS10}.

\begin{figure}[t]
\centering 
\includegraphics[width=3cm]{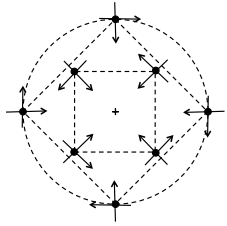}
\caption{A symmetric initial configuration in 2D-space,
whose symmetricity is $4$.
Eight robots and their local coordinate systems are 
symmetric with respect to the center of their smallest enclosing circle. 
There are two groups consisting of $4$ symmetric robots  
and the robots in each group cannot break their symmetry. }
\label{fig:2dim-sym}
\end{figure}

In 3D-space,
we consider the smallest enclosing ball and 
the convex hull of the positions of robots, i.e., 
robots are vertices of a convex polyhedron. 
Typical symmetric polyhedra are regular polyhedra (Platonic solids) 
and semi-regular polyhedra (Archimedean solids). 
A {\em uniform polyhedron} is a polyhedron consisting of 
regular polygons and all its vertices are congruent. 
Any uniform polyhedron is {\em vertex transitive}, i.e., 
for any pair of vertices of the polyhedron, 
there exists a symmetry operation that moves one vertex 
to the other with keeping the polyhedron as a whole. 
Intuitively, it makes sense to expect that all vertices (robots) 
in a uniform polyhedron may have identical local observations  
and might not break the symmetry in the worst case. 
The family of uniform polyhedra consists of $5$ regular polyhedra  
(the regular tetrahedron, the cube, the regular octahedron, 
the regular dodecahedron, and the regular icosahedron), 
13 semi-regular polyhedra, and other non-convex 57 polyhedra.\footnote{
We do not consider Miller's solid as semi-regular polyhedra though
it satisfies the definition because we focus on rotation groups.
Actually the rotation group of Miller's solid is not a polyhedral group 
but $D_4$.} 
We do not care for non-convex uniform polyhedra. 
Contrary to the intuition above,
we will show that when robots form a regular tetrahedron, a regular
octahedron, a cube, a regular dodecahedron, or an icosidodecahedron, 
they can break their symmetry and form a plane. 

In general, symmetry operations on a polyhedron consists of 
rotations around an axis, reflections for a mirror plane 
({\em bilateral symmetry}), 
reflections for a point ({\em central inversion}), 
and {\em rotation-reflections}~\cite{C97}. 
But as briefly argued in Section~\ref{sec:intro}.
since all local coordinate systems are right-handed,
it is sufficient to consider only direct congruent transformations 
and those keeping the center. They are rotations 
around some axes that contains the center. 
We thus concentrate on rotation groups with finite order.

In 3D-space, there are five kinds of rotation groups of finite order
each of which is defined by the set of rotation axes and their
arrangement~\cite{C97}.
We can recognize each of them as the group formed by
rotation operations on some polyhedron. 
Consider a regular pyramid that has a regular $k$-gon as its base 
(Figure~\ref{fig:axis-cyc}). 
The rotation operations for this regular pyramid is rotation 
by $2\pi i/k$ for $1 \leq i \leq k$ 
around an axis containing the apex and the center of the base. 
We call such an axis {\em $k$-fold axis}. 
Let $a^i$ be the rotation by $2\pi i/k$ 
around this $k$-fold axis with $a^k = e$ where $e$ is the identity element. 
Then, $a^1, a^2, \ldots, a^k$ form a group, which is called 
the {\em cyclic group}, denoted by $C_k$. 

A regular prism (except a cube) that has a regular $\ell$-gon as its base 
has two types of rotation axes, 
one is the $\ell$-fold axis containing the centers of its base and top, 
and the others are $2$-fold axes that exchange the base and the 
top (Figure~\ref{fig:axis-dih}). 
We call this single $\ell$-fold axis {\em principal axis} and 
the remaining $\ell$ $2$-fold axes {\em secondary axes}. 
These rotation operations on a regular prism form a group, 
which is called the {\em dihedral group}, denoted by $D_{\ell}$.
The order of $D_{\ell}$ is $2\ell$. 
When $\ell=2$, we can define $D_2$ in the same way, 
but in the group theory we do not distinguish the principal axis
from the secondary one.
Indeed, $D_2$ is isomorphic to the Klein four-group, denoted by $K_4$,
which is an abelian group and is a normal subgroup of the 
alternating group of degree $4$, denoted by $A_4$. 
Later we will show that we can recognize the principal axis of $D_2$ from
the others because we consider rotations on a set of points. 

The rotation axes of a regular polyhedron are classified into three types: 
The axes that contain the centers of opposite faces (type $a$), 
the axes that contain opposite vertices (type $b$), and 
the axes that contain the midpoints of opposite edges (type $c$). 
For each regular polyhedron, 
the rotation operations also form a group 
and the following three groups are called the {\em polyhedral groups}. 

The regular tetrahedron has four $3$-fold type $a$ (and $b$) axes and 
three $2$-fold type $c$ axes (Figure~\ref{fig:axis-tetra}). 
This rotation group is called the {\em tetrahedral group}
denoted by $T$. 
The tetrahedral group is isomorphic to $A_4$ and its order is $12$. 

The regular octahedron has four $3$-fold type $a$ axes, 
three $4$-fold type $b$ axes, 
and six $2$-fold type $c$ axes (Figure~\ref{fig:axis-octa}). 
This rotation group is called the {\em octahedral group} 
denoted by $O$. 
The octahedral group is isomorphic to the symmetric group of degree $4$ 
denoted by $S_4$ and its order is $24$. \footnote{Consider a cube to which we can 
perform the rotation of $O$.  
Each rotation permutes the diagonal lines of the cube. }

The regular icosahedron has ten $3$-fold type $a$ axes, 
six $5$-fold type $b$ axes, 
and fifteen $2$-fold type $c$ axes (Figure~\ref{fig:axis-icosa}). 
This rotation group is called the {\em icosahedral group}, 
denoted by $I$. 
The icosahedral group is isomorphic to the alternating group of 
degree $5$ denoted by $A_5$ and its order is $60$.

For each regular polyhedron, consider the center of each face. 
These centers also form a regular polyhedron, 
which is called the {\em dual} of the original regular polyhedron. 
Any dual polyhedron has the same rotation group as its original polyhedron. 
The regular tetrahedron is self-dual, 
the cube and the regular octahedron are dual each other, 
and so are the regular dodecahedron and the regular icosahedron. 
Hence we have three polyhedral groups. 

\begin{figure}[t]
\centering 
\subfigure[]{\includegraphics[width=2cm]{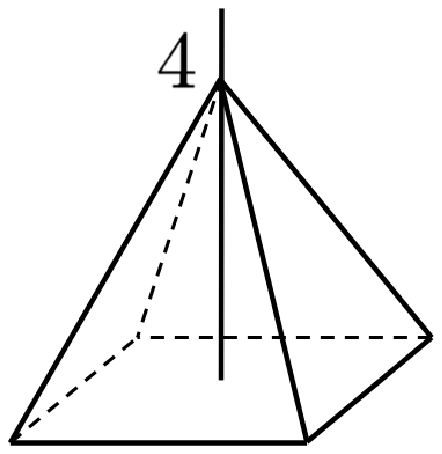}\label{fig:axis-cyc}}
\hspace{2mm}
\subfigure[]{\includegraphics[width=2cm]{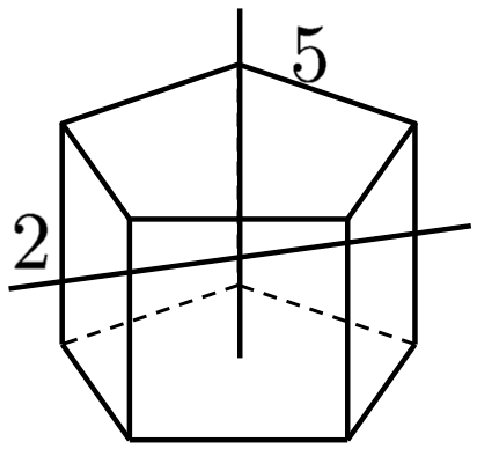}\label{fig:axis-dih}}
\hspace{2mm}
\subfigure[]{\includegraphics[width=2cm]{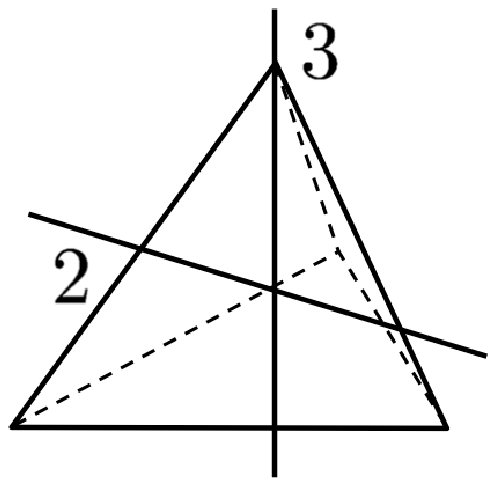}\label{fig:axis-tetra}}
\hspace{2mm}
\subfigure[]{\includegraphics[width=2cm]{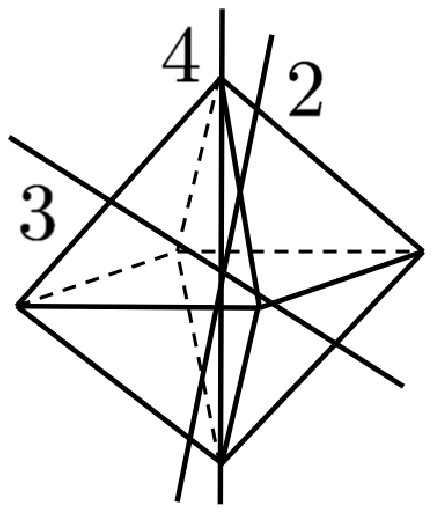}\label{fig:axis-octa}}
\hspace{2mm}
\subfigure[]{\includegraphics[width=2cm]{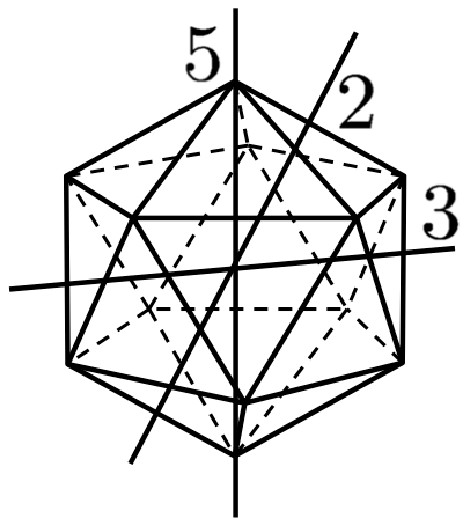}\label{fig:axis-icosa}}
\label{fig:csym}
\caption{Rotation groups: 
(a) the cyclic group $C_4$, (b) the dihedral group $D_5$, 
(c) the tetrahedral group $T$, (d) the octahedral group $O$, and (e) 
the icosahedral group $I$. Figures show only one axis for each type. }
\end{figure}

Table~\ref{table:elements} shows for each of the four rotation groups, 
$T$, $O$, and $I$, 
the number of elements (excluding the identity element) around 
its $k$-fold axes ($k \in \{2,3,4,5\}$).
\begin{table}
\begin{center}
 \caption{Polyhedral groups.
 The number of elements around $k$-fold axes
(excluding the identity element) and their orders. }
\label{table:elements} 
\begin{tabular}{c|c|c|c|c|c}
\hline 
Polyhedral group & $2$-fold axes & $3$-fold axes & $4$-fold axes &
$5$-fold axes & Order \\ 
\hline 
$T$ & 3 & 8 & - & - & 12 \\ 
$O$ & 6 & 8 & 9 & - & 24 \\ 
$I$ & 15 & 20 & - & 24 & 60\\ 
\hline
\end{tabular}
\end{center}
\end{table}

Let ${\mathbb S} = \{C_{k}, D_{\ell}, T, O, I \ | k = 1,2,
\ldots, and \ \ell = 2,3,\ldots\}$ 
be the set of rotation groups,
where $C_1$ is the rotation group with order 1;
its unique element is the identity element (i.e., $1$-fold rotation).
When $G'$ is a subgroup of $G$ ($G, G' \in {\mathbb S}$), we denote it 
by $G' \preceq G$. If $G'$ is a proper subgroup of $G$ (i.e., $G \neq G'$), 
we denote it by $G' \prec G$. For example, we have $D_2 \prec T$, 
$T \prec O, I$, but $O \not\preceq I$. 
If $G \in {\mathbb S}$ has a $k$-fold axis, 
$C_{k'} \preceq G$ if $k'$ divides $k$. 

We now define the rotation group of a set of points in 3D-space. 

\begin{definition}
\label{def:gamma}
The rotation group $\gamma(P)$ of a set of points $P \in {\cal P}_n^3$ 
is the group that acts 
on $P$ and none of its proper supergroup in ${\mathbb S}$ acts on $P$. 
\end{definition}

Clearly, $\gamma(P)$ for any given set of points $P$ is uniquely
determined. 
For example, when $P$ is the set of vertices of a cube, 
$\gamma(P)$ is the octahedral group $O$. 
The major difference between the symmetricity in 2D-space and
the rotation group in 3D-space is that
even when the points of $P$ are on one plane, 
its rotation group is chosen from the dihedral groups and
cyclic groups. In our context, symmetricity in 2D-space assumes
the ``top'' direction against the plane
where the points reside~\cite{FYOKY15,SY99,YS10}, while in
3D-space there is no agreement on the ``top'' direction. 

For any $P \in {\cal P}_n^3$, 
by $B(P)$ and $b(P)$,
we denote the smallest enclosing ball of $P$ and its center,
respectively.
From the definition, all rotation axis of $\gamma(P)$ contains $b(P)$
and $b(P)$ is the intersection of all 
rotation axes of $\gamma(P)$ unless $\gamma(P)=C_1$. 
A point on the sphere of a ball is said to be {\em on} the ball,
and we assume that the {\em interior} or the {\em exterior}
of a ball does not include its sphere. 
When all points are on $B(P)$, 
we say that the set of points is {\em spherical}. 
For a ball $B$, we denote the radius of the ball by $rad(B)$
in the coordinate system to observe $B$. 

We say that a set of points $P$ is {\em transitive}
regarding a rotation group $G$ if it is an orbit of
$G$ through some seed point $s$, i.e.,
$P = Orb(s) = \{ g*s : g \in G \}$ for some $s \in P$.\footnote{
For a transitive set of points $P$,
any $s \in P$ can be a seed point. } 
The vertex-transitivity of uniform polyhedra 
corresponds to transitivity regarding a 3D rotation group. 
In the following,
we use ``vertex-transitivity'' for a polyhedron while 
we use ``transitivity'' for a set of points. 
Note that a transitive set of points is always spherical. 

Given a set of points $P$, 
$\gamma(P)$ determines the arrangement of its rotation axes.  
We thus use $\gamma(P)$ and the arrangement 
of its rotation axes in $P$ interchangeably.
For two groups $G, H \in {\mathbb S}$, 
an {\em embedding} of $G$ to $H$
is an embedding of each rotation
axis of $G$ to one of the rotation axes of $H$ 
so that any $k$-fold axis of $G$ overlaps a $k'$-fold axis of $H$
with keeping the arrangement of $G$ where $k$ divides $k'$. 
For example, we can embed $T$ to $O$ so that each $3$-fold axis of $T$ 
overlaps a $3$-fold axis of $O$, and each $2$-fold 
axis of $T$ overlaps a $4$-fold axis of $O$. 
Note that there may be many embeddings of $G$ to $H$. 
There are three embeddings of $C_4$ to $O$ depending on the choice of 
the $4$-fold axis. 
Observe that we can embed $G$ to $H$ if and only if $G \preceq H$. 
For example, $O$ cannot be embedded to $I$,
since $O$ is not a subgroup of $I$.

In the group theory, we do not distinguish the principal axis 
of $D_2$ from the other two $2$-fold axes. 
Actually, since we consider the rotations on a set of points in 3D-space, 
we can recognize the principal axis of $D_2$. 
Consider a sphenoid consisting of $4$ congruent non-regular triangles 
(Figure~\ref{fig:sphenoid}).
A rotation axes of such a sphenoid contains the midpoints of 
opposite edges and there are three $2$-fold axis 
perpendicular to each other. 
Hence the rotation group of the vertices of such a sphenoid is $D_2$. 
However we can recognize, for example, the vertical $2$-fold axis 
from the others by their lengths (between the midpoints connecting).
The vertex-transitive polyhedra on which only $D_2$ can act are 
rectangles and the family of such sphenoids and we can always
recognize the principal axis. 
Other related polyhedra are lines, squares, and regular tetrahedra, but 
$D_{\infty}$ acts on a line, 
$D_4$ acts on a square, and $T$ acts on a regular tetrahedron. 
Hence their rotation groups are proper supergroup of $D_2$. 
We can show the following property regarding the principal axis of
$D_2$. See Appendix~\ref{app:rotation-groups} for the proof. 
\begin{property}
\label{property:d2-principal}
Let $P \in {\cal P}_n^3$ be a set of points. 
If $D_2$ acts on $P$ and we cannot distinguish the principal axis of 
(an arbitrary embedding of) $D_2$, then $\gamma(P) \succ D_2$. 
\end{property}

Later we will show that the robots can form a plane 
if they can recognize a single rotation axis or a principal axis. 
Based on this, 
we say that the cyclic groups and the dihedral groups 
are {\em two-dimensional} (2D), 
while the polyhedral groups are {\em three-dimensional} (3D)
since polyhedral groups do not act on a set of points on a plane.

\begin{figure}[t]
\centering 
\includegraphics[width=2.3cm]{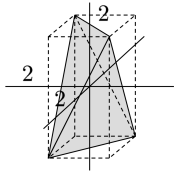}
\caption{A sphenoid consisting of $4$ congruent isosceles triangles.
Its rotation group is $D_2$.
Since the vertices are not placed equidistant positions from the three axes,
we can distinguish one axis as the principal axis from the others.}
\label{fig:sphenoid}
\end{figure}

\subsection{Basic idea} 
\label{subsec:idea}

We first show an stimulating example that shows our idea of the
symmetry breaking algorithm and the impossibility of the plane formation
problem. 
From Theorem~\ref{theorem:main}, oblivious FSYNC robots can
form a plane from an initial configuration where four robots
form a regular tetrahedron 
(i.e., they occupy the vertices of a regular tetrahedron).  
In such an initial configuration, 
their local observation may be identical because 
of the vertex-transitivity of the regular tetrahedron. 
If each robot proposes one plane,
these four planes may be symmetric regarding $T$,
and because $T$ is three dimensional,
these four planes never become identical. 
They must break their rotation group $T$ to form a plane. 
It is an essential challenge of this paper that 
the robots solve this symmetry breaking problem
by a deterministic algorithm. 

We introduce a simple ``go-to-midpoint'' algorithm for the robots to
break the regular tetrahedron. 
This algorithm makes each robot select an arbitrary edge of
the regular tetrahedron which is incident to the vertex it resides
and goes along the edge, 
but stops it $\epsilon$ before the midpoint, 
where $\epsilon$ is $1/100$ of the length of the edge. 
The selection is somehow done in a deterministic way. 
We briefly show that this go-to-midpoint algorithm successfully
breaks the symmetry of the regular tetrahedron 
and the robots can form a plane. 
We could have a better understanding of the execution 
by illustrating the positions of robots
in an embedding of the regular tetrahedron
to a cube. Figure~\ref{fig:tcube1} shows an initial configuration $P$. 
Since at least two edges are selected by the four robots, 
we have the following three cases. 

\noindent{\bf Case A:~} Two edges are selected.
See Figure~\ref{fig:tcube2}. 
The two edges are opposite edges  
and the robots form skew lines of length $2\epsilon$,
since otherwise, two edges cannot cover the four vertices. 
The four robots can agree on the plane perpendicular to 
the line segment containing the midpoints of the skew lines and 
containing its midpoint.

\noindent{\bf Case B:~} Three edges are selected.
See Figures~\ref{fig:tcube3},~\ref{fig:tcube6} and~\ref{fig:tcube7}. 
There is only one pair of robots with distance $2\epsilon$  
and the four robots can agree on the plane 
formed by the midpoint of the two robots with distance $2\epsilon$ 
and the positions of the remaining two robots. 

\noindent{\bf Case C:~} Four edges are selected.
If three of the selected edges form a regular triangle 
(Figure~\ref{fig:tcube4}), 
the distance from the remaining robot to two of the three robots 
is larger than the edge of the regular triangle. 
Hence, the four robots can agree on the plane containing the regular triangle. 
Otherwise, the selected edges form a cycle on the 
original regular tetrahedron (Figure~\ref{fig:tcube5}). 
In this case, the four robots form a set of skew lines 
and can agree on the plane like (A).

In each case, 
the four robots can land on the foot of the perpendicular line 
to the agreed plane starting from its current position. 
They succeed in plane formation since they are FSYNC.

\begin{figure}[t]
\centering 
\subfigure[]{\includegraphics[width=2cm]{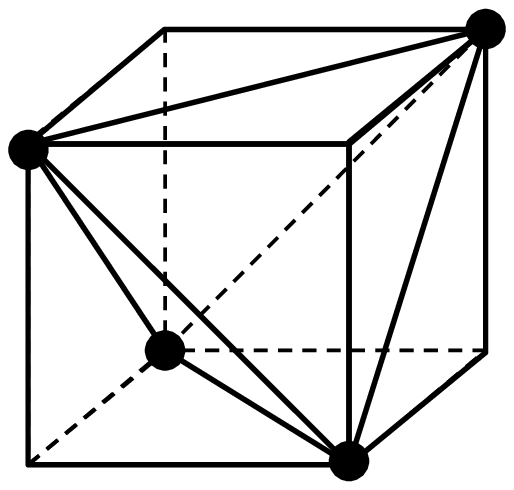}\label{fig:tcube1}}
\hspace{1mm}
\subfigure[]{\includegraphics[width=2cm]{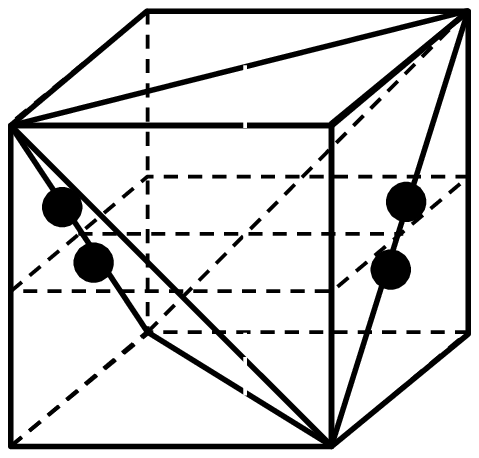}\label{fig:tcube2}}
\hspace{1mm}
\subfigure[]{\includegraphics[width=2cm]{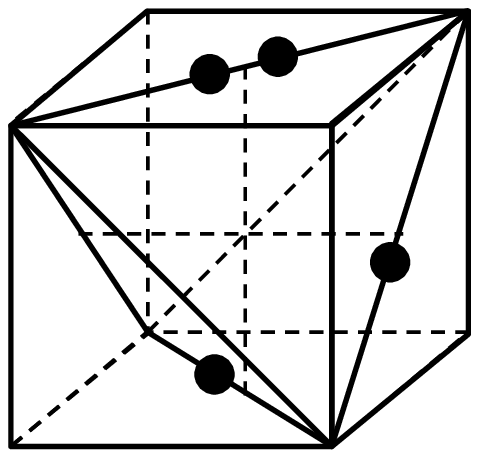}\label{fig:tcube3}}
\hspace{1mm}
\subfigure[]{\includegraphics[width=2cm]{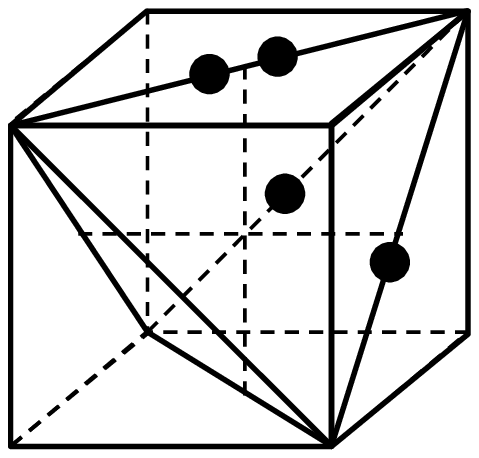}\label{fig:tcube6}}
\hspace{1mm}
\subfigure[]{\includegraphics[width=2cm]{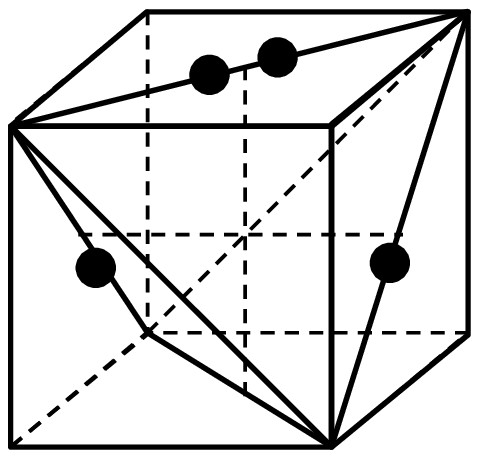}\label{fig:tcube7}}
\hspace{1mm}
\subfigure[]{\includegraphics[width=2cm]{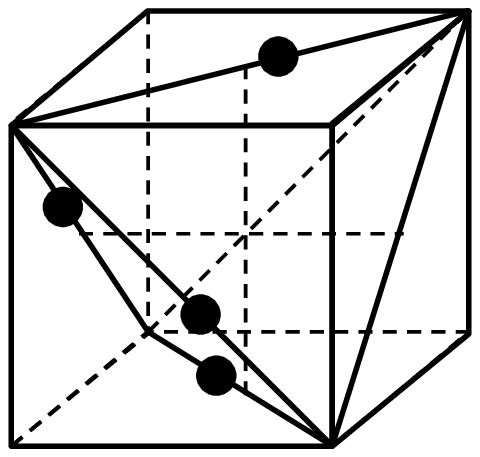}\label{fig:tcube4}}
\hspace{1mm}
\subfigure[]{\includegraphics[width=2cm]{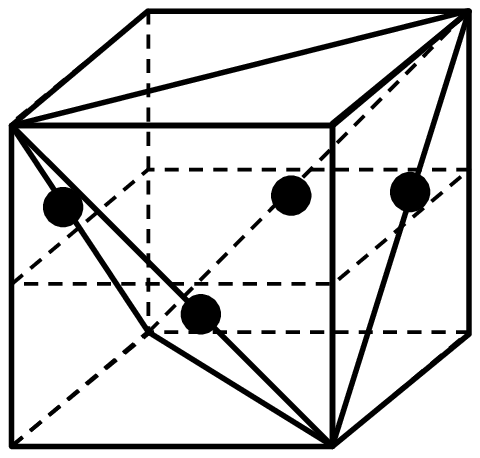}\label{fig:tcube5}}
\label{fig:tcube}
 \caption{Execution of the go-to-midpoint algorithm from an
 initial configuration where the four robots form a regular
 tetrahedron. }
\end{figure}

One might expect that the go-to-midpoint algorithm could be used to
break symmetry of any other regular 
polyhedra because of the Euler's equality: 
For a polyhedron with $V$ vertices, $E$ edges, and $F$ faces, 
we have $V - E + F = 2$. 
If the go-to-midpoint algorithm is executed in such a configuration,
since $F > 2$ and hence $V \not= E$,
there exists at least one edge which is selected by two robots
or is not selected by any robot. 
However, as a matter of fact, the go-to-midpoint algorithm
does not work, for example, when
the robots form a regular icosahedron. 
Figure~\ref{fig:sym-eicosa-dges} shows an example of a
configuration $P'$ obtained 
by the go-to-midpoint algorithm from an initial configuration
where the robots form a regular icosahedron. 
The robots cannot agree on a plane in $P'$
because a 3D rotation group $T$ cats on $P'$ as shown in
Figure~\ref{fig:d2min} and the twelve planes
that the robots propose are not identical. 
Later we will show that the robots following any algorithm cannot 
agree on a plane forever from this configuration 
irrespective
of obliviousness. 

The ``go-to-midpoint'' algorithm shows that
the robots can reduce their rotation group by deterministic movement,
while in some cases this reduction stops at some
subgroup of the rotation group of the initial configuration. 
Our plane formation algorithm proposed in Subsection~\ref{subsec:suff}
translates an initial configuration whose rotation group is
a 3D rotation group to another configuration 
whose rotation group is a 2D rotation group. 
Then robots can agree on a plane that is perpendicular to the
single (or principal) axis and contains the center of
their smallest enclosing ball. 
Then they land on the plane. 

To show a necessary condition, we characterize the initial
configurations from which the robots cannot always form a plane
in terms of the rotation group and the number of robots. 
 
\begin{figure}[t]
\centering 
\subfigure[]{\includegraphics[width=2.5cm]{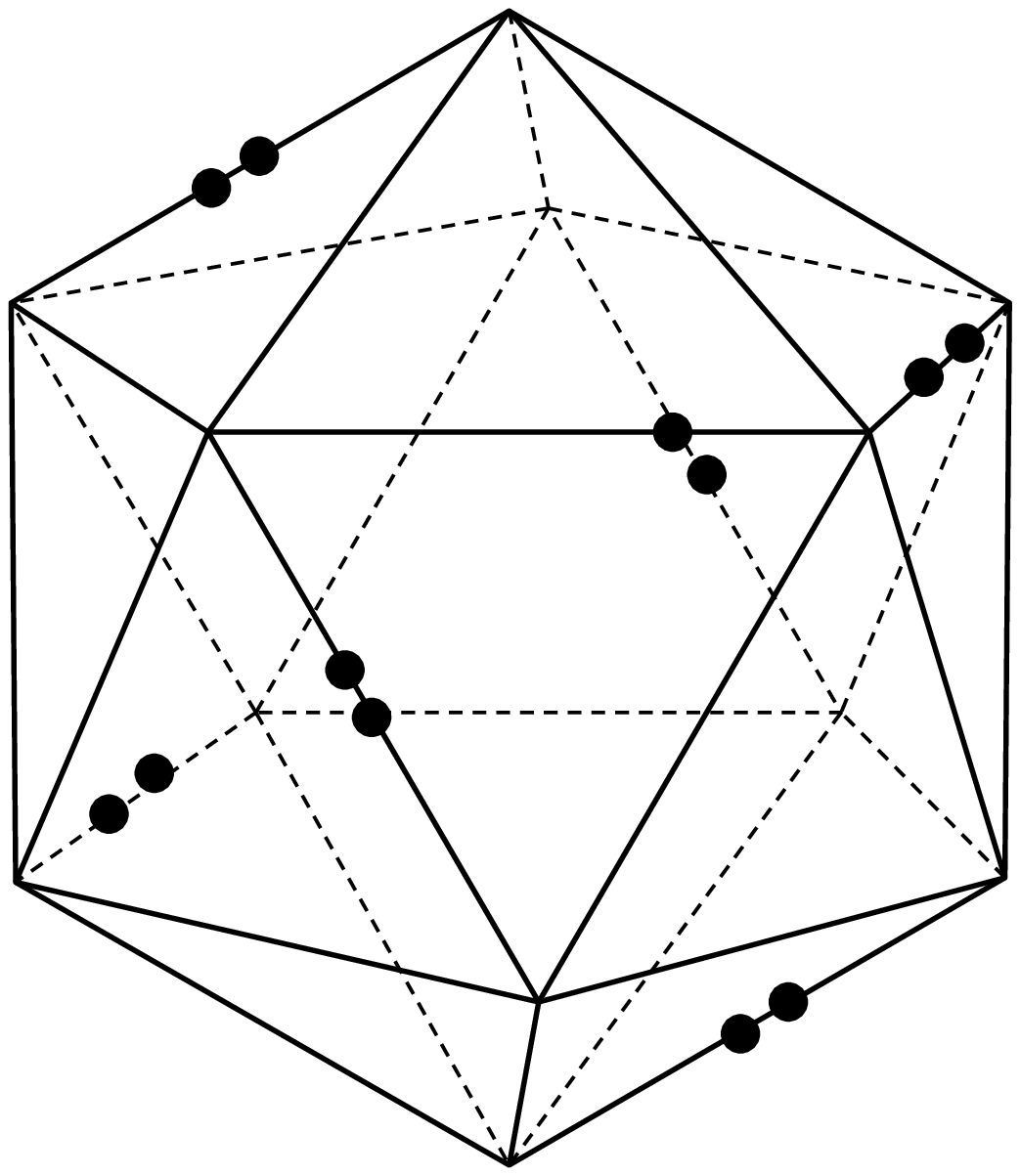}\label{fig:sym-eicosa-dges}}
\hspace{5mm}
\subfigure[]{\includegraphics[width=2.5cm]{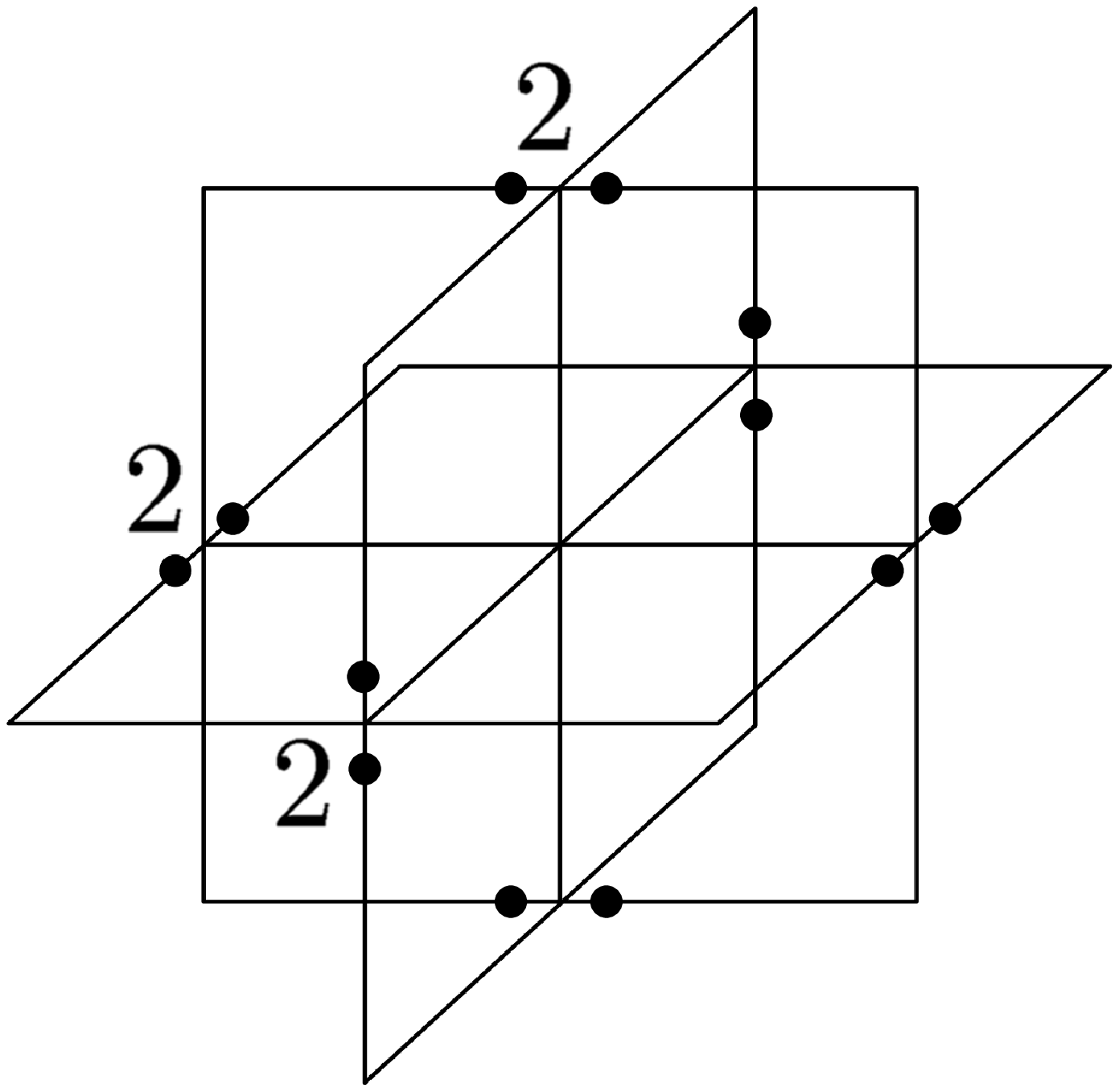}\label{fig:d2min}}
 \caption{An example of a resulting configuration of the
 go-to-midpoint algorithm from an initial configuration
 where the robots form a regular icosahedron. 
 (a) A resulting configuration. 
 (b) The rotation group of the resulting configuration is $T$ and
 its $2$-fold axes are illustrated. (We omit the four $3$-fold axes.) }
\label{fig:sb-icosa} 
\end{figure}

\section{Decomposition of the robots}
\label{sec:3Dsym}

In this section, we will show that the robots can agree on some
global properties by using the rotation group of their positions. 
In a configuration $P$, each robot $r_i$ can obviously calculate 
$\gamma(P)$ from $Z_i(P)$ 
by checking all rotation axes that keep $P$ unchanged.
Then the group action of $\gamma(P)$ 
decomposes $P$ into a family of transitive sets of points
and the robots can agree on the ordering of these elements. 
As we will show in Section~\ref{subsec:nec}, 
each of these elements are a set of indivisible robots
in the worst case that have the same local observation,
move symmetrically, and keep $\gamma(P)$ forever. 
On the other hand, this ordering allows us to control the
robots in some order and plays an important role
when we design a plane formation algorithm for solvable
initial configurations. 
We start with the following theorem. 

\begin{theorem}
\label{theorem:decomposition}
Let $P \in {\mathcal P}_n^3$ be a configuration of robots represented as 
a set of points. 
Then $P$ is decomposed into disjoint sets $\{P_1, P_2, \ldots , P_m\}$ 
so that each $P_i$ is transitive regarding $\gamma(P)$. 
Furthermore, the robots can agree on a total ordering among the 
elements. 
\end{theorem}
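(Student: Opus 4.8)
The plan is to prove the two assertions of the theorem separately, since they are of quite different natures. The decomposition is the easy half. By Definition~\ref{def:gamma}, $\gamma(P)$ is the maximal (hence the full) rotation group in ${\mathbb S}$ acting on $P$, so it acts on $P$ as a finite group of isometries. A finite group action partitions the underlying set into orbits, so I would simply take $\{P_1,\dots,P_m\}$ to be the orbits $Orb(s_1),\dots,Orb(s_m)$ for a choice of representatives; each $P_i$ is then transitive regarding $\gamma(P)$ immediately by the definition of transitivity stated just before the theorem. I would also record one geometric fact for later use: every $g\in\gamma(P)$ fixes $b(P)$ (the center of the smallest enclosing ball is preserved by any symmetry) and is an isometry, so all points of one orbit lie at a common distance from $b(P)$. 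Thus each $P_i$ sits on a sphere centered at $b(P)$, and I define its \emph{radius} $\rho_i$ to be that common distance normalized by $rad(B(P))$.

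For the agreement part, I would first observe that all robots compute the \emph{same} partition. A robot $r_i$ observes $Z_i(P)$, which is the image of $P$ under an orientation-preserving similarity (a rotation composed with a uniform scaling and a translation, as described in Section~\ref{SSmodel}), and from it recovers $\gamma(P)$ by testing rotation axes, as noted at the start of Section~\ref{sec:3Dsym}. Since the orbit partition is defined purely by the group action, it is invariant under this coordinate change, so every robot obtains the same family of orbits, each expressed in its own frame. The real content is therefore to exhibit a total order on $\{P_1,\dots,P_m\}$ that is invariant under orientation-preserving similarities: any such order is then computed identically by all robots, which is exactly what ``the robots can agree'' means. Placing the origin at $b(P)$ and the unit at $rad(B(P))$ removes the translation and scaling, so the only residual ambiguity between two robots' views is a rotation about $b(P)$; it thus suffices to build an $SO(3)$-invariant (rotation-invariant but chirality-aware) total order, which is legitimate precisely because all local frames are right-handed.

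I would define the order lexicographically from a list of invariants attached to each orbit: first the radius $\rho_i$, then the cardinality $|P_i|$, and finally a \emph{complete oriented signature} $\sigma(P_i)$ encoding the position of $P_i$ relative to all of $P$ up to proper congruence. Concretely, pick any $p\in P_i$ (all choices give the same value because $\gamma(P)$ acts transitively on $P_i$ and preserves distances and handedness) and let $\sigma(P_i)$ be a canonically sorted encoding of the normalized distances $|p-q|/rad(B(P))$ for $q\in P$, augmented with the signed volumes $\det(q_1-p,\, q_2-p,\, q_3-p)$ of triples $q_1,q_2,q_3\in P$, which record chirality. Conceptually $\sigma(P_i)$ is nothing but the proper-congruence class of the pointed configuration $(P,p)$, encoded as a canonical real sequence so that the finitely many signatures can be compared lexicographically. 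Because the frames are all right-handed, these signed quantities are meaningful and are computed identically by every robot, so the resulting order is invariant as required.

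The main obstacle is the last point: showing the signature \emph{separates} distinct orbits, so that the lexicographic key is injective and the induced order is strict. I would prove this as a separation lemma: if $\sigma(P_i)=\sigma(P_j)$ then there is a distance- and orientation-preserving bijection of $P$ onto itself carrying a point of $P_i$ to a point of $P_j$; such a map extends to a proper isometry of $\Real^3$ fixing $P$ setwise, hence is an element of $\gamma(P)$, and since every element of $\gamma(P)$ maps each orbit to itself we conclude $P_i=P_j$. This is exactly where $\gamma(P)$ being the full rotation symmetry group matters, and where chirality is indispensable: without the signed volumes one could only deduce a possibly improper congruence, leaving two orbits related by a reflection of $P$ but by no rotation permanently tied, and it is the right-handedness of all local coordinate systems that breaks that tie. (In the degenerate coplanar case the signed volumes vanish and $\gamma(P)$ is a $2$D group; there the argument specializes to the planar symmetricity ordering of concentric regular polygons, which is the easy, already-solvable regime.) Granting the lemma, the lexicographic key is injective on orbits, the order it induces is total and invariant under orientation-preserving similarities, and therefore all robots agree on it, which completes the proof.
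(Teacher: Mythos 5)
Your proposal is correct and shares the paper's skeleton: the decomposition half is exactly the paper's Lemma~\ref{lemma:subsets} (the orbit space of the $\gamma(P)$-action), and your separation lemma is the paper's key step --- equal invariants for points in different orbits yield a distance- and orientation-preserving self-map of $P$ that extends to a proper isometry fixing $b(P)$, hence (by Euler's theorem and the maximality in Definition~\ref{def:gamma}) a rotation in $\gamma(P)$, forcing the two orbits to coincide. The difference lies in how the invariant is realized. The paper builds an explicitly computable \emph{local view}: spherical coordinates (altitude, longitude, latitude) with the robot itself as north pole and a meridian robot chosen canonically so as to minimize the view lexicographically, chirality entering through the counter-clockwise direction that right-handed frames agree on; this concreteness is then reused by the later algorithms (e.g., the meridian robot in \texttt{SelectPlane} when $\gamma(P)=C_1$). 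You instead invoke an abstract complete invariant --- a canonical encoding of the proper-congruence class of the pointed configuration $(P,p)$ --- which is conceptually cleaner and makes the logical requirements (rotation-invariance, chirality-awareness, completeness) explicit, but to be airtight it needs two repairs: (i) the signed volumes must be normalized by $rad(B(P))^3$ so the signature is similarity-invariant, and (ii) the distances and signed volumes must be encoded \emph{jointly} as a canonical form of $(P,p)$ (say, lexicographically minimal over orderings of $P$), not as independently sorted multisets --- independently sorted lists are not a complete invariant, and completeness is precisely what your separation lemma consumes. Finally, like the paper, which assumes $P$ is not coplanar and $b(P)\notin P$ when defining local views (justified since those cases are handled separately), your construction must set the degenerate cases aside, which your closing parenthetical does only informally.
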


Such decomposition of $P$ is unique as a matter of fact 
and we call this decomposition $\{ P_1, P_2, \ldots , P_m \}$
the $\gamma(P)$-{\em decomposition} of $P$.
Let us start with the first part of Theorem~\ref{theorem:decomposition}.

\begin{lemma}
\label{lemma:subsets}
Let $P \in {\mathcal P}_n^3$ be a configuration of robots represented as 
a set of points. 
Then $P$ is decomposed into disjoint sets $\{P_1, P_2, \ldots , P_m\}$ 
so that each $P_i$ is transitive regarding $\gamma(P)$.
\end{lemma}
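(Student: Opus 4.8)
The plan is to recognize this as the standard orbit decomposition of a finite group action, specialized to the rotation group $\gamma(P)$ acting on the finite point set $P$. The one prerequisite that genuinely needs attention is the very first step: I must check that $\gamma(P)$ acts on $P$ itself, not merely on the ambient space $\Real^3$. This is exactly what Definition~\ref{def:gamma} supplies --- $\gamma(P)$ is the group that \emph{acts on} $P$, which means that every $g \in \gamma(P)$ maps $P$ onto itself as a set, i.e.\ each rotation permutes the points of $P$. Hence for every $p \in P$ and every $g \in \gamma(P)$ we have $g * p \in P$, so $P$ is invariant under the group action and the orbits of points of $P$ stay inside $P$.

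With invariance in hand, I would introduce for each $p \in P$ its orbit $Orb(p) = \{ g * p : g \in \gamma(P) \}$, noting $Orb(p) \subseteq P$, and define the relation $p \sim q$ to hold when there exists $g \in \gamma(P)$ with $g * p = q$. I would then verify that $\sim$ is an equivalence relation: reflexivity from the identity element $e$, symmetry from the existence of inverses $g^{-1} \in \gamma(P)$, and transitivity from closure of $\gamma(P)$ under composition. The equivalence classes of $\sim$ are precisely the orbits $Orb(p)$, and equivalence classes always partition the underlying set, so $P$ splits into pairwise disjoint nonempty blocks whose union is $P$. Labeling these blocks $P_1, P_2, \ldots, P_m$ gives the claimed decomposition, with $m$ finite because $P$ is finite.

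Finally I would confirm that each block matches the paper's notion of transitivity. For any representative $s_i \in P_i$ we have $P_i = Orb(s_i)$, since any point of an orbit generates the whole orbit (if $q \in Orb(s_i)$ then $Orb(q) = Orb(s_i)$); this is exactly the statement that $P_i$ is \emph{transitive} regarding $\gamma(P)$, and it also justifies the footnote that any $s \in P_i$ may serve as a seed point. Finiteness of $\gamma(P)$ guarantees each $P_i$ is finite as well.

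I do not anticipate a real obstacle, since the whole statement is the orbit-partition theorem in disguise. The only point requiring genuine care is the closure property of the first paragraph --- arguing that the action of $\gamma(P)$ restricts to a permutation action on $P$ rather than merely an action on $\Real^3$ --- and this is delivered directly by the definition of $\gamma(P)$ as the group acting on the point set. Everything after that is the routine verification that the orbit relation is an equivalence relation, which I would state briefly rather than belabor. (Uniqueness of the decomposition, asserted in the remark following Theorem~\ref{theorem:decomposition}, then follows because the partition into $\sim$-classes is canonical, though that is separate from the existence claim of this lemma.)
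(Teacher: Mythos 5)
Your proof is correct and follows essentially the same route as the paper: the paper's proof also takes the orbit space $\{Orb(p) : p \in P\}$ of the $\gamma(P)$-action and observes that it is a partition into transitive sets. You merely spell out the details the paper labels ``obvious'' (invariance of $P$ under $\gamma(P)$ and the equivalence-relation verification), which is fine.
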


\begin{proof}
For any point $p \in P$,
let $Orb(p) = \{ g*p \in P : g \in \gamma(P) \}$ be the orbit of 
the group action of $\gamma(P)$ through $p$.
By definition $Orb(p)$ is transitive regarding $\gamma(P)$.
Let $\{ Orb(p) : p \in P\} = \{ P_1, P_2, \ldots , P_m \}$ be its orbit space.
Then $\{ P_1, P_2, \ldots , P_m \}$ is obviously a partition, 
which satisfies the property of the lemma.
Additionally, such decomposition is unique. 
 \shortqed 
\end{proof}

Note that $|P_i| = |P_j|$ ($i \neq j$) may not hold,
while in 2D-space a set of points $P$ is decomposed into 
regular $|\rho(P)|$-gons by $\rho(P)$~\cite{SY99,YS10,FYOKY15}. 
Consider a configuration $P$ consisting of the vertices of 
a regular tetrahedron (4 vertices) and 
the vertices of a truncated tetrahedron (12 vertices)
(Figure~\ref{fig:subsets}). 
Then $\gamma(P) = T$ 
and the sizes of the elements of the $\gamma(P)$-decomposition of $P$
are different. 

\begin{figure}[t]
\centering 
\includegraphics[width=3cm]{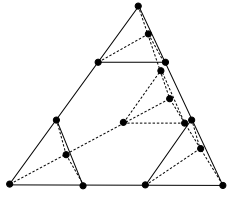}
\caption{A set of points $P$ consisting of 16 points.
Its rotation group is $\gamma(P)$ is $T$ 
and the $\gamma(P)$-decomposition of $P$ consists of two elements: 
a set of points forming a regular tetrahedron (of size 4) and
a set of points forming a truncated tetrahedron (of size 12).}
\label{fig:subsets}
\end{figure}

Let us go on the second part of the theorem.
For the robots to consistently compare two elements $P_i$ and $P_j$
of the $\gamma(P)$-decomposition of $P$,
each robot $r_i$ computes the ``local view''
of each robot $r_j$
which is determined only by configuration $P$ 
independently of its local coordinate system $Z_i$,
although $r_i$ observes $P$ in $Z_i$. 

Local views of robots defined in this section satisfy the following
properties:
\begin{enumerate}
\item 
For each $P_i$ ($i = 1, 2, \ldots , m$), 
all robots in $P_i$ have the same local view.
\item
Any two robots, one in $P_i$ and the other in $P_j$, 
have different local views, for all $i \not= j$.
\end{enumerate}
Then we give an expression of a local view as a sequence of
positions of the robots and by using the lexicographic ordering of local views,
the robots agree on a total ordering among $\{ P_1, P_2, \ldots, P_m \}$, 
i.e., $P_i$ is smaller than $P_j$ if and only if the local view of 
some $p \in P_i$ is smaller than that of some $p' \in P_j$
in the lexicographic order. 

To define the {\em local view} of a robot,
we first introduce {\em amplitude, longitude} and {\em latitude}.
Let $P = \{p_1, p_2, \ldots, p_n\}$ be a configuration,
where $p_i$ is the position (in $Z_0$) of $r_i$.
Assume that $P$ is not contained in a plane and $b(P) \not\in P$, 
because otherwise the plane formation is trivially solvable
as we will show later.\footnote{
We note that when the robots are on a plane (especially, when the
robots are on a line), we cannot define the local view in the same way.} 
The {\em innermost empty ball} $I(P)$ is the ball centered at $b(P)$ 
and contains no point in $P$ in its interior and 
contains at least one point in $P$ on it. 
Since $b(P) \not\in P$, $I(P)$ is well-defined.
Intuitively, $r_i$ considers $I(P)$ as the earth, 
and the line containing $p_i$ and $b(P)$ as the earth's axis. 
Recall that $r_i$ can recognize its relative positions from the others,
since $Z_i(p_i) = (0,0,0)$ always holds.
The intersection of a line segment $\overline {p_i b(P)}$ and $I(P)$ 
is the ``north pole'' $NP_i$. 
Then it chooses a robot $r_{m_i}$ not on the earth's axis
as its {\em meridian robot}.
Indeed, there is a robot satisfying the condition
by the assumption that the robots are not on one plane.
The meridian robot should be chosen more carefully for our purpose 
as shown later. 
Let $MP_i$ be the intersection of a line segment 
$\overline{p_{m_i}b(P)}$ and $I(P)$. 
The large circle on $I(P)$ containing $NP_i$ 
and $MP_i$ defines the ``prime meridian''.
Specifically, the half arc starting from $NP_i$ and containing $MP_i$ 
is the prime meridian. 
Robot $r_i$ translates its local observation $Z_i(P)$
with geocentric longitude, latitude, and altitude. 
The position of a robot $r_j \in R$ is now 
represented by the 
{\em altitude} 
$h_j$ in $[0,1]$, 
{\em longitude} 
$\theta_j$ in $[0, 2 \pi)$, and 
{\em latitude} 
$\phi_j$ in $[0, \pi]$.
Here the altitude of a point on $I(P)$ is 0,
and that on $B(P)$ is $1$.
The longitude of $MP_i$ is 0,
and the positive direction is the counter-clockwise direction. 
Since the robots are all right-handed
they can agree on the counter clockwise direction
(i.e., rotating positive $x$-axis to positive $y$-axis) 
on $I(P)$ by using $b(P)$. 
For example, the robots can agree on the clockwise direction 
by considering that the negative $z$-axis of 
their local coordinate systems point to $b(P)$. 
Finally, the latitudes of the  ``north pole'' $NP_i$,  
the ``equator,'' and the ``south pole'' are 0, $\pi/2$,
and $\pi$, respectively.

Now $p_j$ is represented by a triple $p_j^* = (h_j, \theta_j, \phi_j)$
(or more formally, $r_i$ transforms $Z_i(p_j)$ to $p_j^*$) 
for all $j = 1, 2, \ldots , n$,
where $\theta_i = \phi_i = 0$ by definition.
Observe that $p_j^*$ depends on the choice of the meridian robot $r_{m_i}$
and $p_j^* \not= p_{\ell}^*$ if and only if $p_j \not= p_{\ell}$.
See Figure~\ref{fig:geosky} as an example.

\begin{figure}[t]
\centering 
\includegraphics[width=4cm]{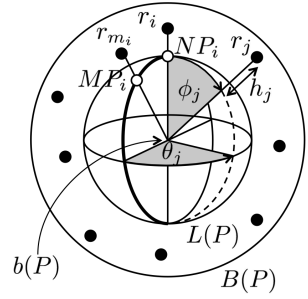}
\caption{Amplitude, longitude, and latitude calculated from
$r_i$'s local observation. 
The prime meridian for $r_i$ is drawn by bold arc. 
The position of $r_j$ is now represented by a triple
$p_j^* = (h_j, \theta_j, \phi_j)$.}
\label{fig:geosky}
\end{figure}

We then use the lexicographic ordering $<$ among the positions $p_j^*$
to compare them:
For two positions $(h, \theta, \phi)$ and $(h', \theta', \phi')$, 
$(h, \theta, \phi) < (h', \theta', \phi')$ if and only if
(i) $h < h'$, (ii) $h = h'$ and $\theta < \theta'$, or 
(iii) $h = h'$, $\theta = \theta'$ and $\phi < \phi'$. 

Let $V_i^* = \langle p_i^*, p_{m_i}^*, p_{j_1}^*, p_{j_2}^*, \ldots, p_{j_{n-2}}^* \rangle$ 
be a sorted list of the positions $p_j^*$,
in which the positions $r_i$ and its meridian robot $r_{m_i}$
are placed as the first and the second elements 
and the positions $p_j^*$ of the other robots $r_j$ are placed
in the increasing order,
i.e., $p_{j_k}^* \leq p_{j_{k+1}}^*$ for all $k = 1, 2, \ldots , k-3$, and 
$\{ j_1, j_2, \ldots, j_{n-2} \} = \{ 1, 2, \ldots , n\}
\setminus \{ i, m_i \}$ where the ties are arbitrarily resolved. 

Let us return to the problem of how to choose the meridian robot $r_{m_i}$.
As explained, $V_i^*$ depends on the choice of $r_{m_i}$.
Robot $r_i$ computes the robot that minimizes $V_i^*$ in the
lexicographical order 
and chooses it as the meridian robot $r_{m_i}$, 
where a tie is resolved arbitrarily. 
We call this minimum $V_i^*$ (for $r_{m_i}$ chosen in this way)
the {\em local view} of $r_i$.
Regardless of the choices of meridian robot $r_{m_i}$ by robot $r_i$,
the next lemma holds.

\begin{lemma}
Let $P \in {\mathcal P}_n^3$ and $\{ P_1, P_2, \ldots, P_m \}$ be 
a configuration of robots represented as a set of points 
and its $\gamma(P)$-decomposition, respectively. 
Then we have the following two properties: 
\begin{enumerate}
\item 
For each $P_i$ ($i = 1, 2, \ldots , m$), 
all robots in $P_i$ have the same local view.
\item
Any two robots, one in $P_i$ and the other in $P_j$, 
have different local views, for all $i \not= j$.
\end{enumerate}
\end{lemma}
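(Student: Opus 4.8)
The plan is to prove a single biconditional, of which Property~1 is one implication and Property~2 is the contrapositive of the other: two robots $r_a$ and $r_b$ have the same local view if and only if $p_a$ and $p_b$ lie in the same element of the $\gamma(P)$-decomposition, i.e.\ there exists $g \in \gamma(P)$ with $g * p_a = p_b$. The backbone of both directions is that the entire coordinatization used to define the local view (the innermost empty ball $I(P)$, the north pole, the prime meridian, and the resulting altitude/longitude/latitude triples) is intrinsic to $P$ and transforms covariantly under the direct congruences that make up $\gamma(P)$. Throughout I write $V_a^*(r_c)$ for the sorted list obtained by $r_a$ when it uses $r_c$ as its meridian robot, so that the local view of $r_a$ is $\min_c V_a^*(r_c)$.

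For Property~1 I would fix $r_a, r_b \in P_i$ and $g \in \gamma(P)$ with $g*p_a = p_b$, and show $g$ carries $r_a$'s construction onto $r_b$'s. Every $g \in \gamma(P)$ fixes $b(P)$ (all axes pass through it) and, being a symmetry of $P$, preserves both $B(P)$ and $I(P)$ (each is centered at $b(P)$ with a radius determined by $P$). Hence $g$ maps the axis $\overline{p_a b(P)}$ onto $\overline{p_b b(P)}$ and the north pole of $r_a$ onto that of $r_b$. Since $g$ preserves distances from $b(P)$ it preserves altitudes; since it preserves angles to the polar axis it preserves latitudes; and since it is orientation preserving and the counter-clockwise sense is globally agreed from right-handedness, it preserves signed longitudes. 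Thus for any admissible meridian robot $r_c$ of $r_a$, the robot $g*r_c$ is an admissible meridian robot of $r_b$ and yields $(g*p_k)^* = p_k^*$ for every $k$, so that $V_b^*(g*r_c) = V_a^*(r_c)$. As $g$ is a bijection on the robot set, the attainable values of $V_b^*$ coincide with those of $V_a^*$, and therefore their lexicographic minima --- the local views --- are equal.

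For Property~2 I would argue the contrapositive: assume $r_a$ and $r_b$ share a common local view $V^*$, realized by optimal meridian robots $r_c$ for $r_a$ and $r_d$ for $r_b$, and produce $g \in \gamma(P)$ with $g*p_a=p_b$. Let $h$ be the unique rotation about $b(P)$ sending the north pole of $r_a$ to that of $r_b$ and the prime meridian of $r_a$ (through $MP_c$) onto that of $r_b$ (through $MP_d$) respecting the agreed orientation; such an $h$ exists because both constructions live on the same sphere $I(P)$ about the same center $b(P)$. By construction, the triple of any point $q$ in $r_a$'s coordinates equals the triple of $h*q$ in $r_b$'s coordinates. Because $V_a^*(r_c)=V_b^*(r_d)=V^*$ as sorted lists, and distinct positions give distinct triples (so the lists have no repeated entries), the multisets of triples coincide; hence for each $p_k \in P$ there is a unique $p_{k'}\in P$ with the same triple in $r_b$'s frame, forcing $h*p_k = p_{k'}\in P$. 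Thus $h*P \subseteq P$, and finiteness gives $h*P = P$. Then $h$ is a rotation fixing $b(P)$ that preserves $P$, so $\langle h\rangle$ is a rotation symmetry group of $P$ and therefore $h$ lies in the full rotation group of $P$, which by Definition~\ref{def:gamma} is $\gamma(P)$. Since $h*p_a = p_b$, the points $p_a$ and $p_b$ lie in the same orbit, i.e.\ the same $P_i$, as required.

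The step I expect to be the main obstacle is the covariance of the longitude coordinate (the altitude and latitude are easy, being radial and polar). One must verify that elements of $\gamma(P)$, and the frame-matching rotation $h$ in Property~2, preserve the \emph{signed} longitude, which rests on their being orientation preserving together with the global agreement on the counter-clockwise direction derived from right-handed local frames. A secondary point needing care is the compatibility of the meridian-robot minimization with the group action: I must confirm that $g$ (resp.\ $h$) sends an optimizing meridian robot of one robot to an optimizing meridian robot of the other, which follows once $V_b^*(g*r_c)=V_a^*(r_c)$ is established, since a bijection preserving all values preserves their minimum.
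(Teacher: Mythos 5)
Your proposal is correct and takes essentially the same approach as the paper: Property~1 by showing the local-view construction is covariant under elements of $\gamma(P)$, and Property~2 by converting an equality of local views into a congruence fixing $b(P)$ that preserves $P$ and hence lies in $\gamma(P)$, forcing the two robots into the same orbit. The only differences are cosmetic --- you argue the contrapositive rather than a contradiction, and you fill in details the paper leaves implicit (the explicit frame-matching rotation $h$, the preservation of signed longitude via orientation, and the verification that $h*P = P$).
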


\begin{proof}
The first property is obvious by the definitions of 
$\gamma(P)$-decomposition and local view,
since for any $p, q \in P_i$ there is an element
$g \in \gamma(P)$ such that $q = g*p$.

As for the second property,
to derive a contradiction,
suppose that there are distinct integers $i$ and $j$,
such that robots $r_k \in P_i$ and  $r_{\ell} \in P_j$ have the same local view.
That is, $V_k^* = V_{\ell}^*$.
Let us consider a function $f$ that maps the $d$-th element of
$V_k^*$ to that of $V_{\ell}^*$.
More formally, letting the $d$-th element of $V_k^*$ (resp. $V_{\ell}^*$)
be $p_x^*$ (reps. $p_y^*$), 
$f$ maps $p_x$ to $p_y$.
Then $f$ is a congruent transformation that keeps $b(P)$ unchanged
by the definition of local view,
i.e., $f$ is a rotation in $\gamma(P)$,
which contradicts to the definition of $\gamma(P)$-decomposition.
\shortqed
\end{proof}

\begin{corollary}
\label{lemma:ordering}
Let $P \in {\mathcal P}_n^3$ and $\{ P_1, P_2, \ldots, P_m \}$ be 
a configuration of robots represented as a set of points and 
its $\gamma(P)$-decomposition, respectively.
Then the robots can agree on a total ordering among these subsets. 
\end{corollary}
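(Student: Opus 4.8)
The plan is to derive the corollary directly from the preceding lemma, treating the common local view of each transitive subset as a single well-defined label and then ordering these labels lexicographically. The corollary is essentially a packaging of the two properties just proved, so the work is in making the comparison precise and in verifying that the resulting order is the \emph{same} for every robot.

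First I would observe that, by Property~1 of the preceding lemma, all robots in a fixed subset $P_i$ share one and the same local view; I therefore write $V_i^*$ for this common value and regard it as the local view \emph{of the subset} $P_i$, well-defined independently of which representative robot of $P_i$ is chosen. By Property~2, these labels are pairwise distinct, i.e. $V_i^* \neq V_j^*$ whenever $i \neq j$. Next I would make the comparison of subsets precise: each $V_i^*$ is a sorted list of $n$ triples $(h,\theta,\phi)$, and the excerpt has already fixed a lexicographic order $<$ on single triples. I would lift this to a lexicographic order on length-$n$ lists in the usual way, comparing the first triples at which two lists differ. Since the $V_i^*$ are pairwise distinct and all of length $n$, this induces a strict total order on the finite set $\{V_1^*,\ldots,V_m^*\}$; declaring $P_i$ to precede $P_j$ exactly when $V_i^* < V_j^*$ then transports this to a total order on $\{P_1,\ldots,P_m\}$.

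The only real content is arguing that this order is \emph{agreed upon}, that is, that every robot computes the same ranking of the subsets from its own local observation. Here I would use that the geocentric representation underlying the local view is intrinsic to $P$: altitude is measured against $b(P)$ and the innermost empty ball $I(P)$, latitude against the north pole determined by the target robot, and the positive longitude direction is the counter-clockwise sense, on which all robots agree because every local coordinate system is right-handed and $b(P)$ is fixed. Consequently, for \emph{any} target robot, an observing robot $r_a$ can reconstruct exactly the local view that the target would itself compute, since that quantity depends only on $P$ (together with the agreed handedness) and the choice of target, not on $Z_a$. Thus each robot, after computing $\gamma(P)$ and the decomposition $\{P_1,\ldots,P_m\}$, evaluates the same label $V_i^*$ for each subset and hence obtains the same total order.

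I expect the handedness/agreement step to be the main point to get right, since it is exactly what guarantees coordinate-system independence of the labels $V_i^*$ and therefore of the induced order. The cases excluded when the local view was defined, namely $P$ lying on a plane or $b(P)\in P$, need no separate treatment here, because plane formation is already trivial in those situations and the ordering is invoked only in the remaining case.
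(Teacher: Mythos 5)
Your proposal is correct and follows essentially the same route as the paper, whose proof of this corollary is a one-line invocation of the lexicographic ordering of local views; you simply make explicit the details the paper leaves implicit in the surrounding text (the well-definedness of each subset's common label $V_i^*$, pairwise distinctness from the preceding lemma, and coordinate-system independence of the geocentric representation via right-handedness and $b(P)$). Nothing is missing, and the treatment of the excluded cases ($P$ planar or $b(P)\in P$) matches the paper's standing assumption.
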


\begin{proof}
By using the lexicographical ordering of the local views
of robots in each element of the $\gamma(P)$-decomposition of $P$.
\shortqed
\end{proof}

We now conclude Theorem~\ref{theorem:decomposition}
by Lemma~\ref{lemma:subsets} and Corollary~\ref{lemma:ordering}.
In the following, we assume that the $\gamma(P)$-decomposition of $P$,
$\{P_1, P_2, \ldots, P_m\}$ is ordered in this way. 
From the definition, $P_1$ is on $I(P)$, $P_m$ is on $B(P)$, and
$P_i$ is in the interior or on the ball that is centered at $b(P)$
and contains $P_{i+1}$ on it. 

We go on to the analysis of the structure of a transitive
set of points regarding a 3D rotation group. 
Recall that a transitive set of points is spherical. 
Any transitive set of points $P$ is specified
by a rotation group $G \in {\mathbb S}$
and a seed point $s$ as the orbit $Orb(s)$ 
of the group action of $G$ through $s$, 
so that $G = \gamma(P)$ holds.
Not necessarily $|G| = |Orb(s)|$ holds.
For any $p \in P$, we call
$\mu(p) = |\{ g \in G : g*s = p \}|$ the {\em folding} of $p$.
We of course count the identity element of $G$ for $\mu(p)$ 
and $\mu(p) \geq 1$ holds for all $p \in P$.\footnote{
In group theory, 
the folding of a point $P$ is simply the size of 
the stabilizers of $p$ defined by $G(p) = \{ g \in G: g*p = p \}$.
Although the lemma is known in group theory (see e.g., \cite{A88}),
we provide a proof for the convenience of readers.}

\begin{lemma}
\label{lemma:folding}
Let $P$ be the transitive set of points generated 
by a rotation group $G \in \{T, O, I\}$ and a seed point $s \in \Real^3$. 
If $p \in P$ is on a $k$-fold axis of $G$ for some $k$,
so are the other points $q \in P$ 
and $\mu(p) = \mu(q) = k$ holds.
Otherwise, if $p \in P$ is not on any axis of $G$, 
so are the other points $q \in P$ 
and $\mu(p) = \mu(q) = 1$ holds.
\end{lemma}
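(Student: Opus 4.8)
The plan is to recognize $\mu$ as a stabilizer size and then invoke the orbit--stabilizer principle, so that the whole statement reduces to two geometric facts about the axes of $T, O, I$. First I would fix an element $g_0 \in G$ with $g_0 * s = p$ (one exists since $p \in P = Orb(s)$) and observe that $g*s = p$ holds exactly when $g_0^{-1}g$ fixes $s$; hence $\{ g \in G : g*s = p \} = g_0\, G(s)$ is a left coset of the stabilizer $G(s) = \{ g \in G : g*s = s \}$. This gives $\mu(p) = |G(s)|$ for \emph{every} $p \in P$, so $\mu$ is automatically constant on $P$. Equivalently, since $G(p) = g_0\, G(s)\, g_0^{-1}$, all point stabilizers are conjugate and $\mu(p) = |G(p)|$; it therefore remains only to compute $|G(p)|$ and to check that the geometric alternative ``on a $k$-fold axis'' versus ``on no axis'' is the same for all points of $P$.

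Next I would identify $G(p)$ geometrically. Because $P$ is transitive it is spherical, so $p$ lies on $B(P)$ and in particular $p \neq b(P)$. A non-identity rotation of $G$ fixes $p$ precisely when its axis passes through $p$; since every axis of $G$ passes through $b(P)$ and distinct axes meet only at $b(P)$, the point $p \neq b(P)$ lies on \emph{at most one} axis of $G$. Consequently $G(p)$ consists of the identity together with all rotations of $G$ about the unique axis through $p$, if such an axis exists. If $p$ lies on a $k$-fold axis these rotations are exactly $C_k$, giving $|G(p)| = k$; if $p$ lies on no axis then $G(p) = \{ e \}$ and $|G(p)| = 1$. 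Combined with the first step this yields $\mu(p) = k$ in the former case and $\mu(p) = 1$ in the latter.

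Finally I would show the alternative is orbit-invariant. Writing an arbitrary $q \in P$ as $q = g*p$ for some $g \in G$, conjugation by $g$ sends the axis $\ell$ through $p$ to the line $g(\ell)$ through $q$ and sends the $k$-fold rotation $a$ about $\ell$ to $g a g^{-1}$, a rotation of the same order $k$ about $g(\ell)$; hence $g(\ell)$ is again a $k$-fold axis of $G$ and $q$ lies on it. Thus if $p$ is on a $k$-fold axis, so is every $q \in P$, and symmetrically if $p$ lies on no axis then neither does any $q$. Together with the constancy of $\mu$ established above, this gives $\mu(p) = \mu(q) = k$ (resp. $\mu(p) = \mu(q) = 1$) for all $p, q \in P$, as claimed.

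The hard part will be the second step, pinning down $G(p)$ as exactly $C_k$. This needs the two geometric facts that $p$ lies on at most one axis (so that rotations about distinct axes cannot simultaneously fix $p$) and that the rotations of $G$ about a given $k$-fold axis are precisely its $k$ associated rotations $C_k$ and nothing more; both follow from the arrangement of axes of the 3D rotation groups $T, O, I$, all of which intersect only at $b(P)$.
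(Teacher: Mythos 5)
Your proposal is correct and takes essentially the same route as the paper's own proof: both establish that the folding is constant over the orbit (the paper via an injection obtained by composing with a rotation carrying $p$ to $q$, which is exactly the coset bijection you write down via orbit--stabilizer) and then identify the stabilizer of a point with the cyclic group of rotations of $G$ about the unique axis through that point, giving $\mu = k$ or $\mu = 1$. Your explicit conjugation argument showing that ``lying on a $k$-fold axis'' is orbit-invariant, and your justification that a point off the center lies on at most one axis, merely spell out steps the paper treats as implicit, so the two proofs coincide in substance.
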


\begin{proof} 
We first show that $\mu(p) = \mu(q)$ for any $p,q \in P$.
To derive a contradiction,
we assume $\mu(p) > \mu(q)$ for some $p,q \in P$.
Let $g_1, g_2, \ldots, g_{m_p}$ 
(resp. $h_1, h_2, \ldots, h_{m_q}$)
be the set of rotations in $G$ such that $g_i * s = p$ 
(resp. $h_i * s = q$) holds for $i = 1, 2, \ldots , \mu(p)$
(resp. $i = 1, 2, \ldots , \mu(q)$).
Clearly $g_i \neq h_j$ for any $i$ and $j$.
Let $g \in G$ be a rotation satisfying $q = g*p$,
which definitely exists by definition.
Hence $q = (g \cdot g_i)*s$ for all $i = 1, 2, \ldots , \mu(p)$,
a contradiction,
since $g \cdot g_i \not= g \cdot g_j$ if $i \neq j$,
and $\mu(q) \geq \mu(p)$ holds.

Note that the seed point $s$ can be taken as $p$ in the above proof.
Suppose that $s$ is on a $k$-fold axis of $G$, 
then $\mu(s) = k$,
since the rotations in $G$ that move $s$ to itself are
the rotations around this $k$-fold axis.

Otherwise if $s$ is not on a rotation axis of $G$, 
only the identity element of $G$ can move $s$ to itself 
and hence $\mu(s) = 1$.
\shortqed
\end{proof}

\begin{lemma}
\label{lemma:cardinality}
When a set of points $P$ is transitive 
regarding $\gamma(P) \in \{T, O, I\}$, 
then we have $|P|\in \{4, 6, 8, 12, 20, 24, 30, 60\}$. 
\end{lemma}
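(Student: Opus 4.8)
The plan is to express $|P|$ in terms of the \emph{folding} $\mu(s)$ of the seed point and then to enumerate the finitely many admissible foldings for each of $T$, $O$, and $I$, reading the relevant data off Table~\ref{table:elements}.

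My first step would be to derive the counting identity $|G| = |P|\cdot\mu(s)$. Since $P = Orb(s)$, every element of $G$ sends $s$ to exactly one point of $P$; grouping the elements of $G$ according to the image of $s$ gives $|G| = \sum_{p\in P}\mu(p)$, and Lemma~\ref{lemma:folding} makes every summand equal to the common value $\mu(s)$. Hence $|P| = |G|/\mu(s)$, so in particular $\mu(s)$ must divide $|G|$.

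Next I would pin down which values $\mu(s)$ can actually take. Because $P$ is transitive it is spherical, so $s$ lies on $B(P)$ and is therefore distinct from $b(P)$; since every rotation axis of $G$ passes through $b(P)$ and two distinct axes meet only there, the seed point $s$ lies on \emph{at most one} axis. By Lemma~\ref{lemma:folding} this means that $\mu(s) = k$ when $s$ sits on a (unique) $k$-fold axis, and $\mu(s)=1$ otherwise. Reading the available folds from Table~\ref{table:elements}, the candidates are $k\in\{2,3\}$ for $T$, $k\in\{2,3,4\}$ for $O$, and $k\in\{2,3,5\}$ for $I$, together with $k=1$ in all three cases.

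The final step is a mechanical division: for $T$ (order $12$) this yields $|P|\in\{12,6,4\}$; for $O$ (order $24$), $|P|\in\{24,12,8,6\}$; and for $I$ (order $60$), $|P|\in\{60,30,20,12\}$. Their union is exactly $\{4,6,8,12,20,24,30,60\}$, as claimed. I expect no serious obstacle here, since the bulk of the argument is a case enumeration once $|P| = |G|/\mu(s)$ is in hand; the one point that needs a word of care is the geometric observation that a seed point, lying on the enclosing sphere rather than at its centre, cannot sit at the common intersection of two axes, so its folding is unambiguously the fold of the single axis through it (or $1$). Note also that one must restrict to these specific folds rather than to all divisors of $|G|$ --- e.g.\ $6$ divides $24$ but arises for $O$ only because of the $4$-fold axes, whereas, say, $24/8$ does not correspond to any axis.
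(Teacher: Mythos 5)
Your proposal is correct and follows essentially the same route as the paper: both rest on Lemma~\ref{lemma:folding} together with a case analysis of whether the seed lies on a $k$-fold axis or on no axis, and both obtain the cardinalities for $T$, $O$, and $I$ via $|P| = |G|/\mu(s)$. The only difference is explicitness --- you derive the counting identity $|G| = |P|\cdot\mu(s)$ and the observation that $s$, being on $B(P)$ rather than at $b(P)$, lies on at most one axis, whereas the paper's proof uses these facts implicitly when it asserts, for instance, that a seed on a $2$-fold axis of $T$ yields a $6$-set.
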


\begin{proof}
By Lemma~\ref{lemma:folding}, 
we can compute the cardinality of any transitive set of points 
for each rotation group. 

The tetrahedral group $T$ consists of $2$-fold axes and 
$3$-fold axes, and its order is $12$. 
If we put a seed on a $2$-fold axis,
we obtain a $6$-set as $P$ forming a regular octahedron. 
If we put a seed on a $3$-fold axis,
we obtain a $4$-set as $P$ forming a regular tetrahedron. 
If we put a seed not on any axis, 
we obtain a $12$-set as $P$. 

By the same argument, we have the following results:
The order of the octahedral group $O$ is $24$ 
and the possible cardinalities of $P$ are $6, 8, 12$, and $24$. 
The order of the icosahedral group $I$ is $60$  
and the possible cardinalities of $P$ are $12, 20, 30$, and $60$. 
\shortqed
\end{proof}

By Lemmas~\ref{lemma:folding} and Lemma~\ref{lemma:cardinality}, 
folding of a point determines the positions of a 
transitive set of points in the arrangement of rotation axes 
and these polyhedra are shown in Table~\ref{table:vt-sets}. 
When the folding is $1$,
a seed point can be taken any point not on any rotation axis 
and depending on the seed point,
infinite number of different polyhedra are obtained.\footnote{
Table~\ref{table:vt-sets} does not contain all uniform polyhedra. 
There are uniform polyhedra consisting of $48$ vertices or $120$ vertices,
such as a rhombitruncated cuboctahedron with $48$ vertices
and a rhombitruncated icosidodecahedron with $120$ vertices. 
However, they require a mirror plane to induce transitivity 
and the robots with right-handed local coordinate systems
can partition them into two groups.
For example, a rhombitruncated cuboctahedron is decomposed into two
$24$-sets by its rotation group $O$. } 
We have the following property 
by the definition of $\gamma(P)$-decomposition of a set of points $P$.

\begin{property}
Let $P \in {\mathcal P}_n^3$ and $\{ P_1, P_2, \ldots, P_m \}$ be 
a set of points and its $\gamma(P)$-decomposition, respectively. 
Then if $\gamma(P)$ is a 3D rotation group, 
$P_i$ is one of the polyhedra shown in Table~\ref{table:vt-sets}
for $i = 1, 2, \ldots , m$.
\end{property}

\begin{table}[t]
\begin{center}
\caption{Transitive sets of points in 3D-space (i.e., polyhedra)
characterized by rotation group and folding.}
\label{table:vt-sets}
\begin{tabular}{c|c|c|c|l}
\hline 
Rotation group & Order & Folding & Cardinality & Polyhedra \\ 
\hline 
 & & 3 & 4 & Regular tetrahedron \\ 
$T$ & 12 & 2 & 6 & Regular octahedron \\ 
 & & 1 & 12 & Infinitely many polyhedra \\ 
\hline 
\multirow{4}{*}{$O$} & \multirow{4}{*}{24}& 4 & 6 & Regular octahedron \\ 
 & & 3 & 8 & Cube \\ 
 & & 2 & 12 & Cuboctahedron \\ 
 & & 1 & 24 & Infinitely many polyhedra \\ 
\hline 
\multirow{4}{*}{$I$} & \multirow{4}{*}{60} & 5 & 12 & Regular icosahedron \\ 
 &  &  3 & 20 & Regular dodecahedron \\ 
 &  & 2 & 30 & Icosidodecahedron \\ 
 &  & 1 & 60 & Infinitely many polyhedra \\
\hline
\end{tabular}
\end{center}
\end{table} 


\section{Proofs of Theorem~\ref{theorem:main}}
\label{sec:PlF}

We show the proofs of Theorem~\ref{theorem:main}
in this section. 
In Subsection~\ref{subsec:nec}, 
we first show the necessity of Theorem~\ref{theorem:main} 
by showing that any algorithm for oblivious FSYNC robots cannot 
form a plane from an initial configuration if the initial configuration
does not satisfy the condition in Theorem~\ref{theorem:main}. 
Specifically, for any initial configuration $P$ that satisfies
$\gamma(P)$ is in $\{T, O, I\}$ 
and the size of each element of its $\gamma(P)$-decomposition 
is in $\{12, 24, 60\}$, 
we construct an arrangement of initial local coordinate systems 
that makes the robots keep the rotation axes of a 3D rotation group 
forever so that they never form a plane 
no matter which algorithm they obey. 
The orders of $T$, $O$, and $I$ are $12$, $24$, and $60$, respectively and
when an initial configuration does not satisfy the condition of
Theorem~\ref{theorem:main},
we can decompose the robots into transitive subsets so that 
the cardinality of each subset is ``full'' 
regarding a 3D rotation group (not necessarily $\gamma(P)$). 
Then we show that there exists an arrangement of 
local coordinate systems that is also transitive regarding the 
selected rotation group so that the robots continue symmetric 
movement forever. 
The impossibility proof holds for non-oblivious robots 
because starting from such a symmetric initial configuration $P$, 
the contents of memory at robots in the same element are kept identical 
and if the initial memory content of the robots are identical,
they cannot break the symmetry. 
Thus we obtain the necessity of Theorem~\ref{theorem:main}.

In Subsection~\ref{subsec:suff}, 
we show the sufficiency of Theorem~\ref{theorem:main} 
by presenting a plane formation algorithm 
for oblivious FSYNC robots. 
When $\gamma(P)$ of an initial configuration $P$ is a 2D rotation group,
the robots are on one plane  
or they can agree on the plane that is perpendicular to the single
rotation axis (or the principal axis). 
Actually, the robots can land on such a plane without making any 
multiplicity. 
On the other hand, 
when $\gamma(P)$ is a 3D rotation group,
the condition of Theorem~\ref{theorem:main} guarantees that 
there exists an element in the $\gamma(P)$-decomposition of $P$ 
that forms a regular tetrahedron, a regular octahedron, a cube, 
a regular dodecahedron, or an icosidodecahedron
(Table~\ref{table:vt-sets}). 
The proposed algorithm adopts the ``go-to-center'' strategy,
which is very similar to the ``go-to-midpoint'' algorithm in 
Subsection~\ref{subsec:idea}. 
Then we show that after the movement, 
the rotation group of the robots' positions is not a 3D rotation group
any more intuitively because the candidates of next positions form a 
transitive set of points, 
while the number of the robots is not sufficient 
to select a set of points with 3D rotation group from such set of
points. 
Because their rotation group is a 2D rotation group, 
the robots can form a plane. 
Clearly non-oblivious FSYNC robots can execute the proposed algorithm  
and we obtain the sufficiency of Theorem~\ref{theorem:main}.

\subsection{Necessity}
\label{subsec:nec} 

Provided $|P| \in \{ 12, 24, 60\}$,
we first show that when a set of points $P$ is 
a transitive set of points regarding a 3D rotation group, 
there is an arrangement of local coordinate system 
$Z_i$ for each robot $r_i \in R$ such that the execution from $P$ 
keeps a 3D rotation group forever 
no matter which algorithm the oblivious FSYNC robots obey.

\begin{lemma}
 \label{lemma:spherical-nec}
 Consider $n$ oblivious FSYNC robots with $n \in \{12, 24, 60\}$.
Then the plane formation problem is unsolvable
from an initial configuration $P$ 
if $P$ is a transitive set of points 
regarding a 3D rotation group. 
\end{lemma}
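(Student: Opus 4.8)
The plan is to exhibit a single adversarial assignment of initial local coordinate systems under which \emph{every} deterministic algorithm forces the robots to keep a $3$D rotation group acting on their positions forever; since ``forms a plane'' must succeed for \emph{all} initial frames, one bad assignment already proves unsolvability. The first and most important preliminary step is to replace $\gamma(P)$ by a $3$D rotation group $G$ that acts \emph{simply transitively} (freely and transitively) on $P$, i.e.\ with $|G| = |P| = n$. When $\gamma(P)$ itself has order $n$ (folding $1$) I take $G = \gamma(P)$; this is automatic for $n = 24$ and $n = 60$, since by Lemma~\ref{lemma:cardinality} the only transitive $n$-sets are then $O$- and $I$-orbits of folding $1$. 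For $n = 12$ the group $\gamma(P)$ may be $T$, $O$ (cuboctahedron), or $I$ (icosahedron); here I take $G$ to be the tetrahedral subgroup $T \preceq \gamma(P)$, and using the axis arrangements of $T, O, I$ one checks that $T$ embeds so that no point of $P$ lies on a $T$-axis. By Lemma~\ref{lemma:folding} the $T$-stabilizers are then trivial, so the $T$-orbit of any point has $|T| = 12 = |P|$ elements and therefore equals $P$; thus $T$ acts simply transitively.

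With such a $G$ fixed, I would biject robots with group elements: choosing a base point $p_1 \in P$, every $p_i \in P$ equals $g_i * p_1$ for a unique $g_i \in G$ (each $g_i$ a rotation about the center $b(P)$, with linear part $R_{g_i} \in SO(3)$). Now pick \emph{any} right-handed frame for $r_1$ with fixed linear part $A_1$, and give $r_i$ the fixed linear part $A_i = A_1 R_{g_i}^{-1}$; since $R_{g_i} \in SO(3)$, every $A_i$ is right-handed, so this is a legal choice of initial coordinate systems. A short computation then shows the resulting transformations satisfy $Z_i = Z_1 \circ g_i^{-1}$, whence $Z_i(P) = Z_1(g_i^{-1} * P) = Z_1(P)$ because $P$ is $G$-invariant: all robots share one and the same local observation.

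Next I would run the induction that carries the argument. Suppose at time $t$ that $p_i(t) = g_i * p_1(t)$ and $Z_i^{(t)} = Z_1^{(t)} \circ g_i^{-1}$ (true at $t = 0$). Then all robots observe the common set $Z_1^{(t)}(P(t))$ and, executing the shared algorithm $\psi$, compute one identical target $c$ in their local frames; hence $r_i$ moves to $(Z_i^{(t)})^{-1}(c) = g_i * (Z_1^{(t)})^{-1}(c) = g_i * p_1(t+1)$, so $P(t+1) = Orb(p_1(t+1))$ is again $G$-invariant. Because the linear parts $A_i = A_1 R_{g_i}^{-1}$ never change and $p_i(t+1) = g_i * p_1(t+1)$, the same computation re-establishes $Z_i^{(t+1)} = Z_1^{(t+1)} \circ g_i^{-1}$, closing the induction. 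I expect the \textbf{main obstacle} to be precisely the case flagged in the model section: an algorithm may send $p_1(t+1)$ onto a rotation axis, or to $b(P)$, collapsing the orbit and creating multiplicity. The point to stress is that the relation $Z_i^{(t)} = Z_1^{(t)} \circ g_i^{-1}$ was derived only from $p_i(t) = g_i * p_1(t)$ and the fixed frames $A_i = A_1 R_{g_i}^{-1}$, with no appeal to the orbit being free; so identical observations and $G$-invariance survive any collision and the induction never breaks.

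Finally I would conclude. By induction $P(t)$ is $G$-invariant for every $t$, with $G \in \{T, O, I\}$ a $3$D (polyhedral) rotation group. If some $P(t)$ satisfied the plane-formation conditions, it would be a set of $n \geq 2$ distinct coplanar points on which $G$ acts. But a polyhedral group leaves no line or plane through its center invariant, as it has rotation axes in several non-parallel directions; hence it cannot act on two or more distinct coplanar points, exactly the fact already recorded when $T, O, I$ were called three-dimensional. This contradiction shows conditions (a) and (b) can never hold together, so no algorithm forms a plane from $P$, proving the lemma.
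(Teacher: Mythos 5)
Your proof is correct and takes essentially the same approach as the paper's: choose $G \in \{T, O, I\}$ acting simply transitively on $P$ (taking $G = T \preceq \gamma(P)$ in the $n=12$ case, exactly as the paper does), install rotated copies of one right-handed frame so that all robots share a single local observation, and run the induction showing that $G$-invariance of the configuration persists forever, which rules out ever reaching a coplanar configuration of distinct points. Your handling of collisions---observing that the frame relation $Z_i = Z_1 \circ g_i^{-1}$ is derived from the labeled positions and fixed linear parts rather than from a bijection with the orbit, so the induction survives multiplicities---is in fact somewhat cleaner than the paper's corresponding remark, but the argument is the same.
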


\begin{proof}
Let $P(0)$ be an initial configuration of $n \in \{12, 24, 60\}$ robots
that is transitive
regarding $\gamma(P(0)) \in \{T, O, I\}$. 

To derive a contradiction,
we assume that there is an algorithm $\psi$ that enables the 
robots to solve  
the plane formation problem for any choice of initial arrangement of 
local coordinate systems of robots.
We will show that there is an initial arrangement of 
local coordinate systems such that 
the robots move symmetrically and keep the axes of 
rotation group $G$ forever,
where $G$ is given as follows depending on $n$:
\begin{equation*} 
 G = \left\{\begin{array}{ll} 
 T & \ \ \mbox{if} \ n = 12, \\ 
 O & \ \ \mbox{if} \ n = 24, \\ 
 I & \ \ \mbox{if} \ n = 60. 
 \end{array} \right.
\end{equation*}

We first claim that there is always an embedding of $G$ to $\gamma(P(0))$.
The claim obviously holds when $G = \gamma(P(0))$.
Suppose $G \neq \gamma(P(0))$.
Then $n = 12$,
since otherwise (i.e., $n$ is either $24$ or $60$), 
$G =  \gamma(P(0))$ by Table~\ref{table:vt-sets}
and by the definition of $G$.
If $n = 12$, then $G = T$ by the definition of $G$. 
Since $\gamma(P(0)) \in \{ O, I \}$,
the claim holds.

We fix an arbitrary embedding of $G$ to $\gamma(P(0))$.
For any point $s \in P(0)$,
we next claim $P(0) = Orb(s) = \{ g*s : g \in G \}$
and $|P(0)|$ is the order of $G$, 
i.e., $\mu(s)=1$.
Obviously the claim holds when $G = \gamma(P(0))$ from the definition.
Suppose that $G \not= \gamma(P(0))$.
Then $n = 12$, $G = T$ and $\gamma(P(0)) \in \{ O, I \}$
by the argument above.
If $\gamma(P(0)) = O$,
all points in $P(0)$ are on $2$-fold axes of $O$
from Table~\ref{table:vt-sets},
but there is no embedding of $T$ to $O$ that makes 
the rotation axes of $T$ overlap 
$2$-fold axes of $O$. 
That is, $\mu(s)$ regarding $T$ is $1$. 
Otherwise if $\gamma(P(0)) = I$,
like the above case,
all points in $P(0)$ are on $5$-fold axes of $I$
from Table~\ref{table:vt-sets}, 
but there is no embedding of $T$ to $I$ that makes
the rotation axes of $T$ overlap $5$-fold axes of $I$. 
That is, $\mu(s)$ of $T$ is $1$. 

Now we define a local coordinate system $Z_i$ for each $r_i \in R$
by using $Z_1$, the local coordinate system of $r_1 \in R$,
so that any algorithm $\psi$ produces an execution 
${\cal E:} P(0), P(1), \ldots$ such that $G$ is a subgroup of 
$\gamma(P(t))$ for all $t = 0, 1, \ldots$.
We define $Z_1 = Z_0$  
and $Z_1$ is specified by $(0,0,0),(1,0,0),(0,1,0),(0,0,1)$. 
Let $P(t) = \{ p_1(t), p_2(t), \ldots , p_n(t) \}$,
where $p_i(t)$ is the position of $r_i$ at time $t \geq 0$.
For each $r_i \in R$, 
there is an element $g_i \in G$ such that $p_i(0) = g_i * p_1(0)$,
and this mapping between $r_i$ and $g_i$ is a bijection between $R$ and $G$,
i.e.,  $g_i \neq g_j$ if $i \not= j$, and $G = \{g_i|r_i \in R\}$ 
because $\mu(r_i)=1$. 
Thus $g_1$ is the identity element.
Local coordinate system $Z_i$ is specified by the positions of
its origin $(0,0,0)$, $(1,0,0)$, $(0,1,0)$ and $(0,0,1)$ in $Z_0$.
That is, we can specify $Z_i$ by a quadruple
$(o_i, x_i, y_i, z_i) \in (\Real^3)^4$.
Define $Z_i$ as the coordinate system specified 
by a quadruple $(g_i*(0,0,0),g_i*(1,0,0),g_i*(0,1,0),g_i*(0,0,1))$,
for $i = 1, 2, 3, \ldots , n$.
\footnote{Recall that $Z_i$ here means $Z_i$ at time 0.}

Then $Z_i(P(0)) = Z_1(P(0))$ for $i = 1,2, \ldots , n$ 
and $\psi$ outputs the same value $\psi (Z_i(P(0))) = d$
in every robot $r_i$ as its next position. 
Let $d_i$ be this output at $r_i$ observed in $Z_0$. 
Then we have $d_i = g_i*d_1$.
That is, $P(1) = \{ d_1, d_2, \ldots , d_n \}$ is the orbit of $G$
through $d_1$
and obviously $G$ is a subgroup of $\gamma(P(1))$.
By an easy induction, 
we can show that $\gamma(P(t)) \succeq G$ 
is a 3D rotation group for $t = 0,1, \ldots$.

We finally address multiplicity during any execution of $\psi$. 
Algorithm $\psi$ may move some robots to one point at some time $t$.
Because $\gamma(P(t)) \succeq G$,
all robots gather at one point. 
However, since $\psi$ further needs to move the robots 
to distinct positions by the definition of the plane formation problem,
$\psi(Z_i(P(t))) \not= 0$ must hold,
that is, $\psi$ outputs a point that is different from the current position 
(i.e., the origin of $Z_i$) as the next position 
and these destinations form a transitive set of points regarding 
$G$ or its supergroup in $\gamma(P(t+1))$. 
Thus the robots never form a plane. 
\shortqed 
\end{proof}

Lemma~\ref{lemma:spherical-nec} considers 
an arbitrary transitive initial configurations regarding
a 3D rotation group. 
We next extend it to handle general initial configurations,
which may not be transitive.
Let  $\{ P_1, P_2, \ldots, P_m \}$ be the $\gamma(P)$-decomposition of
an initial configuration $P$.
Intuitively, we wish to specify $Z_j$ for $p_j \in P_i$ in the same way 
as the proof of Lemma~\ref{lemma:spherical-nec} for each $P_i$
($i = 1, 2, \cdots, m$).
We however need to take into account the cases in which $|P_i| \not= |P_j|$ 
and $G$ for $P_i$ is different from the one for $P_j$.
For example, consider a configuration $P$ consisting of a regular icosahedron 
($12$ points) and a truncated icosahedron ($60$ points),
where $\gamma(P) = I$. 
Then the $I$-decomposition of $P$ consists of the regular icosahedron $P_1$
and the truncated icosahedron $P_2$,
and $G$ for $P_1$ is $T$, while it is $I$ for $P_2$. 
In this case,
we make use of the $T$-decomposition (instead of the $I$-decomposition)
of $P$ and apply Lemma~\ref{lemma:spherical-nec} to each element
of the $T$-decomposition of $P$. 
Then we show that any execution keeps the rotation axes of $T$ forever.

\begin{theorem}
\label{theorem:nec}
Let $P$ and $\{ P_1, P_2, \ldots, P_m \}$ be an initial configuration
and the $\gamma(P)$-decomposition of $P)$, respectively.
Then the plane formation problem is unsolvable from $P$ 
for oblivious FSYNC robots,
if $\gamma(P)$ is a 3D rotation group 
and $|P_i| \in \{12, 24, 60\}$ for $i = 1, 2, \ldots , m$.
\end{theorem}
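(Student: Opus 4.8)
The plan is to reduce the statement to the single-orbit case of Lemma~\ref{lemma:spherical-nec} by exhibiting one 3D rotation group that simultaneously ``fills'' every element of a suitable decomposition of $P$. I would take $G = T$ in all cases: since $T \preceq O$ and $T \preceq I$, the tetrahedral group embeds into $\gamma(P)$ whichever of $T, O, I$ the group $\gamma(P)$ is. After fixing one such embedding, I replace the $\gamma(P)$-decomposition by the \emph{$T$-decomposition} $\{Q_1, \ldots, Q_{m'}\}$ of $P$; this refines $\{P_1, \ldots, P_m\}$ because $T \preceq \gamma(P)$. The whole reduction then rests on a single claim: every $Q_k$ is a \emph{full} orbit of $T$, i.e.\ it has folding one with respect to $T$ and hence cardinality $|T| = 12$.

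Establishing this claim is the crux and the main difficulty. It amounts to showing that, under the chosen embedding, no point of $P$ lies on a rotation axis of $T$. Recall that in the standard embeddings the axes of $T$ coincide, inside $O$, with its four $3$-fold axes and its three $4$-fold axes (the latter used as $2$-fold axes), and inside $I$ with four $3$-fold axes and three $2$-fold axes; in particular the six $2$-fold axes of $O$ and the six $5$-fold axes of $I$ are \emph{not} axes of $T$. Using that all points of a transitive set share one folding (Lemma~\ref{lemma:folding}) together with Table~\ref{table:vt-sets}, I would check each admissible cardinality. If $\gamma(P) = T$, only size $12$ is admissible and it has folding one already. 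If $\gamma(P) = O$, the admissible sizes are $12$ (a cuboctahedron, whose points lie on $2$-fold axes of $O$) and $24$ (folding one for $O$), and in both the points avoid every $T$-axis. If $\gamma(P) = I$, the admissible sizes are $12$ (a regular icosahedron, whose points lie on $5$-fold axes of $I$) and $60$ (folding one for $I$), again avoiding every $T$-axis. The role of the hypothesis $|P_i| \in \{12, 24, 60\}$ is exactly to discard every transitive cardinality not divisible by $|T| = 12$ (namely $4, 6$ for $T$, $6, 8$ for $O$, and $20, 30$ for $I$); for each discarded size either $12$ fails to divide it or its points would occupy axes that $T$ already occupies, while in every retained case the folding with respect to $T$ is one. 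Hence each $Q_k$ splits off as a full $T$-orbit of twelve points.

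With every $Q_k$ a full $T$-orbit, I would conclude by the coordinate-system construction of Lemma~\ref{lemma:spherical-nec}, applied one orbit at a time. For each $Q_k$ I fix a base point $s_k$ and an arbitrary base frame; because the folding is one, each robot in $Q_k$ equals $g * s_k$ for a \emph{unique} $g \in T$, and I orient that robot's local frame by applying $g$ to the base frame. Then for every $g \in T$ and every robot, the robot sitting at its $g$-image carries the frame rotated by $g$, so the two robots have identical local observations; the base frames of distinct orbits may be chosen independently, since $T$ maps each orbit onto itself. Thus the configuration decorated with local frames is $T$-invariant. As the robots are oblivious, FSYNC, and run a common deterministic $\psi$, robots related by $g$ compute $g$-related destinations, and an easy induction (as in Lemma~\ref{lemma:spherical-nec}) yields $\gamma(P(t)) \succeq T$ for all $t$; since $T$ is three-dimensional it acts on no planar point set, so $P(t)$ is never a plane. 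A collapse to multiplicity is handled exactly as before: to form a plane the robots must afterwards separate, but the separation is again transitive regarding $T$ (or a supergroup), so the symmetry is never broken. This gives the impossibility, matching the converse direction of Theorem~\ref{theorem:main}.
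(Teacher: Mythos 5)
Your proposal is correct, and its skeleton is the same as the paper's: embed a polyhedral group $G$ into $\gamma(P)$ under which every robot has folding one, pass to the $G$-decomposition into full orbits, equip each orbit with $G$-symmetric local frames exactly as in Lemma~\ref{lemma:spherical-nec}, and conclude by induction that every reachable configuration admits $G$, hence is never a plane of distinct stationary points. The one genuine difference is your uniform choice $G = T$: the paper instead sets $G = T$, $O$, or $I$ according to whether $\min_i |P_i|$ is $12$, $24$, or $60$, so that when the minimum is $24$ or $60$ it has $G = \gamma(P)$ and nothing to verify, and the axis-avoidance argument is needed only in the cases containing a $12$-element orbit (its Cases A1, A2, B1, B2). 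Your variant is sound because a point with folding one regarding $\gamma(P)$ lies on no rotation axis of $\gamma(P)$ at all, hence on no axis of any embedded copy of $T$; so the transitive $24$- and $60$-sets split into two, respectively five, full $T$-orbits, while the cuboctahedron (on $2$-fold axes of $O$) and the regular icosahedron (on $5$-fold axes of $I$) avoid the embedded $T$-axes for exactly the reason the paper gives. What your choice buys is uniformity: one group, one case analysis indexed by $\gamma(P)$, independence of the orbits from one another, and no need for the paper's preliminary observation that $\{24, 60\} \not\subseteq C$; the cost is only the harmless refinement of full $O$- and $I$-orbits into several $T$-orbits, which does not disturb the frame construction since it is carried out orbit by orbit. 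One small caution: your parenthetical gloss that the hypothesis serves to discard the cardinalities not divisible by $|T| = 12$ is not the operative mechanism --- what matters is that every retained orbit avoids the embedded $T$-axes, not divisibility --- but since you verify axis-avoidance case by case, this does not affect the correctness of the argument.
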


\begin{proof}
Let $P_1, P_2, \ldots, P_m$ be the $\gamma(P(0))$-decomposition of 
an initial configuration $P(0)$. 
We define the rotation group $G$ by:
\begin{equation*} 
 G = \left\{\begin{array}{ll} 
 T & \ \ \mbox{if} \ \min_{i=1, 2, \ldots, m}\{|P_i|\} = 12, \\ 
 O & \ \ \mbox{if} \ \min_{i=1, 2, \ldots, m}\{|P_i|\} = 24, \\ 
 I & \ \ \mbox{if} \ \min_{i=1, 2, \ldots, m}\{|P_i|\} = 60. 
 \end{array} \right.
\end{equation*} 
We show that there exists an arrangement of local coordinate systems of 
 robots that makes the robots keep the rotation axes of $G$ forever
 regardless of the algorithm they obey. 

By Table~\ref{table:vt-sets}, 
$G = \gamma(P(0))$ or $G$ is a subgroup of $\gamma(P(0))$ 
and there is an embedding of $G$ to $\gamma(P(0))$. 
We fix an arbitrary embedding of $G$ to $\gamma(P(0))$,
and consider the $G$-decomposition of $P(0)$ 
which is defined in the same way as the $\gamma(P(0))$-decomposition.
Formally, consider the orbit space 
$\{ Orb(p) : p \in P(0) \} = \{ Q_1, Q_2, \ldots , Q_k \}$
regarding $G$. 

For example,
let $P$ be a cuboctahedron embedded in a truncated cube as illustrated
in Figure~\ref{fig:TinO-1}.
Then $\gamma(P) = O$.
The $\gamma(P)$-decomposition is $\{P_1, P_2 \}$,
where the cardinalities of the elements are 12 and 24.
By definition, $G = T$.
The $G$-decomposition of $P$ is $\{Q_1, Q_2, Q_3\}$,
that is obtained with seed points $s_1, s_2$, and $s_3$, and 
the orbit $Q_i = Orb(s_i)$ regarding $T$ through $s_i$
for $i = 1, 2, 3$. 
(See Figure~\ref{fig:TinO-2}.)
 
\begin{figure}[t]
\centering 
\subfigure[]{\includegraphics[width=3.2cm]{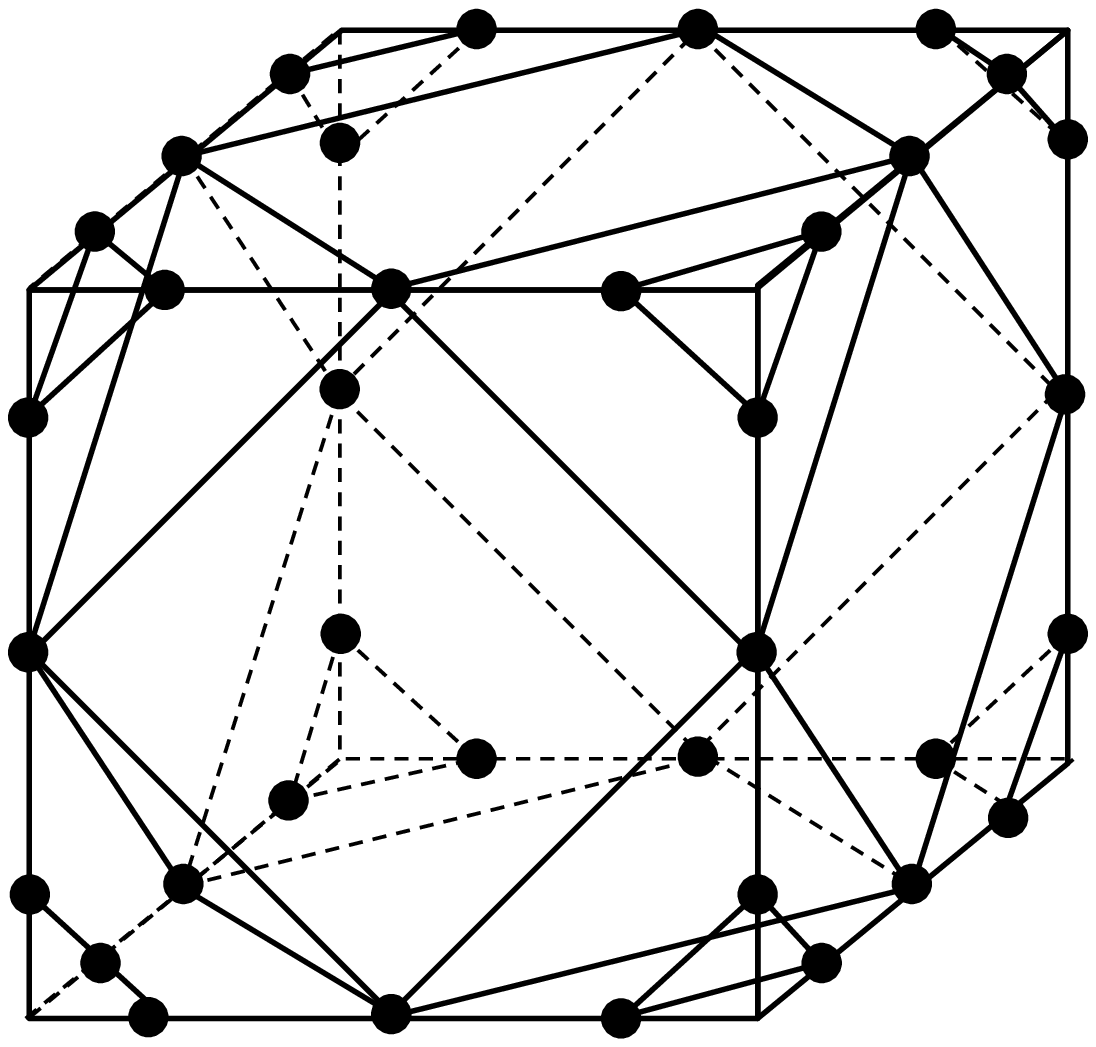}\label{fig:TinO-1}}
\hspace{5mm}
\subfigure[]{\includegraphics[width=3.2cm]{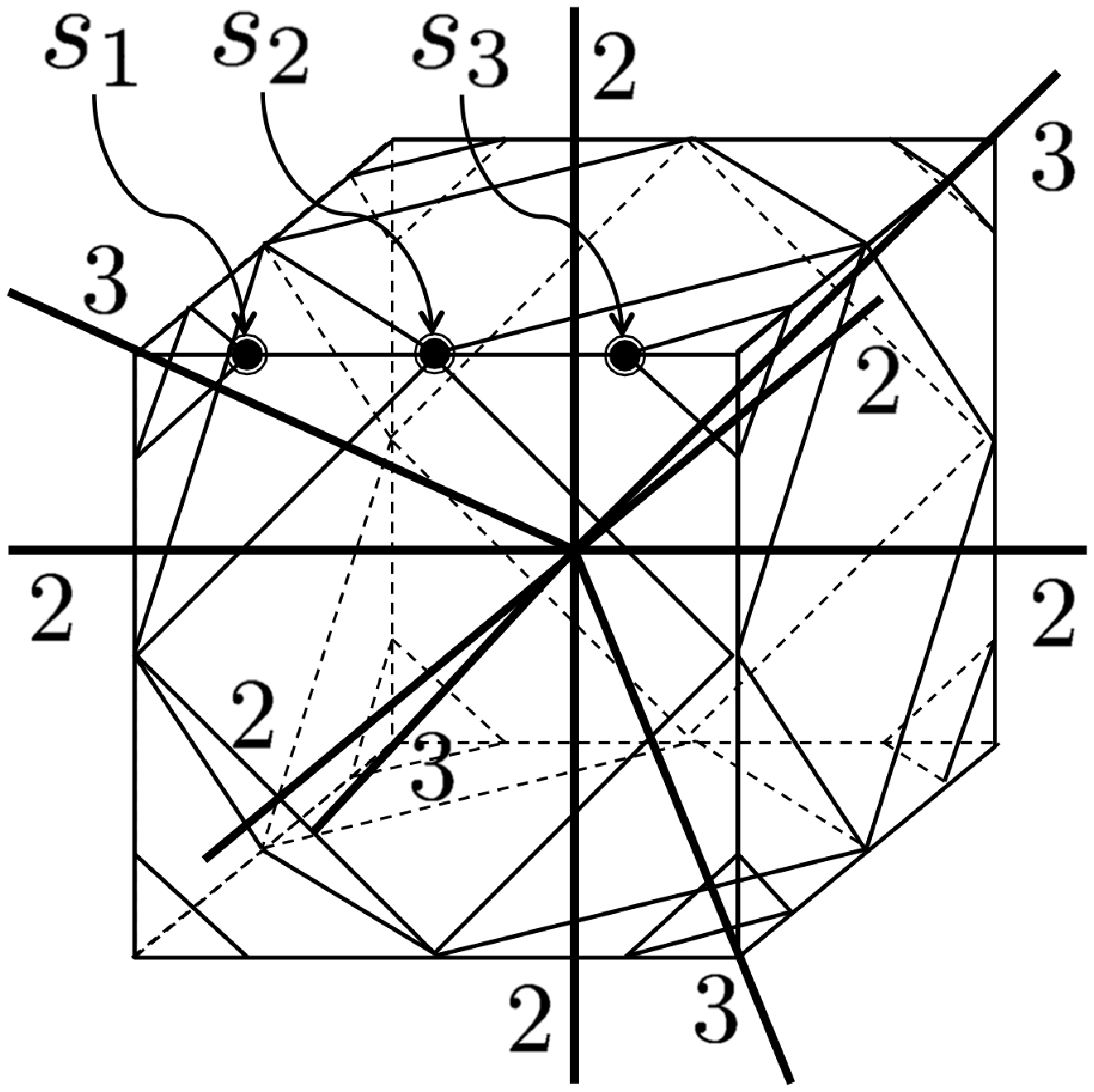}\label{fig:TinO-2}}
\hspace{5mm}
\\ 
\subfigure[]{\includegraphics[width=3.2cm]{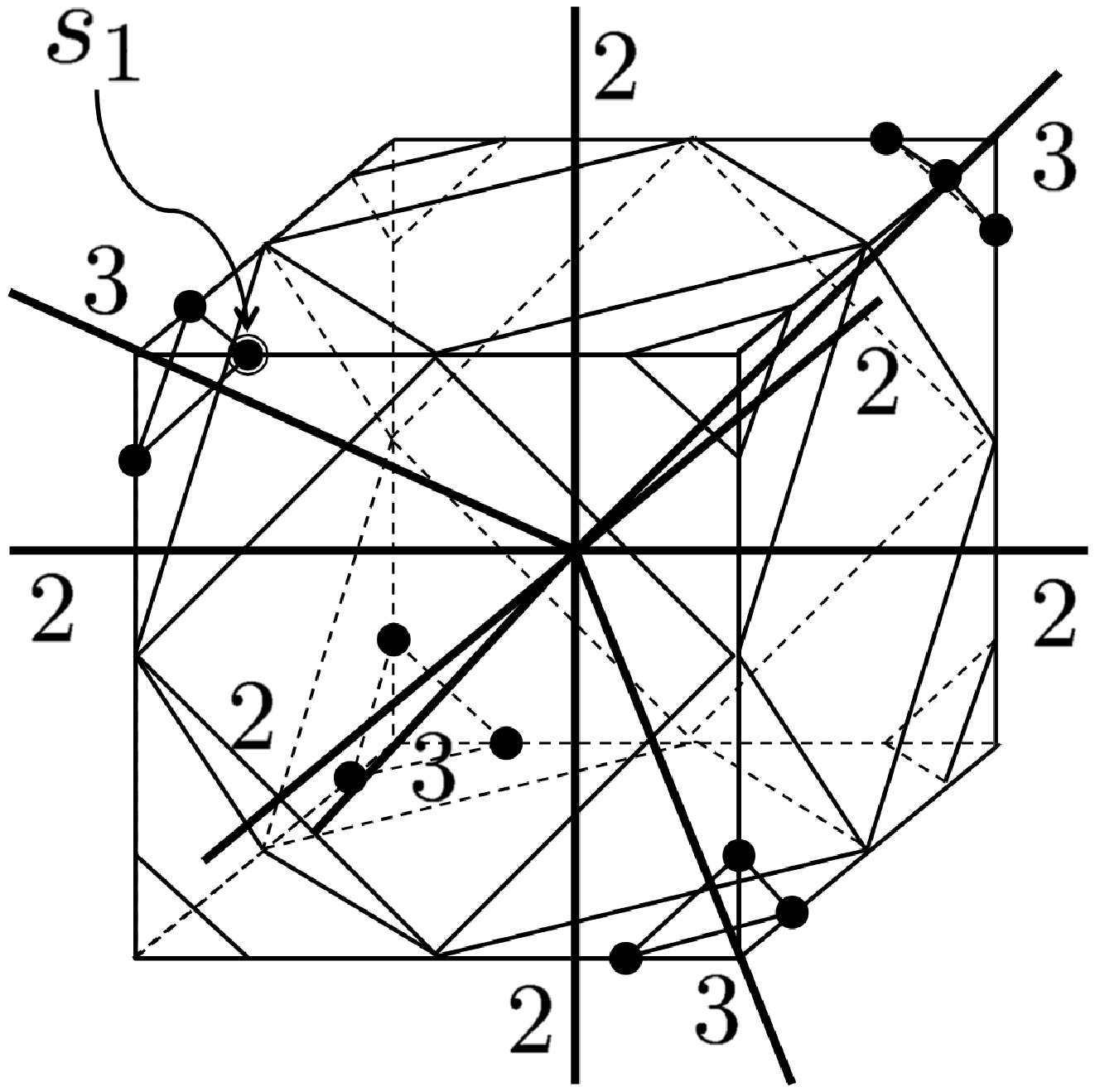}\label{fig:TinO-3}}
\hspace{5mm}
\subfigure[]{\includegraphics[width=3.2cm]{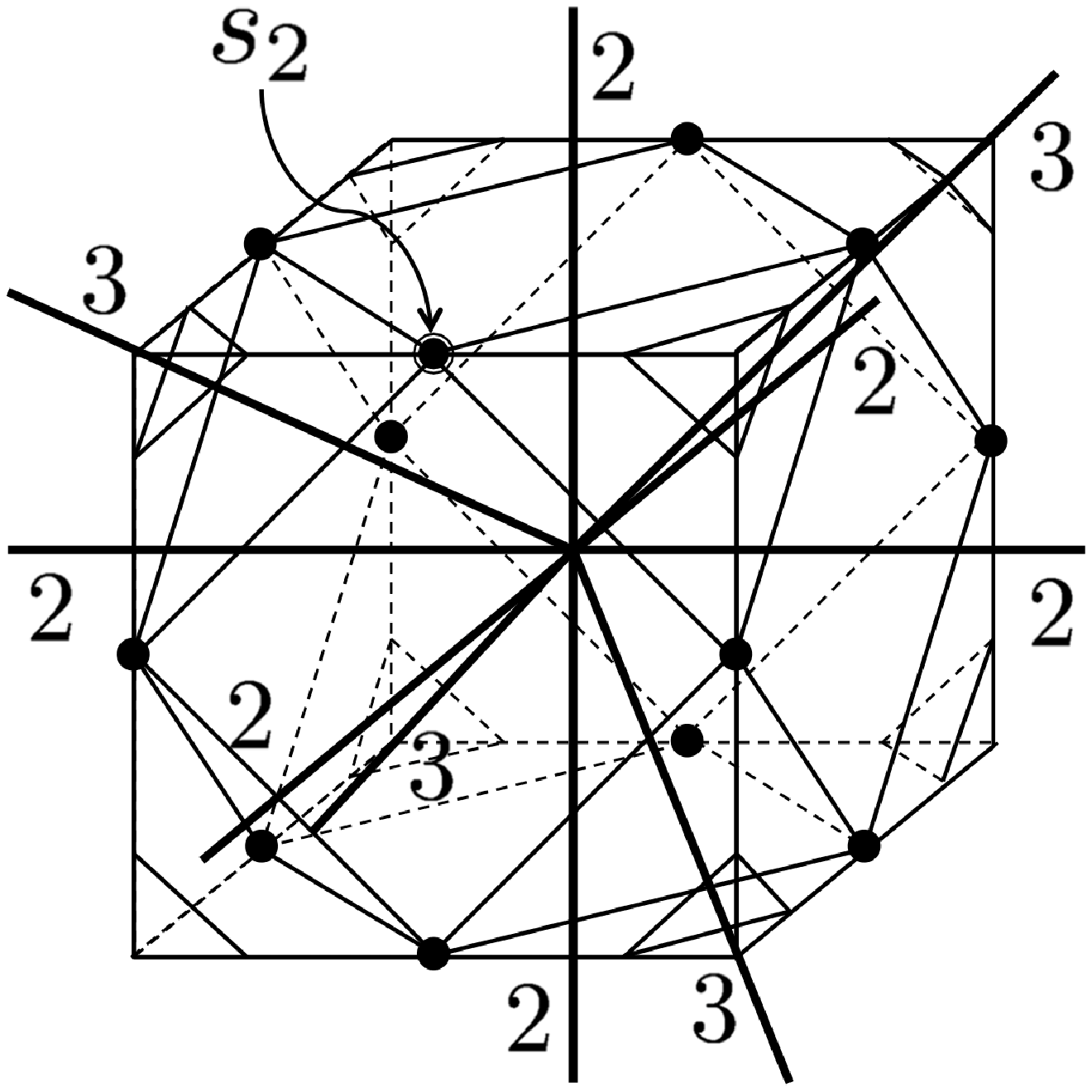}\label{fig:TinO-4}}
\hspace{5mm}
\subfigure[]{\includegraphics[width=3.2cm]{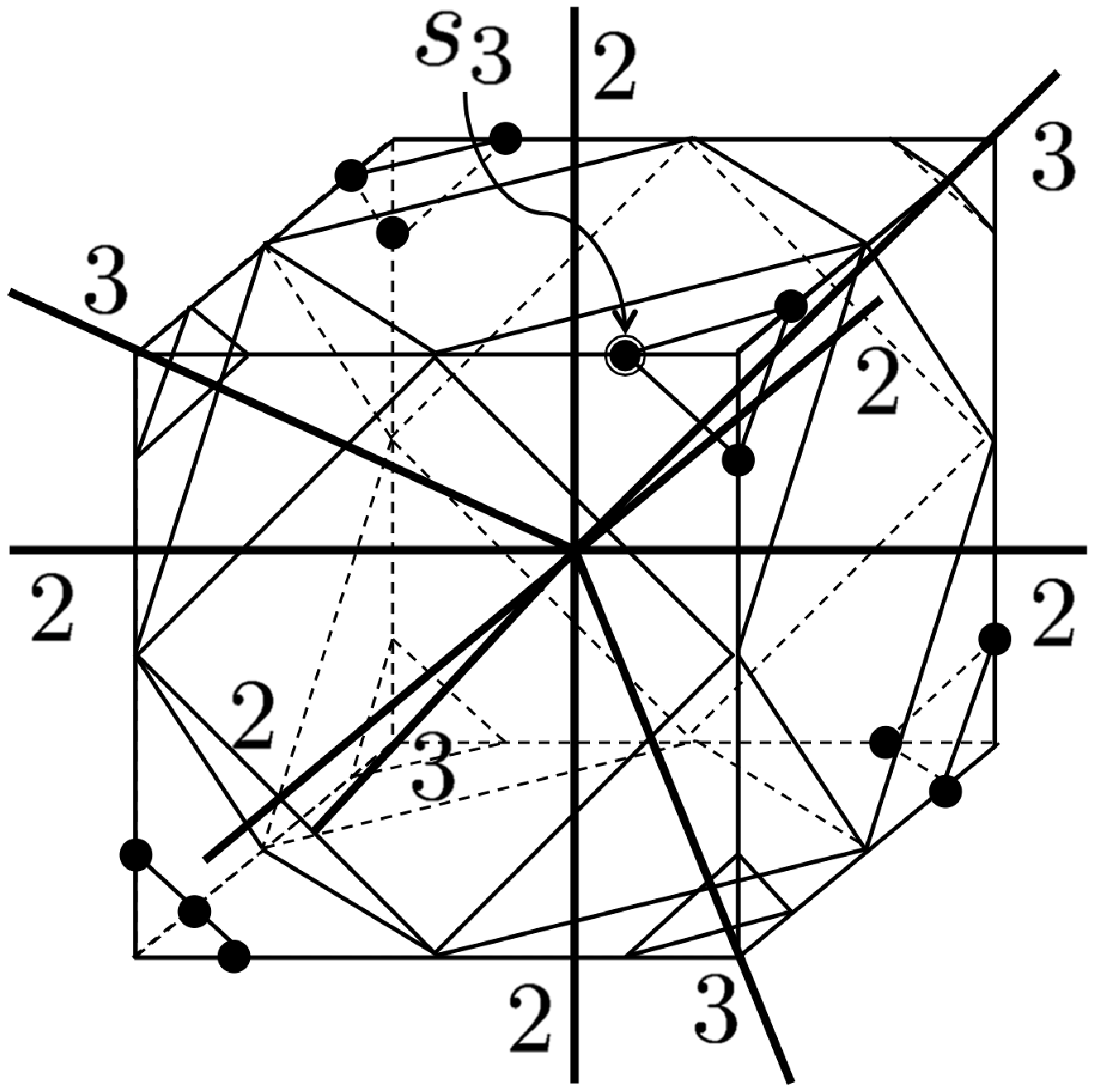}\label{fig:TinO-5}}
\hspace{5mm}
\caption{(a) $P$ consists of a cuboctahedron and a truncated cube 
so that its $\gamma(P) = O$.
(b) Three seed points $s_1, s_2, s_3 \in P$.
(c) $Orb(s_1)$ of $T$ through $s_1$, whose cardinality is 12.
(d) $Orb(s_2)$ of $T$ through $s_2$, whose cardinality is 12.
(e) $Orb(s_3)$ of $T$ through $s_3$, whose cardinality is 12.}
\label{fig:TinO}
\end{figure}

We first show that for each $Q_i$ ($i = 1, 2, \cdots, k$),
$|Q_i| = |G|$, thus for any $q \in Q_i$, $\mu(q)=1$ regarding $G$. 
Let $C$ be the set of sizes of the elements of the
$\gamma(P(0))$-decomposition of $P(0)$, that is, 
$C = \{|P_i|: i=1, 2, \ldots, m\}$.
Observe that $\{24, 60\} \not\subseteq C$,
since while $60 \in C$ implies $\gamma(P(0)) = I$, 
there is no transitive set of points $S$
with $|S| = 24$ regarding $I$ by Lemma~\ref{lemma:cardinality}.
Hence $1 \leq |C| \leq 2$.
Depending on $|C|$,
we have the following three cases. 

\noindent{\bf Case A: $|C| = 1$.~}
The case $G = \gamma(P(0))$ is trivial.
When $G \neq \gamma(P(0))$, 
we must consider the following two cases. 

{\bf Case A1:~} When $G = T$ and $\gamma(P(0)) = O$. 
Then $|G| = |T| = 12$.
Let $p_j \in P_i$ be any point for $i = 1, 2, \ldots , m$.
By definition $P_i = Orb(p_j)$ regarding $\gamma(P(0)) = O$
for some $p_j \in P_i$.
Observe that under an arbitrary 
embedding of $T$ to $\gamma(P(0))=O$,
$p_j$ is not on any rotation axis of $T$,
since otherwise, $\mu(p_j)$ regarding $O$ is $3$ or $4$,
and $|P_i|$ is $8$ or $6$. 
Consequently, there is no point in $P$ that is 
on a rotation axis of any embedding of $T$ to $\gamma(P(0))$.
Thus we have $|Q_i| = |T| = 12$ for $i = 1, 2, \ldots , k$.

{\bf Case A2:~} When $G = T$ and $\gamma(P(0)) = I$. 
Then $|G| = |T| = 12$.
The proof is exactly the same as (A1),
except that, in this case,
we observe that there is no point in $P$ that is 
on a rotation axis of any embedding of $T$ to $\gamma(P(0))$, 
since otherwise $|P_i|$ is $30$ or $20$.
Thus we have $|Q_i| = |T| = 12$ for $i = 1, 2, \ldots , k$.

\noindent{\bf Case B: $|C| = 2$.~}
Then $G \neq \gamma(P)$ 
and we have the following two cases. 

{\bf Case B1:~} When $C = \{12, 24\}$. 
Then $G = T$ and $\gamma(P(0)) = O$. 
Like Case A1, 
under an arbitrarily fixed embedding of $T$ to $\gamma(P(0))=O$,
any $p \in P(0)$ is not on a rotation axis of $T$ 
and $|Q_i| = |T| = 12$ for $i = 1, 2, \ldots , k$.

{\bf Case B2:} When $C = \{12, 60 \}$. 
Then $G = T$ and $\gamma(P(0)) = I$. 
Like Case A2,
any $p \in P(0)$ is not on a rotation axis of any embedding of $T$ to
$\gamma(P(0))$ 
and $|Q_i| = |T| = 12$ for $i = 1, 2, \ldots , k$.

 Thus we conclude that $|Q_i| = |G|$ and for $q \in Q_i$,
 $\mu(q)=1$ for $i = 1,2, \ldots , k$.

To derive a contradiction,
we assume that there is an algorithm $\psi$ that makes the robots 
form a plane from $P(0)$. 
The scenario to derive a contradiction is exactly the same as the
proof of Lemma~\ref{lemma:spherical-nec}. 
For each element $Q_i$ of the $G$-decomposition of $P(0)$
($i = 1, 2, \cdots, k$),
we pick up an arbitrary local coordinate system of a robot in $Q_i$ 
and by applying the elements of $G$ to the local coordinate system,
we obtain symmetric local coordinate systems of robots in $Q_i$
because $|Q_i| = |G|$.
In the same way as Lemma~\ref{lemma:spherical-nec},
each $Q_i$ keeps $G$ forever irrespective of the algorithm that the robots
forming $Q_i$ execute. 
Hence the rotation group of the robots contains $G$ as a subgroup
forever and the robots never form a plane since $G \in \{T, O, I\}$. 
\shortqed 
\end{proof}

Finally, we obtain the impossibility result for 
non-oblivious robots by Theorem~\ref{theorem:nec}, 
since starting from an initial configuration $P$, 
that satisfies the condition of Theorem~\ref{theorem:nec}, 
the robots in the same element of the $G$-decomposition 
keep the identical memory contents forever, i.e., 
in each configuration, 
they obtain the identical local observation and the 
identical output of computation. 
Thus from an initial configuration where the local memory is empty at
each robot,
they follow Theorem~\ref{theorem:nec}. 

\begin{theorem}
\label{theorem:nec-power}
Let $P$ and $\{ P_1, P_2, \ldots, P_m \}$ be an initial configuration
and the $\gamma(P)$-decomposition of $P$, respectively.
Then the plane formation problem is unsolvable from $P$ 
for non-oblivious FSYNC robots,
if $\gamma(P)$ is a 3D rotation group 
and $|P_i| \in \{12, 24, 60\}$ for $i = 1, 2, \ldots , m$.
\end{theorem}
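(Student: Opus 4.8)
The plan is to reduce the non-oblivious case to the oblivious impossibility already established in Theorem~\ref{theorem:nec}. The key observation is that Theorem~\ref{theorem:nec} constructs a specific symmetric arrangement of local coordinate systems under which, for the group $G \in \{T,O,I\}$ defined from the minimum element size of the $\gamma(P)$-decomposition, every $Q_i$ in the $G$-decomposition satisfies $|Q_i| = |G|$ and is generated by applying the elements of $G$ to one robot's coordinate system. Under that arrangement the robots in each $Q_i$ are genuinely indistinguishable: they have identical local observations at time $0$. The strengthening to non-oblivious robots just requires checking that indistinguishability is \emph{preserved across cycles} once we also initialize their local memories to be identical (e.g.\ empty).

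First I would set up exactly the same initial configuration $P(0) = P$ and the same bijective assignment $g_i \mapsto Z_i$ of symmetric local coordinate systems as in the proof of Theorem~\ref{theorem:nec}, so that $Z_i(P(0)) = Z_j(P(0))$ whenever $r_i, r_j$ lie in the same element $Q_\ell$ of the $G$-decomposition. In addition I would stipulate that every robot begins with the same (empty) memory content. The crux is then an induction on $t$ asserting a joint invariant: (a) $G \preceq \gamma(P(t))$, and (b) for any two robots in the same $G$-orbit, their memory contents at time $t$ are identical and their local observations $Z_i(P(t))$ are identical. For the base case, (a) holds by hypothesis and (b) holds because memories start empty and observations coincide by the symmetric coordinate assignment.

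For the inductive step I would argue that a non-oblivious algorithm $\psi$ computes its output from the pair (current local observation, local memory state). By the inductive hypothesis both components agree for any two robots $r_i, r_j$ in the same orbit, so $\psi$ returns the same value $d$ in their respective local coordinate systems, and likewise updates their memories to the same new content. Reading these outputs back in $Z_0$ gives $d_i = g_i * d_1$ for robots related by $g_i \in G$, exactly as in the oblivious proof, so the new configuration $P(t+1)$ is again a union of $G$-orbits with $G \preceq \gamma(P(t+1))$, re-establishing (a); and the synchronized identical memory updates re-establish (b). The multiplicity subtlety is handled verbatim as in Lemma~\ref{lemma:spherical-nec}: even if $\psi$ collapses an orbit to a point, it must subsequently disperse the robots to distinct positions, and those destinations again form a transitive set regarding $G$ or a supergroup, so a plane is never formed.

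The main obstacle, and the only place where the non-oblivious setting genuinely differs, is verifying that memory updates respect the symmetry --- i.e.\ that robots in the same orbit not only \emph{act} identically but also \emph{remember} identically, so that the indistinguishability is self-sustaining rather than breaking down after the first cycle. This rests on the fact that $\psi$ is a common deterministic algorithm whose memory-update rule is itself a function of (observation, memory), both of which the induction keeps synchronized within each orbit; since full synchrony forces all robots through their Look--Compute--Move phases in lockstep, no adversarial interleaving can desynchronize the memories. Once this invariant is secured, the conclusion $\gamma(P(t)) \succeq G$ with $G \in \{T,O,I\}$ for all $t$ follows exactly as before, and since a $3$D rotation group never acts on a planar point set, the robots never land on a common plane, which contradicts the assumed correctness of $\psi$. \shortqed
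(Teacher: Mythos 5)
Your proposal is correct and follows essentially the same route as the paper: the paper also derives Theorem~\ref{theorem:nec-power} by reusing the symmetric coordinate-system construction of Theorem~\ref{theorem:nec} with identical (empty) initial memories, observing that robots in the same $G$-orbit then obtain identical observations and identical computation outputs in every cycle, so their memories stay synchronized and the symmetry argument goes through unchanged. Your explicit joint induction on (orbit-wise identical observations and memories, $G \preceq \gamma(P(t))$) is just a more detailed write-up of the paper's one-paragraph argument.
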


As a concluding remark of this subsection,
we examine some initial configurations. 
By Theorem~\ref{theorem:nec}, irrespective of obliviousness, 
FSYNC robots cannot form a plane from 
an initial configuration where they form a
regular icosahedron because there exists an execution where
the robots keep $T$ forever. 
This corresponds an example in Section~\ref{subsec:idea}. 
Other unsolvable initial configurations with the minimum number of robots
are 
the initial configurations of $12$ robots forming
a vertex-transitive polyhedron regarding $T$,
e.g., a regular icosahedron, a truncated tetrahedron, 
a cuboctahedron, or 
infinitely many polyhedra generated by a seed point
which is not on any rotation axis of $T$. 
The FSYNC robots cannot form a plane from initial configurations
where they form a semi-regular polyhedron 
except the icosidodecahedron consisting of $30$ robots.

\subsection{Sufficiency}
\label{subsec:suff}

We show a pattern formation algorithm for oblivious FSYNC
robots for an arbitrary initial configuration
that satisfies the condition of Theorem~\ref{theorem:main}.
We first show the following theorem. 

\begin{theorem}
\label{theorem:suf}
Let $P$ and $\{ P_1, P_2, \ldots, P_m \}$ be an initial configuration
and the $\gamma(P)$-decomposition of $P$, respectively.
Then oblivious FSYNC robots can form a plane from $P$ if either
(i) $\gamma(P)$ is a 2D rotation group, or 
(ii) $\gamma(P)$ is a 3D rotation group
and there is a subset $P_i$ such that $|P_i| \not\in \{12, 24, 60\}$. 
\end{theorem}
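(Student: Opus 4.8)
The plan is to turn every instance into the two-dimensional case, which is handled directly, and to spend the real effort on reducing a three-dimensional rotation group to a two-dimensional one by a single symmetry-breaking move. The whole algorithm is oblivious: each robot looks at the current configuration $Q$, computes $\gamma(Q)$, and branches on whether it is two- or three-dimensional.

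For case (i), when $\gamma(P)$ is cyclic or dihedral there is a distinguished axis---the unique axis of $C_k$, or the principal axis of $D_\ell$, recognizable even for $D_2$ by Property~\ref{property:d2-principal}---so every robot agrees on the plane $\Pi$ through $b(P)$ orthogonal to it. The only real work is landing on $\Pi$ without multiplicity, since two robots sharing a common perpendicular foot on $\Pi$ must not be sent to the same point. I would process the $\gamma(P)$-decomposition in the total order of Theorem~\ref{theorem:decomposition} and, using the altitude--longitude--latitude encoding, assign the robots of successive orbits distinct targets on $\Pi$ (e.g.\ on concentric circles of strictly decreasing radius), so that collisions are avoided both within and across orbits; the configuration already on $\Pi$ is a fixed point of this rule, giving condition (c).

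For case (ii), I would use the agreed total ordering to select the canonical element $P_i$ whose cardinality lies in $\{4,6,8,20,30\}$; by Table~\ref{table:vt-sets} this $P_i$ is a regular tetrahedron, octahedron, cube, dodecahedron, or icosidodecahedron, and on its own it has rotation group $G\in\{T,O,I\}$. The robots of $P_i$, and only these, each move a small distance $\epsilon$ off their vertex along a deterministically chosen incident edge of $P_i$ (a small-displacement variant of the go-to-midpoint move of Subsection~\ref{subsec:idea}; ties among the incident edges broken by the local coordinate system, which is legitimate since the vertex stabilizer permutes them with no fixed edge). The central claim is that the resulting configuration $P'$ has a two-dimensional rotation group, after which case (i) finishes the job in the next cycle. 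I would prove this by a folding--divisibility argument: for $\epsilon$ small a perturbation cannot create new symmetry, so $\gamma(\text{moved }P_i)\preceq G$; each moved point lies off the vertex on an edge interior, hence off every rotation axis of $G$, so its folding is $1$. Any three-dimensional $H\preceq G$ would then act freely on $\text{moved }P_i$ with orbits of size $|H|\in\{12,24,60\}$, forcing $|H|$ to divide $|P_i|$; but no element of $\{12,24,60\}$ divides any element of $\{4,6,8,20,30\}$, so no such $H$ exists and $\gamma(\text{moved }P_i)$ is two-dimensional. Choosing $\epsilon$ below the radial gap between $P_i$'s sphere and the neighbouring shells keeps $\text{moved }P_i$ in a band disjoint from every stationary $P_j$, so any rotation fixing $P'$ must fix $\text{moved }P_i$ setwise; hence $\gamma(P')\preceq\gamma(\text{moved }P_i)$ is two-dimensional.

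The main obstacle is making the symmetry-breaking lemma fully rigorous: the geometric heart is checking, for each of the five polyhedra, that a short segment of an edge emanating from a vertex meets none of the $2$-, $3$-, $4$-, or $5$-fold axes of $G$ except at the vertex itself (so that the nudged points are genuinely folding-$1$), together with the $\epsilon$-bookkeeping that simultaneously keeps $\text{moved }P_i$ radially separated from the other shells and preserves the shell ordering so that the local-view machinery of Section~\ref{sec:3Dsym} still applies during landing. The edge-selection ambiguity is itself harmless---because every admissible choice lands the robot at a folding-$1$ point, the adversary cannot use symmetric local coordinate systems to restore a three-dimensional symmetry, so the divisibility argument covers all tie-breakings at once. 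The remaining subtlety, as flagged for the landing algorithm, is guaranteeing distinct final positions and that the landed planar configuration is a genuine fixed point rather than one the robots perturb further.
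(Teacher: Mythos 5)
Your overall architecture (reduce a 3D group to a 2D one by a small deterministic move of one distinguished orbit, then agree on the plane orthogonal to the single/principal axis and land) matches the paper's, but your symmetry-breaking lemma has a genuine gap at its central step: the assertion that ``for $\epsilon$ small a perturbation cannot create new symmetry, so $\gamma(\mbox{moved }P_i)\preceq G$.'' This is not a valid general principle: a slightly rotated copy of a symmetric point set is a small perturbation of it whose rotation group is a \emph{conjugate} of the original group, not a subgroup of it. What a compactness/no-small-subgroups argument actually yields is only that $H=\gamma(\mbox{moved }P_i)$ is conjugate, by a rotation close to the identity, to a subgroup of $G$. That weaker conclusion breaks your folding--divisibility argument: the axes of $H$ are then tilted relative to the axes of $G$, so ``moved points avoid all axes of $G$'' does not imply ``moved points avoid all axes of $H$.'' Worse, if $H$ were 3D, your own counting shows its action on the moved $|P_i|$-set cannot be free (since $|H|\in\{12,24,60\}$ does not divide $|P_i|\in\{4,6,8,20,30\}$), i.e.\ some axis of $H$ \emph{must} pass through a moved point. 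So the case you are left having to exclude is exactly a rotation of order at least $2$ about a tilted axis through the center and one moved point that permutes the remaining moved points---and this requires polyhedron-by-polyhedron geometric analysis of the edge move, with $3^{20}$ or $4^{30}$ adversarial edge choices for the dodecahedron and icosidodecahedron. This is precisely the difficulty the paper engineers away: its Algorithm for symmetry breaking moves robots toward \emph{face centers} rather than along edges, so that all candidate destinations cluster in small regular $k$-gons around the vertices of the dual polyhedron; the proof of Lemma~\ref{lemma:break} then argues about what the selected set $S$ \emph{is} as a point set (if $\gamma(S)$ were 3D, $S$ would have to contain a regular tetrahedron, octahedron, cube, or dodecahedron as an orbit, and $b(S)=b(D)$ forces at most one point per cluster, which visibly cannot form any of these), an argument that is immune to the conjugation issue because it never compares axes of the new group with axes of the old one. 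Your proposal would need either to switch to the face-center move and reproduce that clustering argument, or to supply the missing tilted-axis exclusion for edge moves; the paper's own discussion of the go-to-midpoint algorithm on the icosahedron (where adversarial edge choices retain the 3D group $T$) is a warning that edge moves can conspire to keep 3D symmetry, so this step cannot be waved through.

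A secondary, smaller gap is in your landing phase for case (i): when $\gamma(P)$ is dihedral (or $C_2$), two robots in the same orbit can have the \emph{same} perpendicular foot on the agreed plane (they are exchanged by a secondary $2$-fold rotation), and such a pair may have indistinguishable views up to handedness, so ``concentric circles of strictly decreasing radius'' per orbit does not by itself give them distinct targets. The paper resolves exactly this collision with a chirality trick (Function \texttt{SelectDestination}): since both local frames are right-handed but their $z$-axes point to opposite sides of the plane, their induced clockwise orientations on the plane are opposite, so rotating a canonical point of a small circle ``clockwise'' by $\pi/2$ sends the two robots to distinct points. You flag this subtlety but leave it unresolved; some such orientation-based tie-breaking is needed to make condition (b) of the problem definition hold.
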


Since non-oblivious FSYNC robot can execute any algorithm 
for oblivious FSYNC robots with ignoring its memory contents,
we have the following theorem. 

\begin{theorem}
\label{theorem:suf-power}
Let $P$ and $\{ P_1, P_2, \ldots, P_m \}$ be an initial configuration
and the $\gamma(P)$-decomposition of $P$, respectively.
Then non-oblivious FSYNC robots can form a plane from $P$ if either
(i) $\gamma(P)$ is a 2D rotation group, or 
(ii) $\gamma(P)$ is a 3D rotation group
and there is a subset $P_i$ such that $|P_i| \not\in \{12, 24, 60\}$. 
\end{theorem}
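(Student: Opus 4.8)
The plan is to prove this non-oblivious statement by reduction to its oblivious counterpart, Theorem~\ref{theorem:suf}, and then to sketch how I would establish that oblivious version, since the reduction alone is vacuous without it. The reduction itself is immediate: a non-oblivious robot is strictly more powerful than an oblivious one, because in every Compute phase it may simply discard the contents of its local memory and output a value depending only on the snapshot just taken. Hence any oblivious FSYNC algorithm $\psi$ that forms a plane from $P$ is also a legal non-oblivious algorithm, and under this usage its execution coincides with the oblivious one regardless of the initial memory contents. Thus it suffices to exhibit an oblivious algorithm under hypotheses (i) or (ii), which is exactly the content of Theorem~\ref{theorem:suf}.

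For the oblivious algorithm I would first dispose of case (i). When $\gamma(P)$ is a cyclic group $C_k$ or a dihedral group $D_\ell$, the arrangement of rotation axes contains a distinguished single axis (the $k$-fold axis of $C_k$) or a principal axis (for $D_\ell$, which remains recognizable even when $\ell=2$ by Property~\ref{property:d2-principal}). Every robot computes this axis from its own snapshot, all agree on the plane $\Pi$ through $b(P)$ perpendicular to it, and then a landing algorithm assigns each robot a distinct foot on $\Pi$. The only delicacy is avoiding multiplicities while landing; I would order the subsets via the total ordering of the $\gamma(P)$-decomposition (Theorem~\ref{theorem:decomposition}, Corollary~\ref{lemma:ordering}) and spread robots to distinct targets on $\Pi$ accordingly.

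For case (ii) the substance is symmetry breaking. By Table~\ref{table:vt-sets}, the hypothesis that some $|P_i|\not\in\{12,24,60\}$ forces that $P_i$ to be a regular tetrahedron, a regular octahedron, a cube, a regular dodecahedron, or an icosidodecahedron. I would select the smallest such $P_i$ (consistently, using the agreed total order) and apply a \emph{go-to-center} strategy analogous to the go-to-midpoint scheme of Subsection~\ref{subsec:idea}: each robot of that $P_i$ deterministically chooses an incident face or edge and moves slightly toward its center, stopping an $\epsilon$ fraction short. The crucial claim is that the resulting configuration $P'$ has a 2D rotation group $\gamma(P')$, after which case (i) applies. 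To argue this I would observe that the candidate destinations form a transitive set regarding the current 3D group, yet the occupied points (only $|P_i|$ of them) cannot constitute a full orbit realizing a 3D group; for the tetrahedron, octahedron and cube this is reinforced by $|P_i|<12$, while for the dodecahedron ($20$) and icosidodecahedron ($30$) I would check that the foldings forced by Lemma~\ref{lemma:folding} are incompatible with a 3D rotation group after the perturbation.

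The hard part is precisely this last verification: proving that $\gamma(P')$ genuinely drops to a 2D group in every case and under every admissible deterministic tie-breaking, rather than accidentally collapsing to a smaller 3D group, as the naive go-to-midpoint rule does on the regular icosahedron (Figure~\ref{fig:sb-icosa}). One must also control the interaction between the perturbed $P_i$ and the other, possibly 3D-symmetric, subsets $P_j$, keep $\epsilon$ small enough that the smallest enclosing ball and the ordering of subsets stay fixed, and guarantee that no two robots collide during either the breaking or the landing phase. I expect essentially all of the genuine work to lie in this case analysis over the five polyhedra and their foldings.
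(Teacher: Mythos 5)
Your proposal is correct and takes essentially the same route as the paper: the paper's entire proof of this theorem is the same one-line reduction, namely that non-oblivious FSYNC robots can execute the oblivious algorithm of Theorem~\ref{theorem:suf} while ignoring their memory contents, so executions coincide regardless of initial memory. Your supplementary sketch of Theorem~\ref{theorem:suf} also mirrors the paper's actual construction (agreeing on the plane perpendicular to the single or principal axis, go-to-center symmetry breaking over the five polyhedra, then collision-free landing), and the case analysis you defer is precisely the content of the paper's Lemma~\ref{lemma:break}, with the interaction between the perturbed subset and the remaining subsets handled there by first shrinking the chosen subset inside the innermost empty ball (Algorithm~\ref{alg:sparse}).
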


The proposed plane formation algorithm consists of
the symmetry breaking phase and the landing phase. 
A very rough idea behind the plane formation algorithm is the following:
If $\gamma(P)$ is a 2D rotation group, 
since there is a single rotation axis or a principal axis,
which the robots can recognize, 
the robots can agree on the plane perpendicular 
to this axis and containing $b(P)$ 
and proceed to a landing phase to land on distinct points on the plane.
When $\gamma(P)$ is $C_1$, the target plane is defined by
$P_1$ (i.e., single robot), its meridian robot, and $b(P)$. 

Suppose otherwise that $\gamma(P)$ is a 3D rotation group.
Then there is at least one element $P_i$ in the
$\gamma(P)$-decomposition $\{P_1, P_2, \ldots, P_m\}$ of $P$ 
such that $|P_i| \not\in \{12, 24, 60 \}$.
That is, $|P_i| < |\gamma(P_i)|$ holds 
and all robots in $P_i$ are on some rotation axes of $\gamma(P_i)$ from
Lemma~\ref{lemma:folding}. 
The symmetry breaking phase moves the robots in $P_i$ 
so that no robot is on the rotation axes of $\gamma(P_i)$.
This move cannot maintain $\gamma(P_i)$, 
since otherwise if $\gamma(P_i)$ is maintained in the new configuration,
the folding of any point would be 1 regarding $\gamma(P_i)$,
a contradiction. 
Such $P_i$ forms a regular tetrahedron, 
a regular octahedron, a cube,
a regular dodecahedron, or an icosidodecahedron 
by Table~\ref{table:vt-sets}.\footnote{As we will mention later, 
we assume $b(P) \not\in P$ for the simplicity of the algorithm.}
The symmetry breaking phase breaks the symmetry 
of these (semi-)regular polyhedra 
and as a result a configuration with a 2D rotation group yields.

The proposed algorithm solves the plane formation problem
in at most three cycles. 
The first cycle completes some preparations for the symmetry breaking
algorithm. 
The second cycle realizes the symmetry breaking phase by
translating the current configuration with a 3D rotation
group into another configuration with a 2D rotation group. 
The robots execute a symmetry breaking algorithm
similar to the ``go-to-midpoint'' algorithm in
Subsection~\ref{subsec:idea}. 
Then they agree on the plane that is perpendicular to the single
rotation axis (or the principal axis) and contains the
center of the smallest enclosing ball of themselves. 
The third cycle completes the landing phase. 
The landing algorithm we use in the third phase is conceptually easy
because the robots are FSYNC,
but contains some technical subtleties to land the robots 
to distinct positions on the plane.

The proposed algorithm consists of three algorithms
Algorithms~\ref{alg:sparse}, \ref{alg:break}, and \ref{alg:landing}, 
each of which consumes a single Look-Compute-Move cycle.
To formally describe these algorithms,
we define three conditions $T_1, T_2$, and $T_3$ on configuration $P$.
\begin{eqnarray*}
T_1(P) &=& (\gamma(P)~\text{is a 3D rotation group}) \Rightarrow 
(|P_1| \not\in \{12, 24, 60\}) \\ 
T_2(P) &=& (\gamma(P)~\text{is a 2D rotation group}) \\ 
T_3(P) &=& (\text{there is a plane $F$ such that $P \subset F$})
\end{eqnarray*}

Robot system $R$ solves the plane formation problem
if it reaches a configuration $P$ satisfying $T_3(P)$. 
The proposed algorithm guarantees that such $P$ satisfies $|P|=n$. 
The preparation algorithm (Algorithm~\ref{alg:sparse}) 
is executed in configuration $P$, 
if and only if $|P| = n$ and $\neg T_1(P)$ hold 
and a configuration $P'$ satisfying $T_1(P')$ and $|P'| = n$ yields.
The symmetry breaking algorithm (Algorithm~\ref{alg:break}) 
is executed in configuration $P'$,
if and only if $|P'| = n$ and $(T_1(P') \wedge \neg T_2(P'))$ holds 
and a configuration $P''$ satisfying $T_2(P'')$ and $|P''| = n$ yields.
Finally the landing algorithm (Algorithm~\ref{alg:landing}) 
is executed in configuration $P''$,
if and only if $|P''| = n$ and $(T_2(P'') \wedge \neg T_3(P''))$ hold 
and a configuration $P'''$ satisfying $T_3(P''')$ and $|P'''| = n$ yields.

It is worth emphasizing that since $T_j(P)$ for $j = 1, 2, 3$ does 
not depend on the local coordinate system $Z_i$ of a robot $r_i$. 
Since  $\neg T_1(P)$ implies $\neg T_2(P) \wedge \neg T_3(P)$
and $\neg T_2(P)$ implies $\neg T_3(P)$, 
(1) exactly one of the three algorithms is executed by the robots at any
configuration $P$ unless $T_3(P)$ holds and
(2) none of them is executable at any configuration $P$ if $T_3(P)$ holds;
the plane formation algorithm then terminates.

We formally define the set of terminal configurations of the proposed
algorithm. 
A configuration $P$ is a terminal configuration if it satisfies
$T_3(P)$.
For any execution $P(0), P(1), P(2), \cdots$ of the proposed algorithm,
if $P(t)$ is the first configuration that satisfies $T_3(P(t))$,
then the robots do not move thereafter from the definition of $T_1$,
$T_2$, and $T_3$. Thus the robots can easily agree on the termination
of the proposed algorithm. 

Note that if an initial configuration $P$ satisfies $T_3(P)$,
then the execution immediately terminates,
solving the plane formation problem trivially.

Although we defined an algorithm as a function $\psi$ from the set 
of configurations to a point in Subsection~\ref{SSmodel},
we mainly use English to describe it in what follows,
since an English description is usually more readable than the
mathematically defined function.

Recall that $P = \{ p_1, p_2, \ldots , p_n \}$ is a configuration,
where $p_i$ is the position of robot $r_i$ in $Z_0$.
Robot $r_i$ observes it in $Z_i$,
i.e., $r_i$ gets $Z_i(P) = \{ Z_i(p_1), Z_i(p_2), \ldots , Z_i(p_n) \}$
as its local observation.
However, $r_i$ can recognize its relative position in $P$,
since $Z_i(p_i) = (0,0,0)$ always holds.
For example, $r_i$ can decide if it is located at the center of $B(P)$.
In the following, 
we frequently use a robot $r_i$ and its position $p_i$ interchangeably,
that is, ``robot $p$'' means the robot located at a point $p$ 
and ``the robots in $Q \subseteq P$'' means
those located in a set of points $Q$.

For the simplicity of the algorithm, 
we assume that initial configuration $P$ satisfies 
$b(P) \not\in P$  since 
trivially the robots can translate any configuration $P$ such that 
$b(P) \in P$ to another configuration $P'$ such that $b(P') \not\in P'$ 
in one cycle 
by the robot on $b(P)$ moving to some point on the sphere centered at $b(P)$ 
and with radius $I(P)/2$.
From the resulting configuration $\gamma(P')$, 
the robots can form a plane as shown in the following
since $\gamma(P')$ is cyclic. 

\subsubsection{Algorithm for preparation}

The purpose of the preparation phase is to make the robots forming
one of the five (semi-)regular polyhedra to shrink toward the center of
the smallest enclosing ball of themselves,
so that the symmetry breaking algorithm is executed by these robots
with keeping the smallest enclosing ball. 
In a configuration $P$ that does not satisfy $T_1(P)$, 
the robots execute Algorithm~\ref{alg:sparse} and let $P'$ be a
resulting configuration. 
Because $P$ satisfies the condition in Theorem~\ref{theorem:suf}, 
if $\gamma(P)$ is a 3D rotation group, 
there is an element $P_i$ with $|P_i| \not\in \{12, 24, 60\}$,
where $\{ P_1, P_2, \ldots, P_m \}$ is 
the $\gamma(P)$-decomposition of $P$. 
Recall that $\{ P_1, P_2, \ldots, P_m \}$ is 
sorted so that $P_1$ is on $I(P)$. 
Algorithm~\ref{alg:sparse} selects the smallest index $s$ 
such that $|P_s| \not\in \{12, 24, 60\}$ and 
shrinks $P_s$ by making each robot $p_i \in P_s$ 
to move to a point $d$ on line segment $\overline{p_i b(P)}$,
where $dist(d, b(P)) = rad(I(P))/2$. 
Thus the robots form new innermost ball in the resulting
configuration $P'$ and $P'$ satisfies $T_1(P')$.
We note that there is no robot on the track of robots in $P_s$
because $s$ is the minimum index and $P'$ contains no multiplicity. 

This preparation phase guarantees that the symmetry breaking in
the second phase occurs on $I(P)$ and 
keeps the center of the smallest enclosing circle of 
the robots unchanged 
when there is some $P_j$ ($j \neq s$). 

\begin{algorithm}[t]
\caption{Preparation algorithm for robot $r_i$}
\label{alg:sparse} 
\begin{tabbing}
xxx \= xxx \= xxx \= xxx \= xxx \= xxx \= xxx \= xxx \= xxx \= xxx
\kill 
{\bf Notation} \\ 
\> $P$: Current configuration observed in $Z_i$. \\ 
\> $\{P_1, P_2, \ldots, P_m\}$: $\gamma(P)$-decomposition of $P$. \\ 
\> $dist(p,q)$: Distance between two points $p$ and $q$ in $Z_i$. \\ 
{\bf Precondition} \\ 
\> $\neg T_1(P)$ \\ 
\\ 
{\bf Algorithm} \\ 
\> Let $P_s$ be the element of the $\gamma(P)$-decomposition of
 $P$ \\ 
\>  that has the smallest index among the elements with
 $|P_s| \in \{12, 24, 60\}$. \\ 
\> {\bf If} $p_i \in P_s$ {\bf then} \\
\> \> Move to the interior of $I(P)$ 
to a point $d$ on line segment $\overline{p_i b(P)}$,  \\ 
\> \> where $dist(d, b(P)) = rad(I(P))/2$. \\ 
\> {\bf Endif} 
\end{tabbing}
\end{algorithm}

\begin{lemma}
\label{lemma:sparse} 
Let $P$ be a configuration that satisfies $\neg T_1(P)$.
 Then the robots execute Algorithm~\ref{alg:sparse} in $P$ 
and suppose that a configuration $P'$ yields as a result.
Then $T_1(P')$ holds.
\end{lemma}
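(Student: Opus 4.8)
The plan is to unpack $\neg T_1(P)$ and then track exactly what Algorithm~\ref{alg:sparse} does to the $\gamma(P)$-decomposition. By the definition of $T_1$, the hypothesis $\neg T_1(P)$ means $\gamma(P)$ is a 3D rotation group and $|P_1| \in \{12,24,60\}$. Since we are in the sufficiency case, $P$ satisfies the hypothesis of Theorem~\ref{theorem:suf}, so some element of the decomposition has cardinality not in $\{12,24,60\}$; by Lemma~\ref{lemma:cardinality} and Table~\ref{table:vt-sets} such an element has cardinality in $\{4,6,8,20,30\}$, i.e. it is one of the five breakable (semi-)regular polyhedra. First I would fix $P_s$ as the element the algorithm selects, namely the smallest index with $|P_s| \not\in \{12,24,60\}$ (the reading in the surrounding prose, which is the only choice that can make $T_1(P')$ hold); note that $s \geq 2$, because $|P_1| \in \{12,24,60\}$ disqualifies $P_1$ as a candidate, so at least one outer shell survives untouched.

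Next I would describe the move precisely and check that $P'$ is a legitimate configuration with $|P'| = n$. Each $p \in P_s$ is sent to the radial point $d$ with $dist(d,b(P)) = rad(I(P))/2$, so the move restricted to $P_s$ is the uniform contraction toward $b(P)$ with ratio $rad(I(P))/(2\,rad(B(P_s)))$. This contraction is a similarity, hence injective, so its image $P_s'$ consists of $|P_s|$ distinct points on the sphere of radius $rad(I(P))/2$. Every robot outside $P_s$ is unmoved and lies at radius at least $rad(I(P))$, so $P_s'$ is disjoint from the rest and strictly interior to $I(P)$; therefore $P' = P_s' \cup (P \setminus P_s)$ has no multiplicity and $|P'| = n$. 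En-route collisions are irrelevant since moves are atomic.

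The heart of the argument is to identify $\gamma(P')$ and the first element $P_1'$ of its decomposition. The contraction toward $b(P)$ commutes with every rotation about $b(P)$, and $\gamma(P)$ leaves each shell (in particular $P_s$) invariant, so every $g \in \gamma(P)$ maps $P'$ onto itself; hence $\gamma(P) \preceq \gamma(P')$. Since every supergroup in $\mathbb{S}$ of a 3D group is again 3D, $\gamma(P')$ is a 3D rotation group, so its unique fixed point is the common intersection of its axes, giving $b(P') = b(P)$ and a decomposition into concentric shells about $b(P)$. Because rotations preserve distance from $b(P')$, any $\gamma(P')$-orbit stays on a single sphere; the sphere of radius $rad(I(P))/2$ meets $P'$ in exactly the points of $P_s'$, so the orbit of any seed in $P_s'$ equals $P_s'$. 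Thus $P_s'$ is a single transitive element, and being strictly innermost it is $P_1'$. Consequently $|P_1'| = |P_s| \not\in \{12,24,60\}$, which is precisely $T_1(P')$.

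The main obstacle I anticipate is this last identification: confirming that the contraction does not accidentally enlarge the symmetry so as to merge $P_s'$ with another shell or move the center, so that $P_s'$ really is the unique innermost element $P_1'$. The radius-preservation property of rotations about $b(P')$ is the key lever, together with the fact that $s \geq 2$ guarantees a surviving outer shell that pins $b(P') = b(P)$. I would also explicitly flag (and adopt) the intended reading that the selected $P_s$ is the breakable element with $|P_s| \not\in \{12,24,60\}$, since only that choice yields $|P_1'| \not\in \{12,24,60\}$ and hence $T_1(P')$.
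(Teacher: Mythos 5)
Your proposal is correct and takes essentially the same route as the paper's own proof: select the smallest-indexed element $P_s$ with $|P_s| \not\in \{12,24,60\}$ (you rightly adopt the prose reading over the pseudocode's typo), shrink it radially to radius $rad(I(P))/2$, note that no multiplicity arises since the interior of $I(P)$ was empty, and conclude that the shrunk set becomes the innermost element $P_1'$ of the new decomposition, so $|P_1'| = |P_s| \not\in \{12,24,60\}$ gives $T_1(P')$. If anything, you are more careful than the paper, which simply asserts $\gamma(P') = \gamma(P)$; your weaker containment $\gamma(P) \preceq \gamma(P')$, combined with the radius-preservation argument identifying $P_s'$ as a single innermost $\gamma(P')$-orbit, suffices for the conclusion.
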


\begin{proof}
Since $P$ satisfies $\neg T_1(P)$,
all robots execute Algorithm~\ref{alg:sparse} in $P$ and   
$\gamma(P) \in \{T,O,I\}$ and $|P_1| \not\in \{12, 24, 60\}$ hold.
There exists at least an element $P_s$ in the $\gamma(P)$-decomposition
 of $P$ that satisfies $|P_s| \not\in \{12, 24, 60\}$.
 Thus $P_s$ is uniquely determined and the robots can agree $P_s$ from
 Theorem~\ref{theorem:decomposition}.

 Because $\gamma(P) \in \{T, O, I\}$,
 the rigid movement of FSYNC robots does not change the center of the 
smallest enclosing ball. 
The movement of robots in $P_s$ keep rotation axes of $\gamma(P)$, 
i.e., $\gamma(P') = \gamma(P)$, 
and for $\gamma(P')$-decomposition $\{ P_1', P_2', \ldots, P_m' \}$ of $P'$, 
the robots in $P_s$ now form $P_1'$, 
i.e., $|P_1'| \not\in \{12,24, 60\}$. 
Thus, $T_1(P')$ holds.
Because there is no robot in the interior of $I(P(0))$ in $P(0)$, 
$P'$ contains no multiplicity. 
 \shortqed
\end{proof}

\subsubsection{Algorithm for symmetry breaking}

The purpose of the symmetry breaking phase is to translate configuration
$P$ that satisfies $T_1(P)$ and $\neg T_2(P)$ to a configuration $P'$
whose rotation group $\gamma(P')$ is a 2D rotation group. 
In configuration $P$ that satisfies $(T_1(P) \wedge \neg T_2(P))$, 
the robots execute Algorithm~\ref{alg:break} and let $P'$ be a resulting
configuration. 
Let $\{P_1, P_2, \ldots, P_m\}$ be the $\gamma(P)$-decomposition of $P$. 
Because $T_2(P)$ does not hold, $\gamma(P)$ is a 3D rotation group.
Because $T_1(P)$ holds, $|P_1| \not\in \{12, 24, 60\}$, 
i.e., $P_1$ is either a regular tetrahedron, 
a regular octahedron, a cube, a regular dodecahedron
or an icosidodecahedron by Table~\ref{table:vt-sets}.
Algorithm~\ref{alg:break} sends the robots in $P_1$ 
to points that are not on any rotation axis of $\gamma(P_1)$. 
Specifically, Algorithm~\ref{alg:break} makes the robots in $P_1$ select an
adjacent face of the polyhedron that $P_1$ forms
and approach the center, but stops the robots $\epsilon$ before the
center.
The exceptional case is when $P_1$ form a regular icosidodecahedron
and each robot of $P_1$ selects an adjacent regular pentagon
face.
We will show that the rotation group of any
resulting configuration $P'$ is a 2D rotation group
and the robots succeed in breaking their symmetry. 

The distance $\epsilon$ is selected so that
after the movement, the robots gather around some 
vertices of the dual polyhedron of $P_1$.
(When $P_1$ forms the icosidodecahedron, the robots gather
around the vertices of a regular icosahedron.) 
For the simplicity of the correctness proof, we use this property. 
Observe that unless we select $\epsilon$ properly,
we do not obtain such polyhedra.
For example, consider the case where $P_1$ forms a unit cube.
In this case, robots in $P_1$ move to their destinations
by moving on the face of a selected face.
If $\epsilon = 1 - (1/\sqrt{2})$,
the set of destination points form a rhombicuboctahedron,
which is, in some sense, in between the cube and its dual
regular octahedron.
(See Figure~\ref{fig:rhombicuboctahedron}.) 
To avoid such configuration, we set
$\epsilon = \ell/100$ where $\ell$ is the length of
the edge of the uniform polyhedron that $P_1$ forms. 
Clearly, the robots in $P_1$ can agree on $\epsilon$ irrespective
of local coordinate systems. 

\begin{figure}[t]
\centering 
\includegraphics[width=3cm]{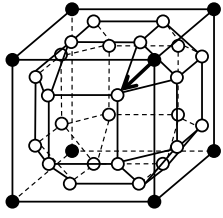}
 \caption{An example of execution of Algorithm~\ref{alg:break}. 
 Starting from a configuration $P$ where the robots form a a unit cube
 (black circles) with $\epsilon = 1 - (1/\sqrt{2})$,
 the set of all destinations (white circles) form a rhombicuboctahedron.} 
\label{fig:rhombicuboctahedron}
\end{figure}

\begin{algorithm}[t]
\caption{Symmetry breaking algorithm for robot $r_i$}
\label{alg:break} 
\begin{tabbing}
xxx \= xxx \= xxx \= xxx \= xxx \= xxx \= xxx \= xxx \= xxx \= xxx
\kill 
{\bf Notation} \\ 
\> $P$: Current configuration observed in $Z_i$.\\ 
\> $\{P_1, P_2, \ldots, P_m\}$: $\gamma(P)$-decomposition of $P$. \\
\> $\ell$: the length of an edge of the polyhedron that $P_1$ forms. \\ 
\> $\epsilon = \ell /100$. \\
{\bf Precondition} \\ 
\> $T_1(P) \wedge \neg T_2(P)$ \\
\\ 
{\bf Algorithm}  \\ 
\> {\bf If} $p_i \in P_1$ {\bf then} \\
\> \> {\bf If} $P_1$ is an icosidodecahedron {\bf then}\\ 
\> \> \> Select an adjacent regular pentagon face of $P_1$. \\ 
\> \> \> Destination $d$ is the point $\epsilon$ before the center 
 of the face on the line \\
\> \> \> from $p_i$ to the center. \\ 
\> \> {\bf Else} \\ 
\> \> \> // $P_1$ is a regular tetrahedron, a regular octahedron, a cube, 
 or a \\
\> \> \> // regular dodecahedron. \\ 
\> \> \> Select an adjacent face of $P_1$. \\ 
\> \> \> Destination $d$ is the point $\epsilon$ before the center 
 of the face on the line \\
\> \> \> from $p_i$ to the center. \\ 
\> \> {\bf Endif} \\ 
\> \> Move to $d$. \\  
\> {\bf Endif} 
\end{tabbing}
\end{algorithm}

\begin{lemma}
\label{lemma:break}
Let $P$ be a configuration that satisfies $T_1(P) \wedge \neg T_2(P)$. 
 Then the robots execute Algorithm~\ref{alg:break} in $P$ 
and suppose that a configuration $P'$ yields as a result.
Then $T_2(P')$ holds.
\end{lemma}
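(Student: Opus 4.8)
The plan is to reduce the claim about the whole configuration $P'$ to a claim about its innermost shell alone, and then to break the symmetry there by a counting argument. First I would record the structural consequences of the precondition: since $\neg T_2(P)$ forces $\gamma(P)$ to be a 3D rotation group and $T_1(P)$ forces $|P_1|\notin\{12,24,60\}$, Table~\ref{table:vt-sets} tells us that $P_1$ is a regular tetrahedron, regular octahedron, cube, regular dodecahedron, or icosidodecahedron, so $|P_1|\in\{4,6,8,20,30\}$, and by Lemma~\ref{lemma:folding} every robot of $P_1$ sits on a rotation axis of $\gamma(P_1)$. Algorithm~\ref{alg:break} moves only the robots of $P_1$, leaving the other elements fixed, so the outermost shell is unchanged and $b(P')=b(P)$; in particular every rotation axis of $\gamma(P')$ passes through $b(P)$.

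Next I would show that the moved set $P_1'$ is spherical and lies strictly inside $I(P)$. Taking $b(P)$ as the origin, a robot at a vertex $v$ moving toward a face center $c_f$ and stopping at distance $\epsilon$ from $c_f$ lands at $p'=c_f+\epsilon\,(v-c_f)/|v-c_f|$; since $c_f$ is orthogonal to the face plane while $v-c_f$ lies in that plane, we have $c_f\cdot(v-c_f)=0$ and hence $|p'|^2=|c_f|^2+\epsilon^2=a^2+\epsilon^2$, where $a$ is the common apothem of the selected (congruent) faces. Thus every robot of $P_1'$ is at the same radius $r'=\sqrt{a^2+\epsilon^2}<rad(I(P))$, so $P_1'$ is exactly the part of $P'$ strictly inside $I(P)$ and is preserved by every element of $\gamma(P')$. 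Consequently $\gamma(P')\preceq\gamma(P_1')$, and since every subgroup of a 2D rotation group is again a 2D rotation group (cyclic or dihedral), it suffices to prove that $\gamma(P_1')$ is a 2D rotation group.

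For that, suppose toward a contradiction that $\gamma(P_1')$ is a 3D rotation group. By the design property guaranteed by $\epsilon=\ell/100$, each robot sits at angular distance $\arctan(\epsilon/a)$ from the direction of a distinct dual vertex (face center) of $P_1$, so for $\epsilon$ this small the robots fall into well-separated clusters about distinct dual-vertex directions. Any $g\in\gamma(P_1')$ permutes these clusters and hence induces a symmetry of the dual polyhedron of $P_1$; for $\epsilon$ this small this forces $g\in\gamma(P_1)$, so that $\gamma(P_1')\preceq\gamma(P_1)\in\{T,O,I\}$. But the displacement $v-c_f$ lies in the face plane, so every robot of $P_1'$ is off every rotation axis of $\gamma(P_1)$, hence off every axis of $\gamma(P_1')$; its folding regarding $\gamma(P_1')$ is therefore $1$, and $P_1'$ is a disjoint union of $\gamma(P_1')$-orbits each of size $|\gamma(P_1')|\in\{12,24,60\}$. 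This makes $|P_1'|$ a positive multiple of $12$, contradicting $|P_1'|=|P_1|\in\{4,6,8,20,30\}$. Hence $\gamma(P_1')$ is 2D, and so is $\gamma(P')\preceq\gamma(P_1')$; that is, $T_2(P')$ holds.

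The main obstacle is the step that pins the axes of any hypothetical 3D symmetry of $P_1'$ to those of $\gamma(P_1)$: a priori the displaced points could accidentally form a regular polyhedron about a differently oriented set of axes. The clean way to exclude this is the clustering property above — robots concentrated near distinct dual-vertex directions for small $\epsilon$ — which converts the question into one about symmetries of the dual polyhedron. Making this fully rigorous amounts to a stability statement for finite rotation groups under $O(\epsilon)$ perturbations, and this is precisely why the algorithm fixes a concrete small $\epsilon=\ell/100$ and why, as the surrounding text notes, the clustering property is invoked to keep the case analysis over the five polyhedra (and over all admissible face selections) manageable.
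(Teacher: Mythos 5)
Your reduction steps are sound and match the paper's framing: only the robots of $P_1$ move, every destination lands at radius $\sqrt{a^2+\epsilon^2}$ strictly inside $I(P)$, hence $\gamma(P')$ acts on the moved set $P_1'$ and it suffices to show $\gamma(P_1')$ is 2D; and your closing counting argument (folding $1$ forces $|P_1'|$ to be a multiple of $|\gamma(P_1')|\in\{12,24,60\}$, contradicting $|P_1'|\in\{4,6,8,20,30\}$) would indeed finish the proof \emph{if} you could pin the axes of a hypothetical 3D group $\gamma(P_1')$ to axes of $\gamma(P_1)$. But that pinning step --- ``any $g\in\gamma(P_1')$ permutes the clusters, hence induces a symmetry of the dual polyhedron, hence $g\in\gamma(P_1)$'' --- is a genuine gap, and it is not merely unproven: it is false under the only constraints you actually use, namely that each destination is at distance exactly $\epsilon$ from its face center $c_F$ with offset perpendicular to $c_F$. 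Concretely, take $P_1$ a regular tetrahedron and let $\rho$ be a small rotation about one of its $2$-fold axes; all four face centers $c_F$ are equidistant from that axis, so $\rho$ displaces each of them by the same angle $\theta$. Choosing the rotation angle so that $\cos\theta = a/\sqrt{a^2+\epsilon^2}$, the four points $p_F = \bigl(\sqrt{a^2+\epsilon^2}/a\bigr)\,\rho(c_F)$ satisfy $(p_F-c_F)\cdot c_F = 0$ and $|p_F - c_F| = \epsilon$, i.e., they meet all of your constraints with one point per (singleton) cluster --- yet they form an \emph{exact} regular tetrahedron whose rotation group is the conjugate $\rho T \rho^{-1}$: its axes are not axes of $\gamma(P_1)$, its elements are not symmetries of the dual polyhedron (they move each $c_F$ only approximately onto another $c_{F'}$), and the four points lie \emph{on} its $3$-fold axes, so their folding is $3$, not $1$, and the divisibility punchline collapses. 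In other words, no stability statement of the kind you appeal to (``a rotation approximately permuting perturbed dual vertices must lie in $\gamma(P_1)$'') can hold at this level of generality.

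What saves the lemma --- and what your proof never invokes --- is the discreteness of the candidate set: the admissible destinations around each face center are not a whole circle of radius $\epsilon$ but the regular $k$-gon $U_F$ whose points are offset toward the \emph{vertices} of that face. The paper's proof runs the same contradiction, but at the key step it argues separately for each of the five polyhedra (its Cases A--E) that a selection $S\subset D$ with at most one point per $U_F$ (forced, as in your argument, by $b(S)=b(D)$ together with the small angular size of the clusters) can never contain a regular tetrahedron, octahedron, cube, or dodecahedron --- a finite, if tedious, geometric check over the discrete choices, which is exactly what rules out configurations like the one above. To repair your proof you would have to replace the cluster/stability claim by such a check, or by some other argument exploiting that offsets point only at face vertices; once that is in place your counting argument via Lemma~\ref{lemma:folding} is a clean way to conclude, but the hard content of the lemma lives in that step, not after it.
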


\begin{proof}
Let $\{ P_1, P_2, \ldots, P_m \}$ be the $\gamma(P)$-decomposition of $P$.
Since $T_2(P)$ does not hold, $\gamma(P) \in \{ T, O, I \}$. 
Since $T_1(P)$ holds,
$|P_1| \not\in \{ 12, 24, 60 \}$. 
Thus, as mentioned, $P_1$ is either a regular tetrahedron, 
a regular octahedron, a cube, a regular dodecahedron
or an icosidodecahedron by Table~\ref{table:vt-sets}.
The robots execute Algorithm~\ref{alg:break} in $P$. 

In Algorithm~\ref{alg:break}, only the robots in $P_1$ move.
Each robot $p$ in $P_1$ selects an adjacent
face $F$ of the polyhedron that $P_1$ forms 
and moves to $d$ which is at distance $\epsilon$ 
from the center $c(F)$ of $F$ on
line segment $\overline{p c(F)}$, 
with a restriction that $p$ selects a regular pentagon face  
if $P_1$ is an icosidodecahedron. 
Let $k$ be the number of points in $P_1$ incident on a face $F$, 
i.e., $F$ is a regular $k$-gon. 
Then these $k$ robots will form a small regular $k$-gon $U_F$
with the center being $c(F)$ and the distance from the center being $\epsilon$, 
if they all select $F$.
That is, letting $D$ be the set of points consisting
of the candidates for $d$ (for all $p \in P_1$),
$D$ consists of a set of regular $k$-gons $U_F$ congruent each other.

Let $\cal F$ be the set of faces of $P_1$ 
that can be selected by a robot in $P_1$.
(Thus $\cal F$ is a set of regular pentagons 
if $P_1$ is an icosidodecahedron.)
The centers $c(F)$ for $F \in {\cal F}$ 
form a regular polyhedron 
$P_1^d$ that is similar to the dual of $P_1$,
i.e., $P_1^d$ is $c({{\cal F}}) = \{ c(F): F \in {\cal F} \}$ 
(except for the case of icosidodecahedron).
 The convex hull of $D$ is obtained from the dual polyhedron $P_1^d$
 by moving each face of $P_1^d$ away from the center 
 with keeping the center. (See Figure~\ref{fig:expansion}.)
 Then the obtained new polyhedra consists of the moved faces of $P_1^d$
 and new faces formed by the separated vertices and the separated
 edges of $P_1^d$.
Figure~\ref{fig:fulldest} illustrates, for each $P_1$,
the set $D$ by small circles and 
$P_1^d$ as a large polyhedron containing all circles. 
Since the duals of the regular tetrahedron, the regular octahedron, 
the cube, and the regular dodecahedron are 
the regular tetrahedron, the cube, the regular octahedron, and 
the regular icosahedron, respectively, 
we call those convex hulls of $D$ 
$\epsilon$-expanded tetrahedron,
$\epsilon$-expanded cube,
$\epsilon$-expanded octahedron, and
$\epsilon$-expanded icosahedron.\footnote{
 The operation is also known as {\em cantellation}: 
 the convex hull of $D$ is obtained from the dual polyhedron $P_1^d$ 
 by truncating the vertices and beveling the edges. 
}
When $P_1$ is an icosidodecahedron,
although $P_1^d$ is a regular icosahedron,
$D$ is called an $\epsilon$-truncated icosahedron because it is
obtained by just truncating the vertices of $P_1^d$. 

\begin{figure}[t]
\centering 
\includegraphics[width=12cm]{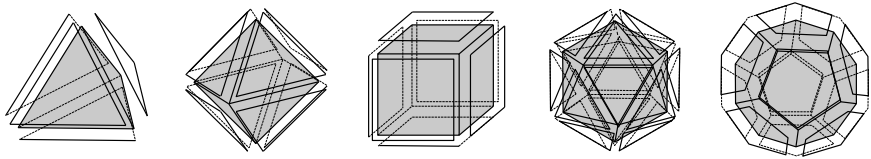}
\caption{Expansion of dual polyhedra.}
\label{fig:expansion}
\end{figure}

Specifically, Figure~\ref{fig:expanded-tetra} illustrates 
an $\epsilon$-expanded tetrahedron,
which corresponds to the convex hull of $D$
when $P_1$ is a regular tetrahedron.
Figure~\ref{fig:expanded-cube} illustrates
an $\epsilon$-expanded cube,
which corresponds to the convex hull of $D$
when $P_1$ is a regular octahedron.
Figure~\ref{fig:expanded-octa} illustrates
an $\epsilon$-expanded octahedron,
which corresponds to the convex hull of $D$
when $P_1$ is a cube.
Figure~\ref{fig:expanded-icosa} illustrates
an $\epsilon$-expanded icosahedron,
which corresponds to the convex hull of $D$
when $P_1$ is a regular dodecahedron.
Finally, Figure~\ref{fig:ticosa} illustrates
an $\epsilon$-truncated icosahedron,
which corresponds to the convex hull of $D$
when $P_1$ is an icosidodecahedron.

\begin{figure}[t]
\centering 
\subfigure[$\epsilon$-expanded tetrahedron]
{\includegraphics[width=3.5cm]{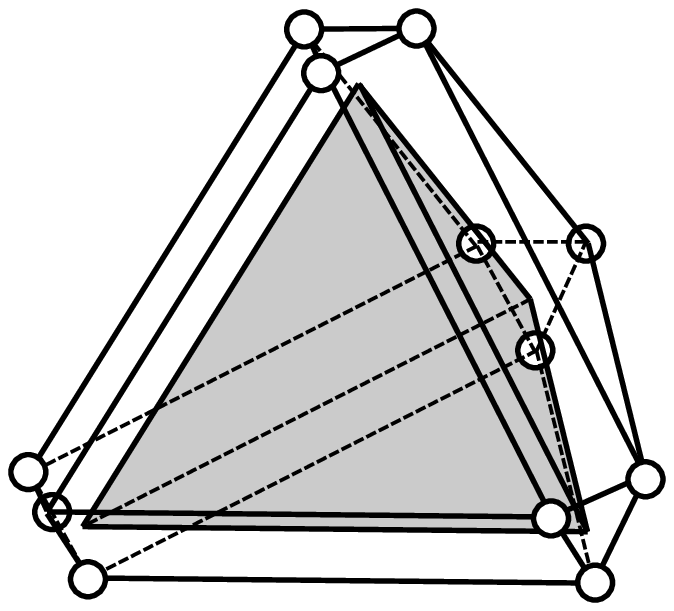}\label{fig:expanded-tetra}}
\hspace{3mm}
\subfigure[$\epsilon$-expanded cube]
{\includegraphics[width=3.5cm]{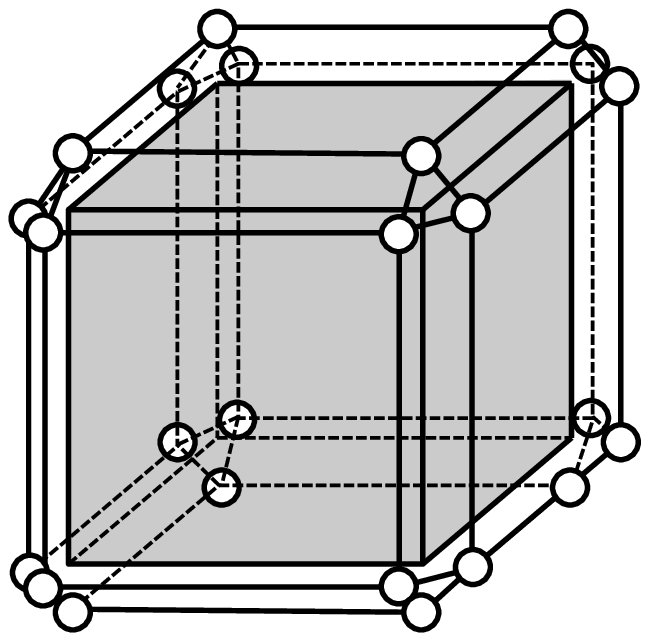}\label{fig:expanded-cube}}
\subfigure[$\epsilon$-expanded octahedron]
{\includegraphics[width=3.5cm]{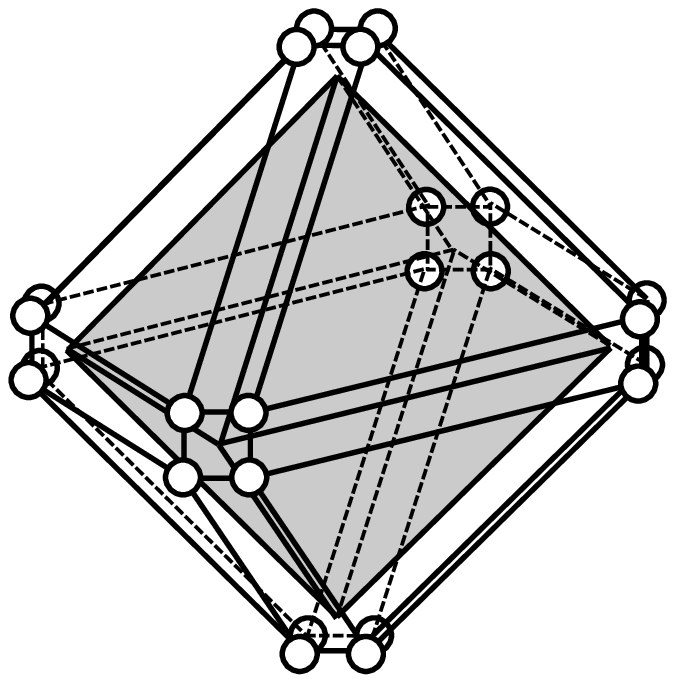}\label{fig:expanded-octa}}
\hspace{3mm}
\\ 
\subfigure[$\epsilon$-expanded icosahedron]
{\includegraphics[width=4cm]{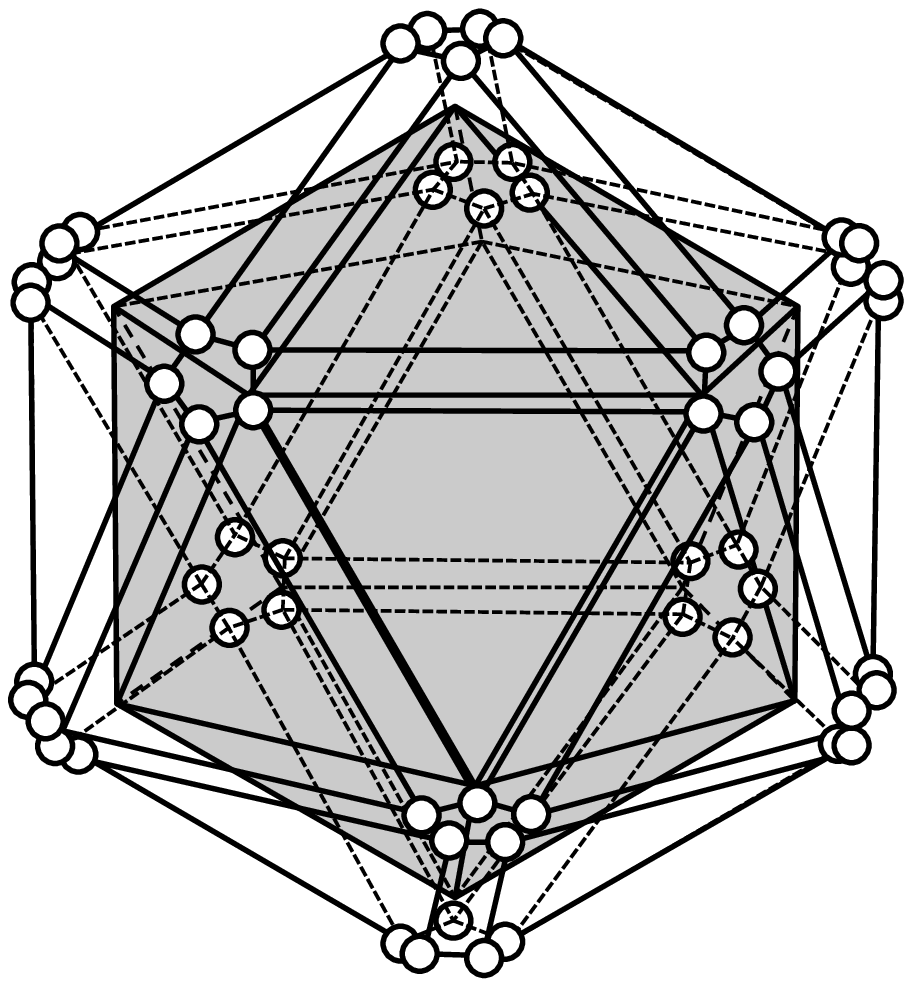}\label{fig:expanded-icosa}}
\hspace{3mm}
\subfigure[$\epsilon$-truncated icosahedron]
{\includegraphics[width=4cm]{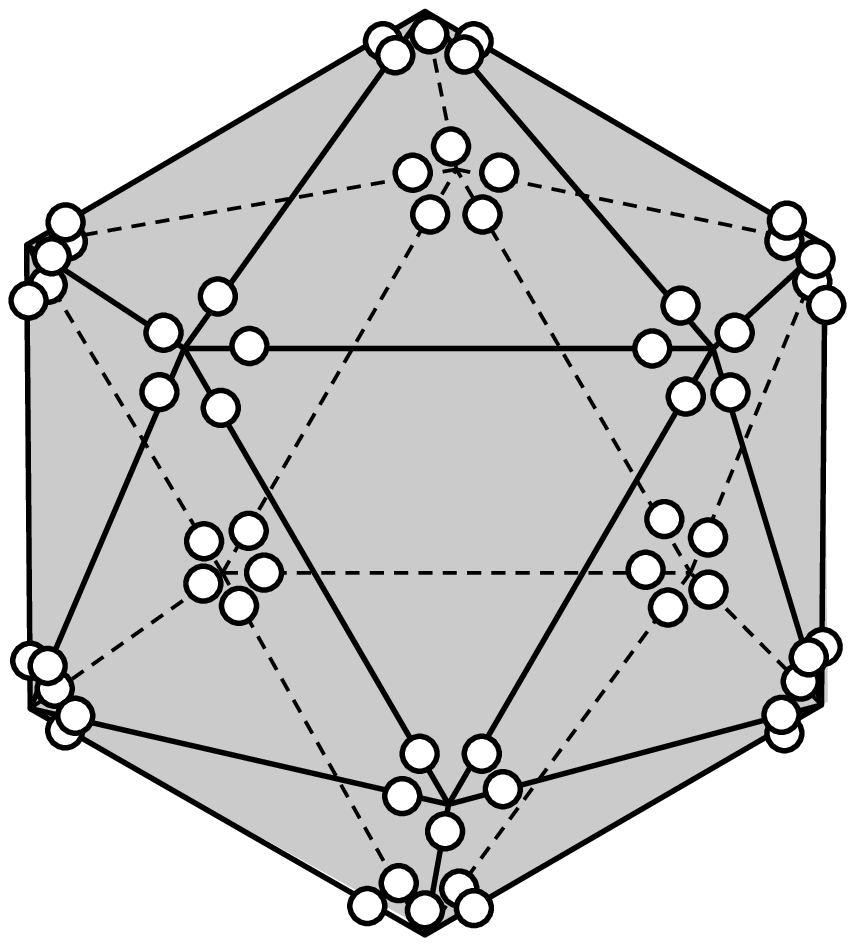}\label{fig:ticosa}}
\caption{Candidate set $D$ corresponding to $P_1$.}
\label{fig:fulldest}
\end{figure}

Let $S \subset D$ be any set selected by robots in $P_1$.
Thus $|S| = |P_1|$ holds.
Then it is sufficient to show that $\gamma(S)$ is a 2D rotation group.
To derive a contradiction,
suppose that there is an $S$ such that $\gamma(S)$ is a 3D rotation group.
We first claim $b(S) = b(D)$ and $B(S) = B(D)$. 
At least two points in $S$ are on the sphere of $B(S)$ 
since $|S| = |P_1| > 2$ 
and $S$ is contained in the sphere of $B(D)$ 
as a subset by definition.
If $B(S) \not= B(D)$,
the intersection of the spheres of $B(S)$ and $B(D)$ is a circle $C$ 
and indeed $S \subseteq C$, 
which implies that $\gamma(S)$ is a 2D rotation group.
Thus $B(S) = B(D)$ and $b(S) = b(D)$ hold.
For each of the polyhedra that $S$ can form, 
we now show by contradiction that $\gamma(S)$ is a 2D rotation group  
partly in a brute force manner. 

\noindent{\bf Case A: $P$ is a regular tetrahedron.~} 
The set of destinations $D$ forms an $\epsilon$-expanded 
tetrahedron. See Figure~\ref{fig:expanded-tetra}.
If $\gamma(S)$ is a 3D rotation group, 
$S$ is a regular tetrahedron,
since $|S| = |P_1| = 4$ and the size of any transitive set of
points regarding a 3D rotation group is larger than $4$
except the regular tetrahedron. 

By definition $c({\cal F})$ forms a regular tetrahedron 
and the points of $D$ are $\epsilon$ apart from them. 
Because $b(S) = b(D)$, if $S$ forms a regular tetrahedron,
at most one vertex is selected from $U_F$ for each 
$F \in {\cal F}$. 
Clearly, no such $4$-set forms a regular tetrahedron. 
Thus $\gamma(S)$ is a 2D rotation group for any $4$-set $S \subset D$.
 
\noindent{\bf Case B: $P$ is a regular octahedron.~} 
The set of destinations $D$ forms an $\epsilon$-expanded cube. 
See Figure~\ref{fig:expanded-cube}.
If $\gamma(S)$ is a 3D rotation group, 
because $|S| = 6$, 
$S$ is a regular octahedron,
since otherwise $S$ is a union of a regular tetrahedron and a $2$-set 
and $\gamma(S)$ is a 2D rotation group. 

By definition $c({\cal F})$ forms a regular cube 
and the points of $D$ are $\epsilon$ apart from them. 
Because $b(S) = b(D)$, if $S$ forms a regular octahedron,
at most one vertex is selected from $U_F$ for each
$F \in {\cal F}$. 
Obviously $S$ cannot be a regular octahedron,
 because all vertices of $D$ are around vertices of a cube. 
Thus $\gamma(S)$ is a 2D rotation group for any $6$-set $S \subset D$.

\noindent{\bf Case C: $P$ is a cube.~} 
The set of destinations $D$ forms an $\epsilon$-expanded octahedron.
See Figure~\ref{fig:expanded-octa}.
If $\gamma(S)$ is a 3D rotation group,
because $|S| = 8$, 
$S$ contains either a regular tetrahedron,  
a regular octahedron or a cube as a subset. 

By definition $c({\cal F})$ forms a regular octahedron 
and the points of $D$ are $\epsilon$ apart from them. 
Because $b(S) = b(D)$, if $S$ forms a cube or a regular tetrahedron,
at most one vertex is selected from $U_F$ for each
$F \in {\cal F}$. 
Obviously $S$ is neither a cube nor a regular tetrahedron
because all vertices of $D$ are around vertices of a cube. 
Additionally, like (B), $S$ cannot contain a regular octahedron. 
Thus $\gamma(S)$ is a 2D rotation group for any $8$-set $S \subset D$.
 
\noindent{\bf Case D: $P$ is a regular dodecahedron.~} 
The set of destinations $D$ forms an $\epsilon$-expanded icosahedron. 
See Figure~\ref{fig:expanded-icosa}.
If $\gamma(S)$ is a 3D rotation group,
because $|S| = 20$, 
$S$ contains either a regular tetrahedron,
a regular octahedron, or a cube as a subset.
From Table~\ref{table:vt-sets}, 
$S$ may contain a $12$-set that is transitive regarding $T$,
but in this case there remains $8$ vertices that form one of these
three regular polyhedra. 

By definition $c({\cal F})$ forms a regular icosahedron 
and the points of $D$ are $\epsilon$ apart from them. 
Because $b(S) = b(D)$, if $S$ forms a regular tetrahedron,
a cube, or a regular octahedron, 
at most one vertex is selected from $U_F$ for each
$F \in {\cal F}$. 
Obviously $S$ is not a cube, nor a regular tetrahedron, nor a
regular octahedron, 
because all vertices of $D$ are around vertices of a regular 
 icosahedron.\footnote{Remember that there are five embeddings of
 a cube to a regular dodecahedron. However, we cannot embed a cube
 into its dual regular icosahedron.} 
Thus $\gamma(S)$ is a 2D rotation group for any $20$-set $S \subset D$.

\noindent{\bf Case E: $P$ is an icosidodecahedron.} 
The set of destinations $D$ forms an $\epsilon$-expanded icosahedron.
See Figure\ref{fig:ticosa}.
If $\gamma(S)$ is a 3D rotation group, 
because $|S| = 30$, 
$S$ contains a regular tetrahedron, a regular octahedron, 
a cube, or a regular dodecahedron as a subset.
From Table~\ref{table:vt-sets},
$S$ may contain a transitive set of points
whose size is $12$, $20$, or $24$.
When $S$ contains a transitive $12$-set,
the remaining $18$ points are divided into
(i) $12$-set and $6$-set, (ii) $8$-set, $6$-set, and $4$-set,
or (iii) three $6$-sets. 
When $S$ contains a transitive $20$-set,
its rotation group is $I$ and there is no transitive set of
points with less than $10$ pints regarding $I$.
When $S$ contains a transitive $24$-set,
its rotation group is $O$ and 
the remaining $6$ points form a regular octahedron. 
Thus we check these four regular polyhedra.
 
By definition $c({\cal F})$ forms a regular icosahedron 
and the points of $D$ are $\epsilon$ apart from them. 
Because $b(S) = b(D)$, if $S$ forms a regular tetrahedron,
a cube, a regular octahedron, or a regular dodecahedron, 
at most one vertex is selected from $U_F$ for each
$F \in {\cal F}$. 
Obviously $S$ is not any one of these uniform polyhedra, 
because all vertices of $D$ are around vertices of a regular 
icosahedron.
Thus $\gamma(S)$ is a 2D rotation group for any $30$-set $S \subset D$.

We conclude that $\gamma(S)$ is a 2D rotation group
for any possible $|P_1|$-subset $S$ of $D$. 
If $\gamma(P)$-decomposition of $P$ is a singleton $\{P_1\}$,
the lemma holds. 
Otherwise, the robots forming $S$ is on $I(P')$ in a resulting 
configuration $P'$ and 
$\gamma(P')$ acts on $S$. 
Thus $\gamma(P')$ is a subgroup of $\gamma(S)$ and the lemma holds. 
\shortqed 
\end{proof}

We conclude this section with the following lemma for any resulting configuration of
Algorithm~\ref{alg:break}. 

\begin{lemma}
\label{lemma:breakline}
Let $P$ be a configuration that satisfies $T_1(P) \wedge \neg T_2(P)$. 
Then the robots execute Algorithm~\ref{alg:break} in $P$ 
and suppose that a configuration $P'$ yields as a result. 
Then $P'$ is not on a line. 
\end{lemma}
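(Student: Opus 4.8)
The plan is to exploit the single feature of Algorithm~\ref{alg:break} that survives the movement: every candidate destination is equidistant from the common center $b(P)$, so the relocated robots land on one sphere. Since a line can meet a sphere in at most two points, while the moving set has at least four members, collinearity is impossible. The key steps are (i) to bound $|P_1|$ from below, (ii) to observe that only $P_1$ moves, (iii) to establish that the destination set lies on a sphere about $b(P)$, and (iv) to invoke the line--sphere intersection bound.

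First I would record that, under the precondition $T_1(P) \wedge \neg T_2(P)$, the group $\gamma(P)$ is a 3D rotation group and $P_1$ is a regular tetrahedron, a regular octahedron, a cube, a regular dodecahedron, or an icosidodecahedron by Table~\ref{table:vt-sets}; in every case $|P_1| \geq 4$. Because Algorithm~\ref{alg:break} moves only the robots in $P_1$, the resulting configuration satisfies $S \subseteq P'$, where $S$ is the set of points these robots occupy and $|S| = |P_1|$. Hence it suffices to prove that $S$ is not contained in any line; this also covers the singleton case $\{P_1\}$, in which $P' = S$ exactly.

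Next I would establish that $S$ lies on a sphere centered at $b(P)$. This is precisely the sphericity of the candidate set $D \supseteq S$ already used in the proof of Lemma~\ref{lemma:break}, so it may simply be cited; for a self-contained argument, note that each destination $d$ lies in the plane of its selected face at distance $\epsilon$ from the face center $c(F)$, and that for each such face the segment $\overline{b(P)\,c(F)}$ is perpendicular to the face. By the Pythagorean theorem, $dist(b(P), d)^2 = dist(b(P), c(F))^2 + \epsilon^2$, and since the selected faces are mutually congruent (and in the icosidodecahedron case are the congruent pentagon faces) they share the same value of $dist(b(P), c(F))$. Thus every point of $D$, and in particular every point of $S$, is at one and the same distance from $b(P)$.

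Finally, a line intersects a sphere in at most two points, whereas $|S| = |P_1| \geq 4 > 2$, so $S$ cannot lie on a line; as $S \subseteq P'$, the configuration $P'$ is not on a line. I do not expect a genuine obstacle here: the only points needing care are the congruence of all selectable faces and the perpendicularity of $\overline{b(P)\,c(F)}$ (both immediate from the symmetry of the five polyhedra and from the restriction to pentagon faces for the icosidodecahedron), together with the trivial remark that the bound $|P_1|\geq 4$ is what makes the two-point line--sphere intersection decisive.
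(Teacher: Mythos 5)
Your proof is correct and takes essentially the same route as the paper's: the paper's argument is exactly that the candidate destination set $D$ is spherical, that a line meets a sphere in at most two points, and that $|P_1| \geq 4$. Your explicit Pythagorean verification of sphericity (via congruent selected faces and the perpendicularity of $\overline{b(P)\,c(F)}$) merely fills in a detail the paper leaves implicit with ``From the definition, $D$ is spherical.''
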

\begin{proof}
 Let $\{P_1, P_2, \ldots, P_m\}$ be the $\gamma(P)$-decomposition of $P$
 and let $D$ be the set of candidate destinations of the robots in $P_1$.
 From the definition, $D$ is spherical and
 at most two points of $D$ are on a line.
 Because $|P_1| \geq 4$, we have the lemma. 
 \shortqed
\end{proof}

\subsubsection{Algorithm for landing}
\label{subsec:landing}

The purpose of the landing phase is to make the robots agree on a
plane and land on the plane without making any multiplicity. 
In a configuration $P$ that satisfies 
$T_1(P) \wedge T_2(P) \wedge \neg T_3(P)$, 
the robots execute Algorithm~\ref{alg:landing} and let $P'$ be a
resulting configuration. 
Because $T_2(P)$ holds,
$\gamma(P)$ is a 2D rotation group   
and the robots can agree on a plane $F$ perpendicular to 
the single rotation axis or the principal axis of $\gamma(P)$
and containing $b(P)$. 
When $\gamma(P)$ is $C_1$, the robots agree on a plane formed by
$P_1$, say ${p_1}$, its meridian robot, and $b(P)$. 
Clearly, when $\gamma(P)$ is $C_1$, $p_1$ has just one meridian robot
(otherwise, $\gamma(P) \succ C_1$) and these three points 
are not on one line. Thus $F$ is uniquely defined. 
Function \texttt{SelectPlane} in Algorithm~\ref{func:select} 
actually returns this plane $F$ irrespective of local coordinate systems. 

Then the robots carefully determine distinct points on $F$ as their landing
points by Function \texttt{SelectDestination} in
Algorithm~\ref{func:dest}.
The robots on $F$ do not move this landing phase. 
Let $\{ P_1, P_2, \ldots, P_m \}$ be the $\gamma(P)$-decomposition of $P$. 
Each robot computes the expected next positions of all the robots 
so that it avoids collision with other robots. 
The computation of landing points starts with $P_1$. 
For each point $p_i \in P_1$, 
let $f_i$ be the foot of the perpendicular line from $p_i$ to $F$
and $p_i$ adopts it as its landing point.
We denote the set of these landing points by
$F_1 = \{f_i :  p_i \in P_1\}$. 
Unfortunately, at most two robots in $P_1$ have the same landing point.
To resolve this collision, 
we make use of the following trick:
Let $f_i = f_j$ for two robots $p_i, p_j \in P_1$.
Then $p_i$ and $p_j$ are in the opposite side regarding $F$. 
Now $p_i$ ($p_j$, respectively) assumes that it rotates its local coordinate system
so that the direction of negative $z$-axis coincides with
the direction of $f_i$ ($f_j$, respectively). 
Then their clockwise directions, i.e.,
the rotation from positive $x$-axis to
positive $y$-axis on $F$ are opposite 
because their local coordinate systems are right-handed.
(See Figure~\ref{fig:achiral2}.) 
Function \texttt{SelectDestination} changes the landing points of
$p_i$ and $p_j$ by using this property. 
Let $C(f_i)$ be the circle centered at $f_i$ and contains 
no point in $(F \cap P) \cup (F_1 \setminus \{f_i\})$ in its interior 
and at least one point in 
$(F \cap P) \cup (F_1 \setminus \{f_i\})$ on its circumference. 
Then, let $C'(f_i)$ be the circle centered at $f_i$ with 
radius $rad(C(f_i))/4$. 
Clearly, such quarter circles for $p_i \in P_1$ have no intersection 
unless they have the common foot. 
Then \texttt{SelectDestination} outputs 
distinct landing points for $p_i$ and $p_j$ 
from $C'(f_i) = C'(f_j)$. 
If $f_i = f_j \neq b(P)$, \texttt{SelectDestination} 
outputs their destinations 
by rotating the intersection of $C'(f_i) = C'(f_j)$ and the line segment 
$\overline{f_i b(P)} = \overline{f_j b(P)}$ 
clockwise by $\pi/2$ with the center being $f_i = f_j$.
(See Figure~\ref{fig:distinct1}.)
Thus \texttt{SelectDestination} at $p_i$ and $p_j$ output
different landing pints. 
The obtained landing points are marked so that 
they will not be selected in the succeeding computation 
for $P_2, P_3, \ldots, P_m$. 

If $f_i = f_j = b(P)$, we cannot use the above technique. 
For $p_i$, \texttt{SelectDestination} 
selects a vertex $q_i'$ of a $|\gamma(P)|$-gon $Q(P)$ on $F$, 
that is defied by rotation axes of $\gamma(P)$ as we will define later 
and consider the intersection of $C'(f_i)$ and line segment 
$\overline{b(P) q_i'}$ as $q_i$. 
Then, \texttt{SelectDestination} 
rotates $q_i$ clockwise by $(2\pi)/(4|\gamma(P)|)$ with the center being
$b(P)$. 
A new landing point for $q_j$ is obtained in the same way, 
but even when the same vertex of $Q(P)$ is selected, 
the clockwise rotations guarantee that 
the landing points of $p_i$ and $p_j$ are distinct. 
(See Figure~\ref{fig:distinct2}.) 

We formally define $Q(P)$ as follows: 
If $\gamma(P)$ is dihedral, 
the vertices of $Q(P)$ are the intersections of 
the $2$-fold axes and the large circle formed by $B(P)$ and
$F$.\footnote{Because $F$ contains $b(P)$, the intersection of 
$B(P)$ and $F$ is a large circle of $B(P)$.} 
Otherwise, $\gamma(P)$ is cyclic and let $P_{\ell}$ be the 
subset of $\gamma(P)$-decomposition of $P$ with the largest index 
such that $P_{\ell}$ form a regular $\gamma(P)$-gon. 
Such $P_i$ exists from the definition and actually, 
$P_{\ell}$ form a plane parallel to $F$. 
Then $Q(P)$ is the $\gamma(P)$-gon obtained by 
projecting $P_{\ell}$ on $F$ and expanding it with keeping the center 
so that it touches the large circle formed by $B(P)$ and $F$. 
We note that when $\gamma(P)$ is cyclic, 
no two robots have the same foot for each subset, 
but \texttt{SelectDestination} uses $Q(P)$ for robot $p_i \in P_k$  
to avoid a point that is already marked as a landing point of 
some robot in $P_1, P_2, \ldots, P_{k-1}$. 

The above collision resolution procedure has a small flaw: 
It does not work correctly 
when $\gamma(P) = C_1$ and $f_i =b(P) \in F \cap P$. 
In this case, \texttt{SelectDestination} computes a landing point of 
a robot at a time with avoiding the expected landing points and 
it selects an arbitrary point on $C'(f_i)$ as the landing point of
$p_i$. 

Finally, to avoid further collisions 
all points on $C'(f_i)$ are considered to be ``expected landing points'' 
when $f_i = b(P)$,
because \texttt{SelectDestination} invoked at $r_k$ and  
\texttt{SelectDestination} invoked at $r_{k'}$ ($r_k, r_{k'} \in R$) 
may not output the same landing point for $p_i$. 
Then \texttt{SelectDestination} proceeds $P_2$. 
The landing points of $P_2$ avoid all (expected) landing points of $P_1$  
and collision among $P_2$ in the same way.
During the computation of \texttt{SelectDestination},
if the landing point of $r_i$ is not $f_i$, we say
$r_i$'s landing point is {\em perturbed}. 
By computing expected landing points of all robots, 
\texttt{SelectDestination} invoked at 
each robot outputs its landing point on $F$. 
Finally, robots move to their landing points directly and 
in any resulting configuration $P'$,
$T_3(P')$ holds.

\begin{figure}[t]
\centering 
\includegraphics[width=5cm]{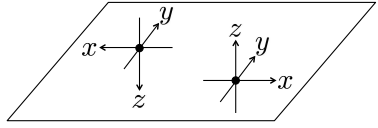}
\caption{When two right-handed robots have opposite $z$ axes, 
they do not agree on the clockwise direction. }
\label{fig:achiral2}
\end{figure}

\begin{figure}[t]
\centering 
\subfigure[]
{\includegraphics[width=5cm]{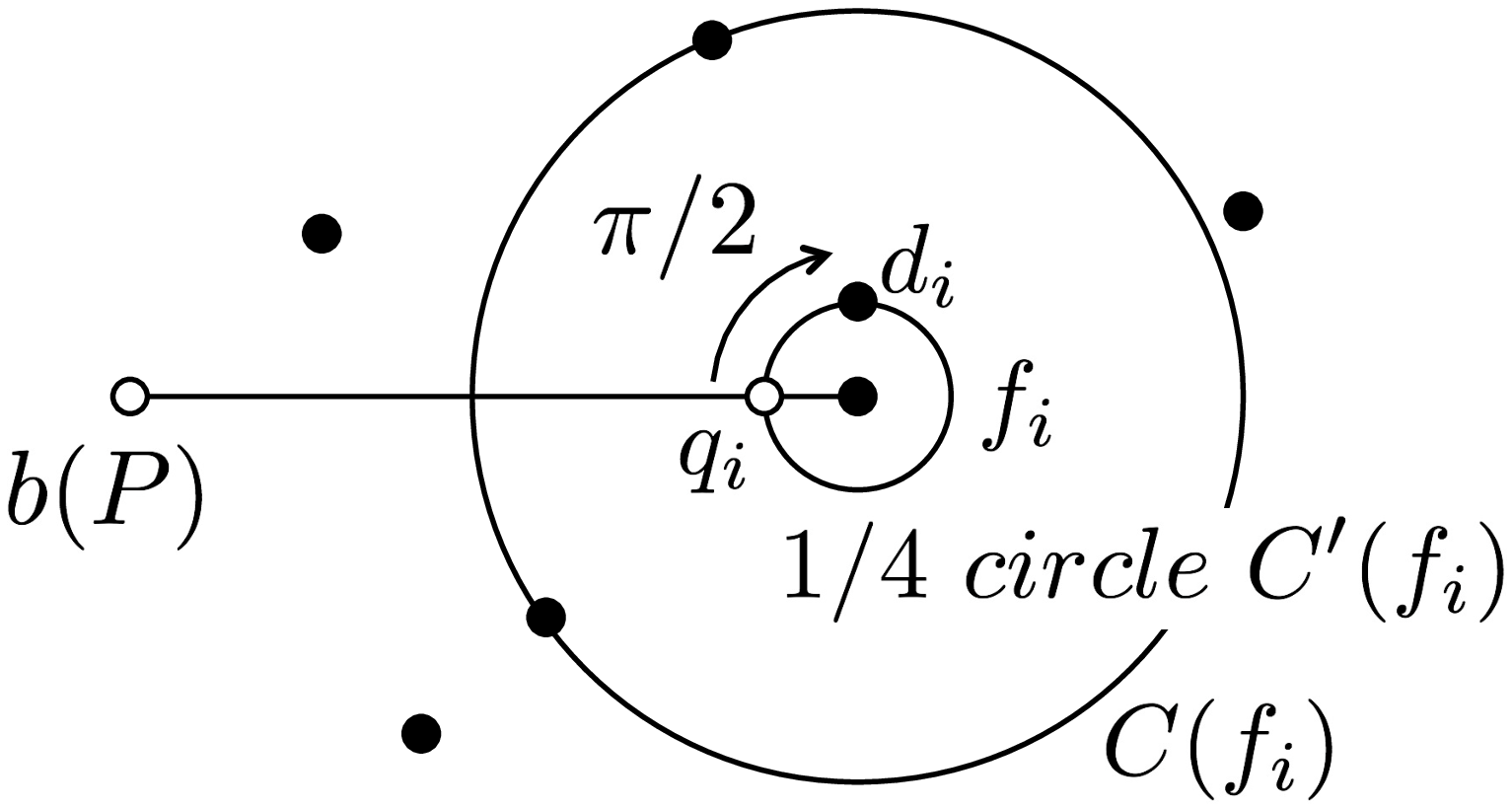}\label{fig:distinct1}}
\hspace{3mm}
\subfigure[]
{\includegraphics[width=5cm]{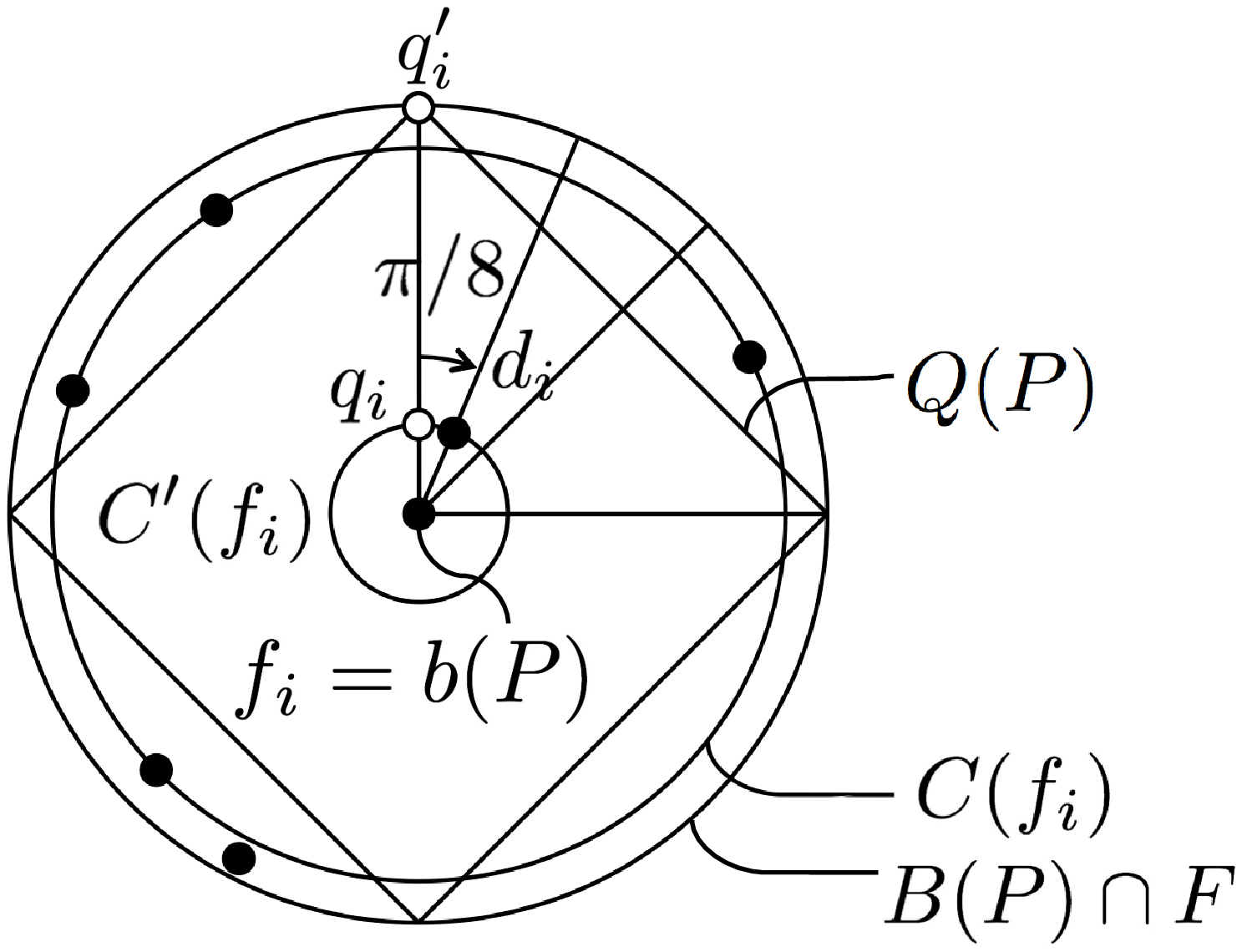}\label{fig:distinct2}}
\caption{Trick to avoid a collision of landing points on plane $F$. 
(a) When the original landing point (which may collide with another robot)
is $f_i \neq b(P)$, $r_i$ chooses a point $d_i$,
which is obtained from $q_i$ by rotating it clockwise
by angle $\pi/2$ with $f_i$ being the rotation center. 
(b) When the original landing point 
is $f_i = b(P)$, $r_i$ chooses a point $d_i$,
which is obtained from $q_i$ by rotating it clockwise
with $f_i$ being the rotation center by using $Q(P)$. 
The figure shows when $Q(P)$ is a square. 
}
\label{fig:perturb}
\end{figure}

\begin{algorithm}[t]
\caption{Landing algorithm for robot $r_i$}
\label{alg:landing} 
\begin{tabbing}
xxx \= xxx \= xxx \= xxx \= xxx \= xxx \= xxx \= xxx \= xxx \= xxx
\kill 

{\bf Notation} \\ 
\> $P$: Current configuration observed in $Z_i$.\\ 
\> $\{P_1, P_2, \ldots, P_m\}$: $\gamma(P)$-decomposition of $P$. \\ 
{\bf Precondition} \\ 
\> $T_1(P) \wedge T_2(P) \wedge \neg T_3(P)$ \\
\\ 
{\bf Algorithm} \\ 
\> Consider that the local coordinate system is turned
 so that negative $z$-axis \\
\> points to $b(P)$. \\ 
\> $F = \texttt{SelectPlane}(P)$. \\ 
\> $d = \texttt{SelectDestination}(P, F)$.\\ 
\> Move to $d$. 
\end{tabbing}
\end{algorithm}

\begin{algorithm}[h]
\caption{Function $\texttt{SelectPlane}(P)$}
\label{func:select} 
\begin{tabbing}
xxx \= xxx \= xxx \= xxx \= xxx \= xxx \= xxx \= xxx \= xxx \= xxx
\kill 
{\bf Notation} \\ 
\> $P$: Current configuration observed in $Z_i$.\\ 
\> $\{P_1, P_2, \ldots, P_m\}$: $\gamma(P)$-decomposition of $P$. \\ 
\\
{\bf Function} \\ 
$\texttt{SelectPlane}(P)$ \\ 
\> {\bf If} $\gamma(P) = C_1$ {\bf then} \\ 
\> \> Let $P_1 = \{r^*\}$. \\ 
\> \> Let $F$ be the plane containing $r^*$, $r^*$'s meridian 
robot and $b(P)$. \\ 
\> {\bf Endif} \\ 
\> {\bf If} $\gamma(P) = C_k$ ($k \geq 2$) {\bf then} \\ 
\> \> Let $F$ be the plane perpendicular to the single rotation axis 
 and \\
\> \> containing $b(P)$. \\ 
\> {\bf Endif} \\ 
\> {\bf If} $\gamma(P) = D_k$ ($k >2$) {\bf then} \\ 
\> \> Let $F$ be the plane perpendicular to the principal axis and \\
\> \> containing $b(P)$. \\ 
\> {\bf Endif} \\ 
\> {\bf Return} $F$.  
\end{tabbing}
\end{algorithm}

\begin{algorithm}[]
\caption{Function $\texttt{SelectDestination}(P, F)$}
\label{func:dest} 
\begin{tabbing}
xxx \= xxx \= xxx \= xxx \= xxx \= xxx \= xxx \= xxx \= xxx \= xxx
\kill 
{\bf Notation} \\ 
\> $P$: Current configuration observed in $Z_i$.\\ 
\> $\{P_1, P_2, \ldots, P_m\}$: $\gamma(P)$-decomposition of $P$. \\ 
\> $Q(P)$: the regular $\gamma(P)$-gon on $F$ fixed by $P$. \\ 
\\ 
{\bf Function} \\ 
$\texttt{SelectDestination}(P, F)$ \\ 
\> Set landing points $D=P \cap F$. \\ 
\> {\bf For} $k=1$ {\bf to} $m$ \\  
\> \> For each $p_i \in P_k$, let $f_i$ be the foot of the perpendicular line. \\
\> \> $F_k = \{f_i : p_i \in P_k\}$. \\ 
\> \> {\bf If} there exists $p_j \in P_k$ that satisfies 
$(f_j \in D)$ or $(\exists p_{j'} \in P_k: f_j = f_{j'})$ {\bf then} \\ 
\> \> \> For each $p_i \in P_k$, let $C(f_i)$ be the circle centered
 at $f_i$, \\ 
\> \> \> containing no point in $D \cup F_k \setminus \{f_i\}$ in its interior \\ 
\> \> \> and at least one point in 
$D \cup F_k \setminus \{f_i\}$ on its circumference. \\ 
\> \> \> (If $C(f_i)$ is not fixed (i.e., $D = \{b(P)\}$, let
 $C(f_i)$ be the circle \\ 
\> \> \>  centered at $f_i$ with radius $rad(B(P))$.)) \\ 
\> \> \> Let $r = min_{p_i \in P_k} rad(C(f))$. \\ 
\> \> \> For each $p_i \in P_k$, let $C'(f_j)$ be the circle centered at $f_i$ 
with radius $r/4$. \\ 
\> \> \> {\bf If} $f_j \neq b(P)$ {\bf then} \\ 
\> \> \> \> Let $q_j$ be the intersection of $C'(f_j)$ 
and the line segment $\overline{f_j b(P)}$. \\ 
\> \> \> \> Assume $r_j$'s negative $z$ axis points to $F$. \\ 
\> \> \> \> Let $d_j$ be the point on $C'(f_j)$ obtained by turning $q_j$ \\
\> \> \> \> around $f_j$ by $\pi/2$ clockwise. \\ 
\> \> \> \> $D'_j = \{d_j\}$. \\ 
\> \> \> {\bf Else} // $f_j = b(P)$. \\ 
\> \> \> \> {\bf If} $\gamma(P)=C_1$ {\bf then} // $f_j \in D$. \\ 
\> \> \> \> \> Select an arbitrary point on $C'(f_j)$ as $d_j$.\\ 
\> \> \> \> \> $D_j' = C'(f_j)$. \\ 
\> \> \> \> {\bf Else} \\ 
\> \> \> \> \> Select an arbitrary vertex $q_j'$ from $Q(P)$. \\ 
\> \> \> \> \> Let $q_j$ be the intersection of $C'(f_j)$ 
and the line segment $\overline{q_j' b(P)}$. \\ 
\> \> \> \> \> Let $d_j$ be the point on $C'(f_j)$ obtained by turning 
$q_j$ around $f_j$ \\ 
\> \> \> \> \> by $2\pi/4|\gamma(P)|$ clockwise. \\ 
\> \> \> \> \> $D_j' = C'(f_j)$. \\ 
\> \> \> \> {\bf Endif} \\ 
\> \> \> {\bf Endif} \\ 
\> \> {\bf Else} $d_j = f_j$ and $D_j' = \{d_j\}$. \\ 
\> \> {\bf Endif} \\ 
\> \> {\bf If} $r_j = r_i$ {\bf then} $d = d_j$ {\bf Endif}\\ 
\> \> $D = D \cup \bigcup_{ p_j \in P_k} D'_j$. \\ 
\> {\bf Endfor} \\  
\> {\bf Return} $d$. 
\end{tabbing}
\end{algorithm}

Lemma~\ref{lemma:landing} shows that the robots
occupy distinct positions on $F$ in any resulting configuration $P'$ and
Lemma~\ref{lemma:noline} shows that the robots do not form a line
in $P'$. 
For the simplicity of the proof for Lemma~\ref{lemma:noline},
we incorporate the following small improvement to
\texttt{SelectDestination}: 
If $\gamma(P)$ is $C_k$ ($k \geq 2$) ($D_{\ell}$ ($\ell \geq 2$),
respectively),
each element $P_i$ of the $\gamma(P)$-decomposition of $P$,
$\{P_1, P_2, \ldots, P_m\}$ forms one regular $k$-gon
(or two regular $\ell$-gons, respectively) on $F$.
Consider the case where $\gamma(P) = C_k$ and
we have $|P_1| = |P_2| = 1$ and $|P_3| = |P_4| = k$, i.e.,
$P$ consists of two pyramids. 
Then, the destinations of $P_2$ is perturbed because $b(P)$ is the
destinations of $P_1$. 
Suppose that this perturbed landing point of $P_2$
is on the foot of the perpendicular
line from some robot of $P_3$. (See Figure~\ref{fig:coll-1}.) 
To keep the regular $k$-gon of $P_3$,
we make \texttt{SelectDestination} perturb the destinations of
all robots of $P_3$. 
\texttt{SelectDestination} makes these $k$-robots,
say $r_1, r_2, \ldots, r_k$ agree on the radius of $C(f_i)$
and choose a new destination from $C(f_i)$.
(See Figure~\ref{fig:coll-2}.) 
We also have such a situation when $\gamma(P) = D_{\ell}$ 
($\ell \geq 2$). In this case, we consider the elements of
the $\gamma(P)$-decomposition of $P$ with size $|D_k| = 2k$,
thus each of the elements consists of two regular $k$-gons one is 
``above'' $F$ and the other is ``under'' $F$. 
Then \texttt{SelectDestination} keeps these two regular $k$-gons
in the same way when at least one of the $2k$ robots have a collision.

\begin{figure}[t]
\centering 
\subfigure[]
{\includegraphics[width=4.5cm]{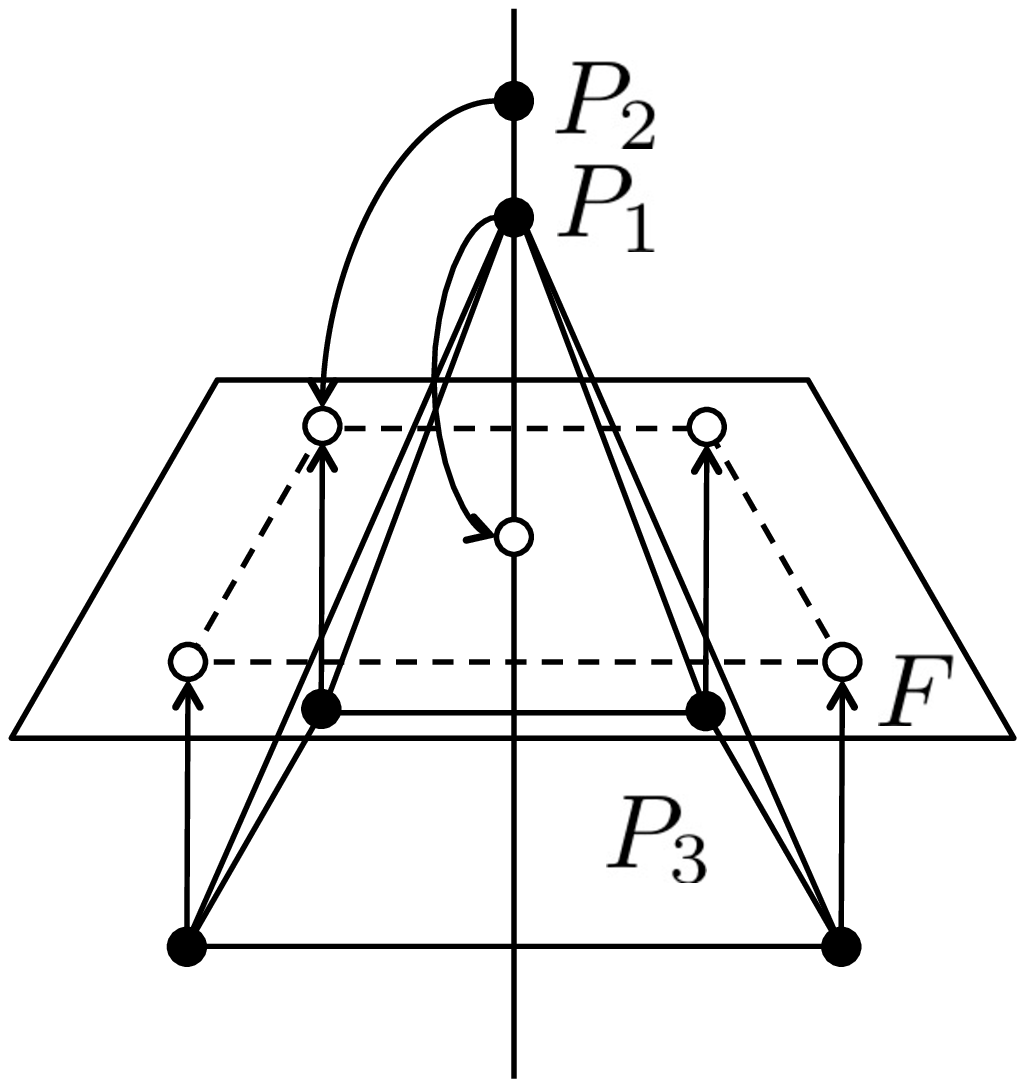}\label{fig:coll-1}}
\hspace{3mm}
\subfigure[]
{\includegraphics[width=4.5cm]{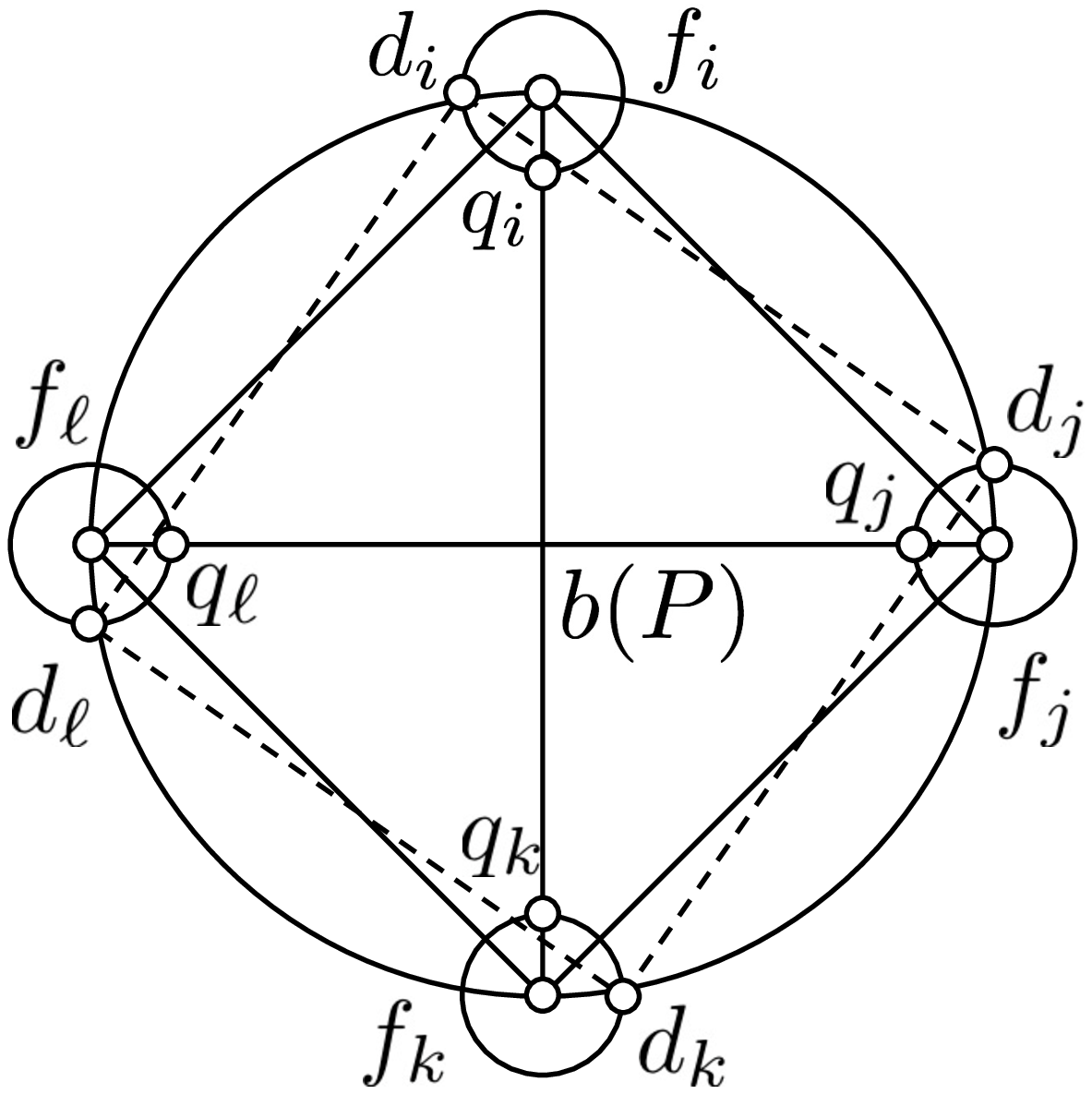}\label{fig:coll-2}}
 \caption{Avoiding a collision with keeping the regular polygon
 formed by an element of the $\gamma(P)$-decomposition of $P$. 
}
\label{fig:coll}
\end{figure}

\begin{lemma}
\label{lemma:landing}
 Let $P$ be a configuration that satisfies
 $T_1(P) \wedge T_2(P) \wedge \neg T_3(P)$. 
Then the robots execute Algorithm~\ref{alg:landing} in $P$ and 
suppose that a configuration $P'$ yields as a result.
Then $T_3(P')$ holds.
\end{lemma}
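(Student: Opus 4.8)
The plan is to prove two things about the resulting configuration $P'$: that every robot lands on the common plane $F$ returned by \texttt{SelectPlane} (which gives $P' \subset F$, i.e.\ $T_3(P')$), and that the $n$ landing points are pairwise distinct (which gives $|P'| = n$, the distinctness required by the plane formation problem). The first task is the easier one. First I would argue that \texttt{SelectPlane}$(P)$ returns one and the same physical plane $F$ for every observing robot, independently of its local coordinate system. Since $T_2(P)$ holds, $\gamma(P)$ is $C_1$, some $C_k$ ($k \ge 2$), or some $D_\ell$; in the latter two cases $F$ is the plane through $b(P)$ perpendicular to the unique (respectively principal) axis, both of which are determined by the point set $P$ alone, and in the $C_1$ case $F$ is the plane through the single robot $r^*$ of $P_1$, its uniquely determined meridian robot, and $b(P)$, three non-collinear points. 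Hence all robots agree on $F$. Next I would observe that each robot's landing point lies on $F$: a foot $f_j$ is on $F$ by definition, the circles $C(f_j), C'(f_j)$ and the polygon $Q(P)$ are all constructed inside $F$, every perturbation is a rotation within $F$, and the robots already in $P \cap F$ do not move. Consequently $P' \subset F$ and $T_3(P')$ holds.

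The bulk of the work is the distinctness claim, which also requires that all robots compute one global set of expected landing points so that no observer-dependent collision can arise. I would first establish this global agreement: the foot $f_j$ of each robot and the order in which $P_1, P_2, \ldots, P_m$ are processed are coordinate-free by Theorem~\ref{theorem:decomposition}, and the only coordinate-sensitive step is the direction of the $\pi/2$ (respectively $2\pi/(4|\gamma(P)|)$) perturbation. Here the key is the handedness argument: since every local coordinate system is right-handed, once a robot orients its negative $z$-axis toward its foot, the induced clockwise direction on $F$ is determined solely by which side of $F$ the robot occupies, a fact any observer can read off from $P$. Thus every observer predicts the same $d_j$ for every $r_j$, except for the genuinely arbitrary choices made when $f_j = b(P)$; for those the algorithm marks the entire circle $C'(f_j)$ as expected landing points, so disagreement on the exact chosen point cannot cause a later collision.

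With global agreement in hand I would verify distinctness by walking through the branches of \texttt{SelectDestination}. Robots already on $F$ keep their distinct positions; robots whose feet are distinct and unmarked land on those distinct feet. The remaining cases are the collisions. When $p_i, p_j$ in the same $P_k$ share a foot $f_i = f_j \ne b(P)$, they lie on opposite sides of $F$, so by the handedness argument their clockwise directions are opposite and rotating $q_i, q_j$ by $\pi/2$ sends them to distinct points of $C'(f_i)$. When $f_i = f_j = b(P)$ the center-pointing trick is unavailable, so the algorithm rotates about $b(P)$ using the vertices of $Q(P)$ and the finer angle $2\pi/(4|\gamma(P)|)$; I would check that even if both pick the same vertex of $Q(P)$ the opposite clockwise senses separate them, and that the degenerate $C_1$ sub-case with $b(P) \in P \cap F$ is absorbed by the whole-circle marking. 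Finally I would confirm that a perturbed point never collides with an earlier-marked landing point: because each $C'(f)$ has radius one quarter of the distance to the nearest already-fixed point and $P_1, \ldots, P_m$ are treated in order, the quarter-circles used at stage $k$ are disjoint from all points fixed at stages $1, \ldots, k-1$ and from one another unless they share a foot. The main obstacle I anticipate is precisely this last geometric bookkeeping together with the handedness argument: making rigorous that the $r/4$ radius and the specific rotation angles simultaneously guarantee observer-independence and pairwise distinctness across all branches, including the center and $C_1$ special cases.
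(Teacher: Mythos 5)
Your proposal is correct and follows essentially the same route as the paper's own proof: first establish that \texttt{SelectPlane} returns a single coordinate-independent plane $F$ (so all landing points, being feet, perturbed points on the circles $C'(f)$, or unmoved points of $P \cap F$, lie on $F$), and then argue distinctness by the handedness-based $\pi/2$ rotation for opposite-side collisions, the $Q(P)$-based rotation at $b(P)$, the marking of the whole circle $C'(f_j)$ as expected destinations to absorb observer-dependent choices, and the sequential processing of $P_1, \ldots, P_m$ with quarter-radius circles avoiding all previously fixed points. The paper phrases this as an induction over the stages $k$ showing that the destination $d_i$ computed for $r_i$ is never chosen by any other robot, but the content of your argument — including the two collision tricks and the global-agreement bookkeeping — is the same.
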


\begin{proof}
Let $\{ P_1, P_2, \ldots, P_m \}$ 
be the $\gamma(P)$-decomposition of $P$. 
Since $T_2(P)$ holds, $\gamma(P)$ is a 2D rotation group  
and the robots can agree on a common plane $F$
as we have discussed several times.
Indeed, $\texttt{SelectPlane}(P)$ returns $F$,
as one can easily observe.
More clearly, let $\texttt{SelectPlane}(Z_i(P)) = F_i$ for any $r_i \in R$.
Then there is a common plane $F$ such that 
 $Z_i(F) = F_i$ for $i = 1, 2, \ldots , n$ because
 $F_i$ does not depend on the local coordinate system. 

What remains is to show that $\texttt{SelectDestination}(Z_i(P), Z_i(F))$
outputs distinct positions for the robots. 
Let $p_j$ be the position of $r_j \in R$. 
The robots can agree on the foot $f_j$ on $F$ for $r_j$. 
If there are robots with the same foot, 
$\texttt{SelectDestination}$ resolves the collision. 
We consider the execution of $\texttt{SelectPlane}(Z_i(P))$ at $r_i$, 
and show by induction that no robot other than $r_i$ 
selects the destination of $d_i$ computed 
by $\texttt{SelectDestination}(Z_i(P), Z_i(F))$ at $R_i$
as its destination. 

First, $\texttt{SelectDestination}(Z_i(P), Z_i(F))$ at $r_i$ initializes 
the set of landing points $D = P \cap F$.  
As for $P_1$, if $f_j = f_{j'}$ for $p_j, p_{j'} \in P_1$, 
or $f_j \in D$, 
$\texttt{SelectDestination}$ 
computes distinct destinations $d_j$ (and $d_{j'}$) from 
$C'(f_j)(=C'(f_{j'}))$ 
based on the trick shown in Figure~\ref{fig:distinct1} and 
\ref{fig:distinct2} and it appends these 
(expected) destinations to $D$. 

Actually, for $r_i, r_{i'} \in R$, 
the destination of $p_j$ output 
by $\texttt{SelectDestination}(Z_i(P), Z_i(F))$ at $r_i$ 
and that by  $\texttt{SelectDestination}(Z_{i'}(P), Z_{i'}(F))$ at $r_{i'}$ 
are not always identical; for example, if $f_j = f_{j'}=b(P)$, 
the destinations at $r_j$ and $r_{j'}$ may be different. 
However, in such a case, $r_i$ (and $r_i'$ also) appends $C'(f_j)$ 
to $D$ as expected destinations. 
Hence, if $r_i \in P_1$, its destination $d_i$ computed at $r_i$ 
is always in $D$ at each robot $r_{i'} \in R$. 

After the computation of $P_k$, 
$\texttt{SelectDestination}(Z_i(P), Z_i(F))$ computes the destinations of 
$P_{k+1}$ ($k < m$). In this phase, $\texttt{SelectDestination}(Z_i(P), Z_i(F))$ 
resolves collisions among the foot of $r_j \in P_{k+1}$ with 
avoiding the points in $D$, and appends new (expected) destinations to $D$. 
Hence, if $r_i \in P_1 \cup \cdots \cup P_k$, 
$d_i \in D$ is not selected as a destination of some robot in 
$P_{k+1}$. 
\shortqed
\end{proof}

We conclude Theorem~\ref{theorem:suf} by 
Lemmas~\ref{lemma:sparse}, Lemma~\ref{lemma:break}, and 
Lemma~\ref{lemma:landing}. 
From Theorem~\ref{theorem:suf} and
Theorem~\ref{theorem:suf-power}, 
we show that 
the robots can form a plane if an initial configuration satisfies 
the condition of Theorem~\ref{theorem:main} 
irrespective of the availability of memory. 

We finally show that the robots do not form a line
in any terminal configuration of the proposed algorithm as long as 
they form neither a plane nor a line in an initial configuration. 
This property is not required by the plane formation problem,
however it is useful when we combine other algorithms for robots on
2D-space 
to the proposed plane formation algorithm for more
complex tasks. 

\begin{lemma}
 \label{lemma:noline}
 Let $P(0)$ be an initial configuration and
 $P(t)$ be the terminal configuration of the proposed algorithm.
 Then the robots are not on a line in $P(t)$.
 \end{lemma}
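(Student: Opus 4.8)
The plan is to trace how the terminal configuration $P(t)$ is produced and reduce everything to the landing phase. Since $P(0)$ lies on neither a plane nor a line, the execution does not stop at $P(0)$, and by the phase structure of Subsection~\ref{subsec:suff} the terminal configuration arises in exactly one of two ways: either $P(t)$ is the output of the symmetry breaking phase (Algorithm~\ref{alg:break}) and already satisfies $T_3$, or $P(t)$ is the output of the landing phase (Algorithm~\ref{alg:landing}). The preparation phase is never directly followed by landing, because its output still has a $3$D rotation group and hence satisfies $\neg T_2$. In the first case Lemma~\ref{lemma:breakline} already guarantees that the output of Algorithm~\ref{alg:break} is not on a line, so $P(t)$ is not a line and we are done. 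It therefore remains to treat the case where $P(t)$ is produced by Algorithm~\ref{alg:landing} from a configuration $P$ satisfying $T_1(P)\wedge T_2(P)\wedge\neg T_3(P)$.

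The core step concerns the feet. Let $F$ be the plane returned by \texttt{SelectPlane}, perpendicular to the single (or principal) axis of $\gamma(P)$ and through $b(P)$, and let $\vec a$ be its unit normal, i.e.\ the direction of that axis. For each robot $p_i$ the algorithm computes the foot $f_i$, the orthogonal projection of $p_i$ onto $F$ along $\vec a$, so $p_i-f_i$ is parallel to $\vec a$. I claim the feet are not collinear: if all $f_i$ lay on one line $L\subseteq F$, then each $p_i=f_i+t_i\vec a$ would lie in the plane through $L$ in direction $\vec a$, forcing $P$ to be contained in a plane and contradicting the landing precondition $\neg T_3(P)$. Hence $\{f_i\}$ is not collinear (in particular $P$ is not on the axis).

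It remains to upgrade ``feet not collinear'' to ``landed points not collinear'', for which I would split on $\gamma(P)$. When $\gamma(P)=C_k$ with $k\ge 3$ or $\gamma(P)=D_\ell$ with $\ell\ge 3$, the refinement of \texttt{SelectDestination} stated before this lemma makes each element $P_i$ of the $\gamma(P)$-decomposition land as a regular $k$-gon (respectively two regular $\ell$-gons) concentric at $b(P)$; since $P$ is not on the axis, some element is off-axis, and its landed image is a regular polygon with at least three vertices, which is never contained in a line. For the low-symmetry groups $C_1,C_2,D_2$ I would argue directly from the projection geometry: a generic $D_2$-orbit projects to a nondegenerate rectangle and distinct off-axis $C_2$-orbits project to antipodal pairs on distinct lines through $b(P)$, so non-collinearity of the feet yields three explicitly identified non-collinear feet among the landed points.

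The main obstacle is precisely this last step: ruling out that the collision-resolution perturbations conspire to align the landed points in the asymmetric case $C_1$ and in the cases $C_2,D_2$, where individual orbits may already project to collinear point sets, so that the naive ``small perturbation preserves non-collinearity'' heuristic fails (a thin triangle can collapse under perturbations small relative to its side lengths). The cleanest way to close it is to exhibit, in each of these three cases, three specific robots whose feet are non-collinear and then verify that the perturbation rule cannot send them onto a common line: each such point moves only within its quarter-circle $C'(f_i)$, these circles are pairwise disjoint, and the refinement keeps each affected element concentric at $b(P)$, so the orientation of the chosen triangle is preserved. The high-symmetry cases need no such care, since the regular-polygon structure is rigid under the (rotation- and scale-type) perturbations allowed by the algorithm.
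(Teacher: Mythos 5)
Your reduction to the landing phase, your observation that the feet $\{f_i\}$ cannot be collinear (otherwise $P(t-1)$ would lie in the plane spanned by that line and the axis direction, contradicting $\neg T_3(P(t-1))$), and your treatment of $C_k$ and $D_\ell$ for $k,\ell \geq 3$ via regular polygons all match the paper's proof of Lemma~\ref{lemma:noline}. But the part you yourself flag as ``the main obstacle'' --- the cases $C_1$, $C_2$, $D_2$ --- is where the proof actually lives, and your proposed way of closing it does not work. Your closing claim is that one can pick three non-collinear feet and that the perturbations cannot align them because each perturbed point stays inside its circle $C'(f_i)$ and these circles are pairwise disjoint. That is exactly the ``small perturbation'' heuristic you correctly dismissed two sentences earlier: disjointness of the circles does not prevent a thin triangle from collapsing. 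In the $C_1$ case the collapse can genuinely happen for a chosen triple: every element of the $\gamma(P)$-decomposition is a singleton, so there is no concentric-polygon structure to invoke, and the radius of $C'(f_i)$ is a quarter of the distance from $f_i$ to the nearest other (expected) landing point --- a quantity unrelated to the height of whichever triangle of feet you selected. If $f_c$ lies at distance $\epsilon$ from the line through $f_a$ and $f_b$ but at distance $D \gg \epsilon$ from every other destination, the perturbation of $f_c$ may have magnitude up to $D/4$ and can push its landing point onto that line. So ``exhibit three feet and verify their orientation is preserved'' cannot be made to work for $C_1$.

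The paper closes $C_1$ with a different, global argument that your proposal is missing: suppose all landing points lie on a common line $\ell \subset F$. In the $C_1$ case, \texttt{SelectDestination} perturbs a robot's destination away from its foot only when that foot coincides with an already-assigned landing point (or with a robot of $P \cap F$, which does not move); hence the foot of every perturbed robot is itself some robot's destination and therefore lies on $\ell$, while the foot of every unperturbed robot equals its own destination and also lies on $\ell$. Thus \emph{all} feet lie on $\ell$, so $P(t-1)$ is contained in the plane through $\ell$ perpendicular to $F$ --- i.e., plane formation was already complete at $P(t-1)$, contradicting the fact that $P(t)$ is the first terminal configuration. No bound on perturbation magnitudes is needed anywhere. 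For $C_2$ and $D_2$ the paper likewise argues structurally rather than metrically: two $2$-sets not on a common plane land as a rhombus, a sphenoid or rectangle element lands as a rectangle on $F$, and elements on the principal axis reduce to these cases, all using the concentricity refinement of \texttt{SelectDestination}. Your sketch for these two cases is closer in spirit to the paper's, but it is stated as a plan rather than carried out, and it still rests on the same unproved ``orientation is preserved'' step.
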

 \begin{proof}
  Consider an arbitrary
  execution of the proposed algorithm $P(0), P(1), \cdots$.  
  Assume that the robots are on a common line in the terminal
  configuration 
  $P(t)$ ($t > 0$) that appears in the execution for the first time. 

  Because $P(t)$ is the first terminal configuration,
  the robots are not on a plane in $P(t-1)$ and they execute
  Algorithm~\ref{alg:landing} in $P(t-1)$
  by Lemma~\ref{lemma:breakline}. 
  We consider the rotation group of $P(t-1)$. 
  Let $F$ be the plane output by \texttt{SelectPlane} in $P(t-1)$. 
  When $\gamma(P(t-1))$ is $C_k$ for $k \geq 3$,
  there exists at least one element of the
  $\gamma(P(t-1))$-decomposition of $P(t-1)$ that forms a 
  regular $k$-gon. 
  From the definition of \texttt{SelectDestination},
  the destinations of these $k$ robots form a regular $k$-gon on $F$
  and the robots are not on a common line in $P(t)$. 
  We have the same case when $\gamma(P(t-1))$ is $D_{\ell}$ for 
  $\ell \geq 3$ because there exists at least one element of
  $\gamma(P(t-1))$-decomposition of $P(t-1)$ that form a
  regular $2\ell$-gon on $F$ or two regular $\ell$-gons each of
  which is parallel to $F$ (but not on $F$). 

  Hence the remaining cases are when $\gamma(P(t-1))$ is $C_1$,
  $C_2$, or $D_2$. 
  
  \noindent{\bf Case A: $\gamma(P(t-1)) = C_1$.}
  Assume that in $P(t)$, the robots are on a common line, say
  $\ell$ on $F$. 
  If no robot has its destination perturbed,
  then the positions of the robots are on the plane
  that is perpendicular to $F$ and whose intersection
  with $F$ is $\ell$. Hence the plane formation has
  completed in $P(t-1)$, which is a contradiction. 

  Otherwise, at least one robot has its destination perturbed.
  Let $r_i$ be this robot. We denote the foot of the perpendicular line
  to $F$ from $p_i(t-1)$ by $f_i$.
  Hence there exists another robot whose destination is $f_i$.
  Let $d_i$ be the perturbed destination of $r_i$.
  The destinations of the robots are on the line $\ell'$
  that contains $f_i$ and $d_i$.
  This means that all robots are on the plane 
  that is perpendicular to $F$ and whose intersection
  with $F$ is $\ell'$ in $P(t-1)$.
  Assume otherwise that there exists a robot $r_j$ that is not on
  this plane in $P(t-1)$.
  Hence the foot $f_j$ of the perpendicular line to $F$ from $r_j$ 
  is not on $\ell'$ and there exists at least one robot
  (not necessarily $r_j$) whose destination is $f_j$,
  which is a contradiction. 
  Hence the plane formation has
  completed in $P(t-1)$, which is a contradiction. 
  
  \noindent{\bf Case B: $\gamma(P(t-1)) = C_2$.}
  Hence the $\gamma(P(t-1))$ decomposition of $P(t-1)$ contains 
  at least two $2$-sets, otherwise all robots are on the
  plane containing the principal axis and the line formed by the single
  $2$-set. 
  Because the robots are not on a plane in $P(t-1)$,
  the $\gamma(P(t-1))$-decomposition of $P(t-1)$ contains at least
  two $2$-sets, say $P_i$ and $P_j$, that are not on a common plane.
  Thus their destinations on $F$ are distinct and
  they form a rhombus and the robots are not on one line in $P(t)$. 
  
  \noindent{\bf Case C: $\gamma(P(t-1)) = D_2$.}
  When a robot is on $F$ in $P(t-1)$,
  then the robot does not move during the transition from $P(t-1)$ to
  $P(t)$.
  We consider the robots that are not on $F$.
  Specifically, we consider each element of the
  $\gamma(P(t-1))$-decomposition of $P(t-1)$ that is not on $F$.
  We have the following three cases.

  \noindent{\bf Case C(i): There exists at least one element that
  forms a sphenoid.}
  In this case, the destinations of the four robots forming the sphenoid
  form a rectangle on $F$, and the robots are not on a common line in
  $P(t)$. 

  \noindent{\bf Case C(ii): There exists at least one element that
  forms a rectangle.}
  In this case, the four robots forming the element
  forms a rectangle that is on a plane containing the principal axis of
  $\gamma(P(t-1))$ and one of the secondary axis.
  Hence, their destinations on $F$ are perturbed and form a rectangle
  on $F$. Thus the robots are not on a common line in $P(t)$. 
  
  \noindent{\bf Case C(iii): There exists at least one element that
  is on the principal axis.}
  In this case, the $\gamma(P(t-1))$-decomposition of $P(t-1)$ contains
  at least
  (a) one element forming a sphenoid,
  (b) one element forming a rectangle, or 
  (b) two elements on the secondary axis (two line elements). 
  In the first two cases, the robots are not on a common line in $P(t)$ 
  from Case C(i) and C(ii).
  In the last case, these two elements form a rhombus on $F$ and
  the robots are not on a common line in $P(t)$. 
  
  Consequently we have the lemma.
  \shortqed
\end{proof}

\section{Concluding remark}
\label{sec:concl}

In this paper, 
we have investigated the plane formation problem 
for anonymous oblivious FSYNC robots 
in 3D-space. 
To analyze it,
we have defined the rotation group of a set of points 
in 3D-space in terms of its rotation group 
and we present a necessary and sufficient condition for 
the FSYNC robots to solve the plane formation problem.
We show a plane formation algorithm for oblivious FSYNC robots and
proved its correctness. 
We finally address the configuration space of the proposed algorithm.
The proposed algorithm is executed in a configuration where the robots
are not on a common plane.
During any execution, the robots do not reside on a common plane except
a terminal configuration.
This property is useful when the robots execute some existing algorithm
for 2D-space after the proposed plane formation algorithm,
because the configuration space of the plane formation algorithm and
that of the algorithm for 2D-plane are disjoint.
The progress of the composite algorithm is automatically guaranteed.

Another important result is related to the chirality of robots. 
Our first motivation is to apply existing algorithms for robots in
2D-space when the robots are put in 3D-space. 
However as we have shown in Section~\ref{subsec:landing},
when the robots with right-handed
$x$-$y$-$z$ local coordinate systems are put on a plane, 
they may not agree on the clockwise direction on the plane. 
It highlights the importance of distributed algorithms
without assuming chirality for robots on 2D-space. 

Since real systems work in a three dimensional space,
many natural problems would arise from practical applications.
The following is a partial list of open problems arising 
from the theory side:
\begin{enumerate}
\item
Understanding of the impact of chirality in the setting of this paper.
\item 
Understanding of the impact of visibility in the setting of this paper. 
\item
The general pattern formation problem for three dimensional space.
\item
Extensions to SSYNC and ASYNC robots.
\item
Extensions to arbitrary $d$ dimensional space. 
\end{enumerate}

\newpage
\appendix

\section{Property of rotation groups}

\label{app:rotation-groups}

\noindent{\bf Property~\ref{property:d2-principal}.~} 
{\it
Let $P \in {\cal P}_n^3$ be a set of points. 
If $D_2$ acts on $P$ and we cannot distinguish the principal axis of 
(an arbitrary embedding of) $D_2$, then $\gamma(P) \succ D_2$. 
}
\begin{proof}
Without loss of generality, 
we can assume that $x$-$y$-$z$ axes of the global coordinate system $Z_0$
 are the $2$-fold axes of 
 $D_2$.\footnote{There exists a translation consisting of
 rotation and translation that overlaps the $2$-fold axis of
 $\gamma(P)$ to the three axes.}
We define the octant according to $Z_0$ as shown in 
 Figure~\ref{fig:rot3-0} and Table~\ref{table:octant}.

\begin{table}[h]
\begin{center}
\caption{Definition of octant} 
\label{table:octant} 
\begin{tabular}[t]{|c|c|c|c|}
Number & $x$ & $y$ & $z$\\
\hline 
1 & + & + & + \\
2 & - & + & + \\
3 & - & - & + \\
4 & + & - & + \\
5 & + & + & - \\
6 & - & + & - \\
7 & - & - & - \\
8 & + & - & - \\
\end{tabular}
\end{center}
\end{table}

We consider the positions of points of $P$ in the first octant,  
which defines the positions of points of $P$ in the 
third, sixth, and the eighth octant by the rotations of $D_2$. 
The discussion also holds symmetrically in the second octant, 
that determines the positions of points in the 
fourth, fifth, and seventh octant. 

 We focus on a point $p \in P$ and
 depending on the position of $p$, we have the 
following five cases. 
\begin{itemize}
\item $p$ is on the $x$-axis (thus, the discussion follows for 
$y$-axis and $z$-axis, respectively). 
\item $p$ is on the $x$-$y$ plane (thus, the discussion follows for 
$y$-$z$ plane and $z$-$x$ plane, respectively). 
\item $p$ is on the line $x=y=z$. 
\item other cases. 
\end{itemize}
We will show that in any of the four cases, 
if we cannot recognize the principal axis, then 
we can rotate $P$ around the four $3$-fold axis $x=y=z$, 
$-x=y=z$, $-x=-y=z$, and $x=-y=z$.

\noindent{\bf Case A:~} When $p \in P$ is on the $x$-axis. 
Because $\gamma(P) = D_2$, we have a corresponding point on 
the negative $x$-axis (Figure~\ref{fig:rot3-1}). 
This allows us to recognize the $x$-axis from the $y$-axis and 
$z$-axis, hence $P$ should have corresponding points on 
$y$-axis and $z$-axis. 
In this case, we can rotate the corresponding six points 
around the four $3$-fold axes. 

\noindent{\bf Case B:~} When $p \in P$ is on the $x$-$y$ plane. 
First consider the case where a point $p \in P$ is on the line $x=y$. 
Because $\gamma(P) = D_2$, we have four corresponding points on 
the $x$-$y$ plane that forms a square (Figure~\ref{fig:rot3-2}). 
This allows us to recognize the $z$-axis from the other two 
axes, hence 
$y$-$z$ plane and $z$-$x$ plane also have the corresponding squares. 
Hence, the twelve points form a cuboctahedron, and 
we can rotate them around the four $3$-fold axes. 

When $p$ is not on the line $x=y$, 
because $\gamma(P) = D_2$, we have four corresponding points on 
the $x$-$y$ plane that forms a rectangle (Figure~\ref{fig:rot3-3}). 
This allows us to recognize the principal axis. 
In the same way as the above case, 
there are two rectangles on the $y$-$z$ plane and $z$-$x$ plane. 
The obtained polyhedron consists of $12$ vertices 
and we can rotate it around the four $3$-fold axes. 

\noindent{\bf Case C:~} When $p \in P$ is on the line $x=y=z$. 

Because $\gamma(P) = D_2$, we have four corresponding points in 
the third, sixth, and the eighth octant, 
that forms a regular tetrahedron (Figure~\ref{fig:rot3-4}). 
In this case, we can rotate the corresponding four points 
around the four $3$-fold axes. 

\noindent{\bf Case D:~} Other cases. 

For a point $p \in P$ in the first octant, 
because $\gamma(P) = D_2$, we have corresponding four points in 
the third, sixth, and the eighth octant, 
that forms a sphenoid (Figure~\ref{fig:rot3-5}). 
This allows us to recognize the $z$-axis from the others, 
hence $y$-axis and $x$-axis also have the corresponding sphenoids. 
The obtained polyhedron consists of $12$ vertices 
and we can rotate it around the four $3$-fold axes. 

Consequently when $D_2$ acts on $P$ 
but we cannot recognize the principal axis, 
we can rotate $P$ around the four $3$-fold axes. 
Thus $\gamma(P) \succeq T$. 

\begin{figure}[t]
\centering 
\subfigure[] 
{\includegraphics[width=4cm]{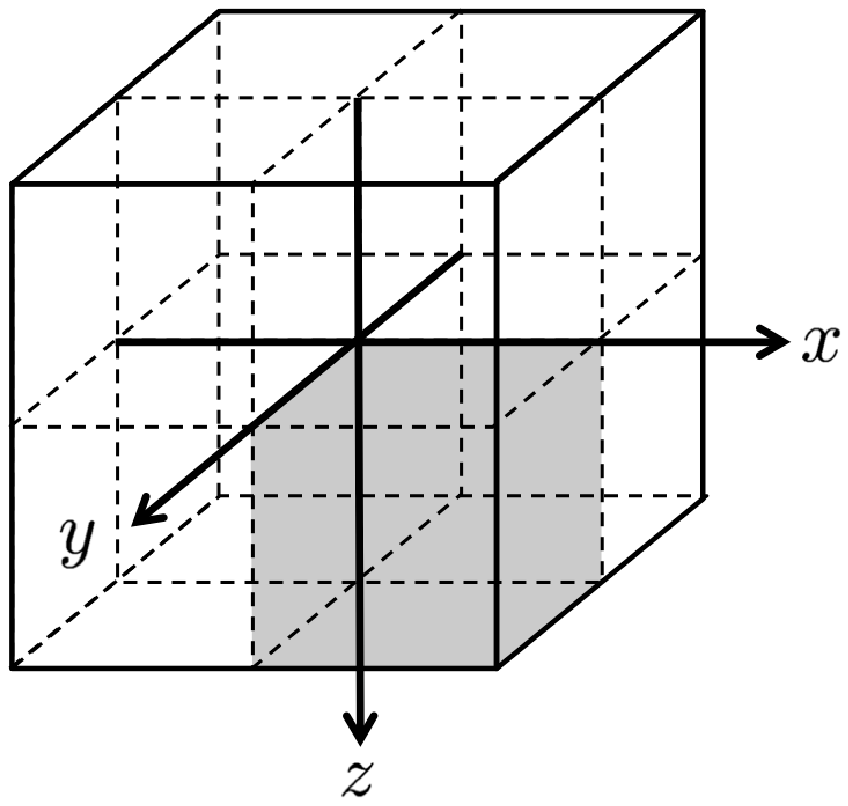}\label{fig:rot3-0}}
\hspace{3mm}
\subfigure[] 
{\includegraphics[width=4cm]{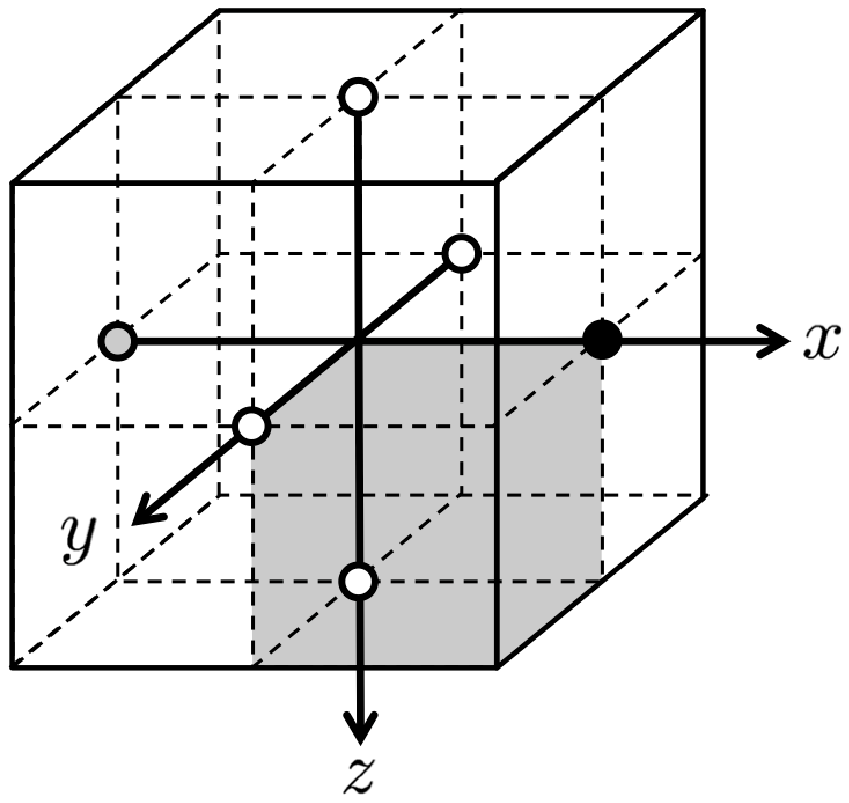}\label{fig:rot3-1}}
\hspace{3mm}
\subfigure[] 
{\includegraphics[width=4cm]{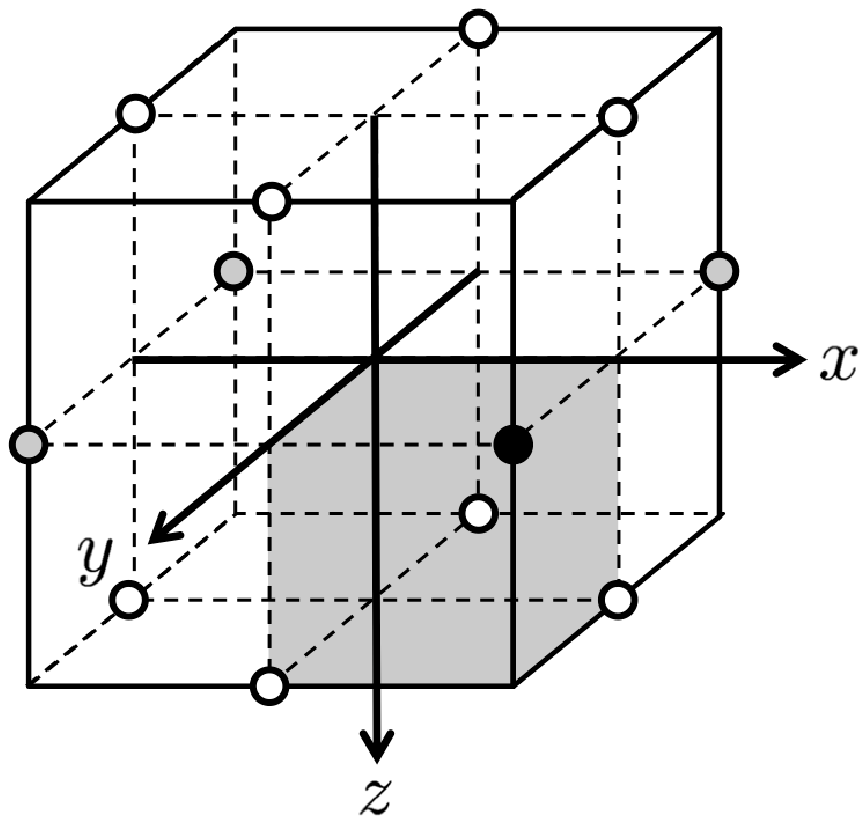}\label{fig:rot3-2}}
\hspace{3mm}
\subfigure[] 
{\includegraphics[width=4cm]{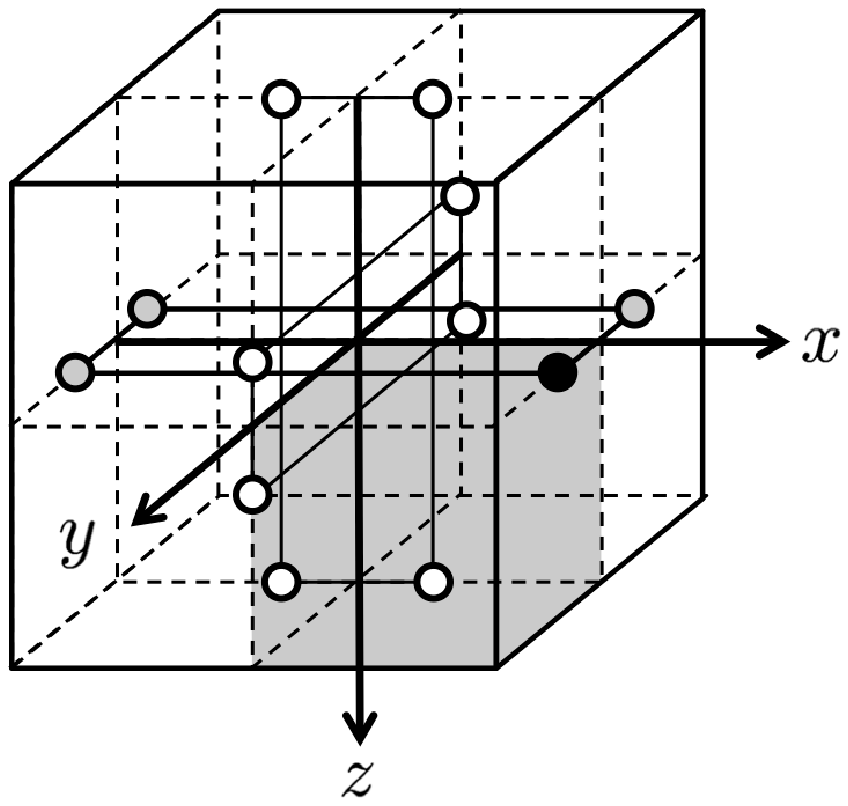}\label{fig:rot3-3}}
\hspace{3mm}
\subfigure[] 
{\includegraphics[width=4cm]{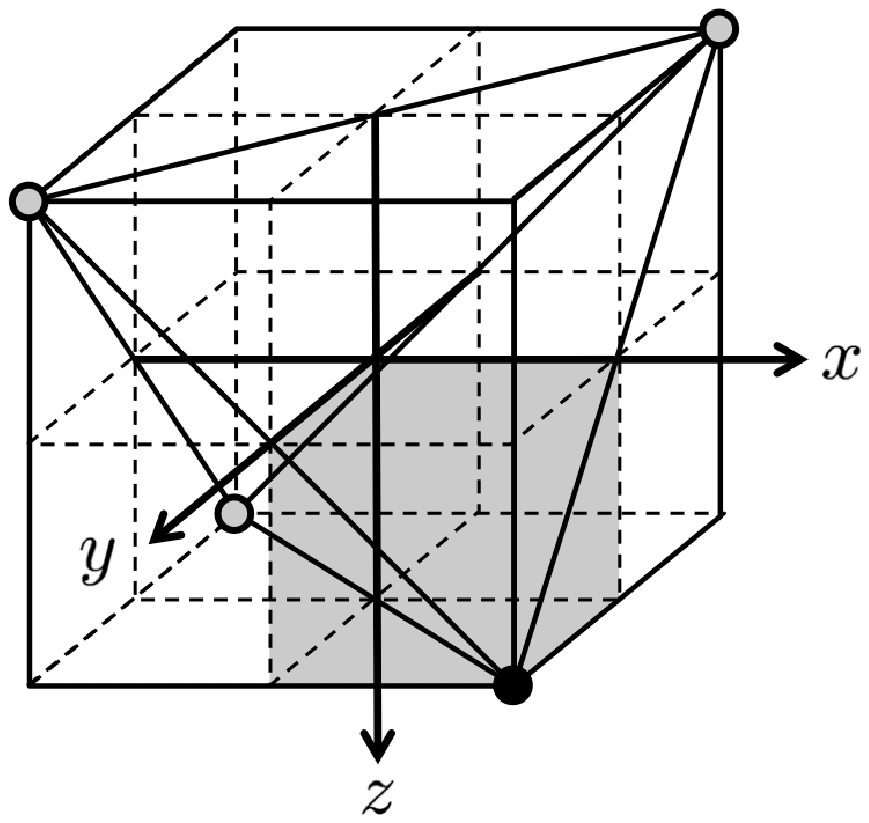}\label{fig:rot3-4}}
\hspace{3mm}
\subfigure[] 
{\includegraphics[width=4cm]{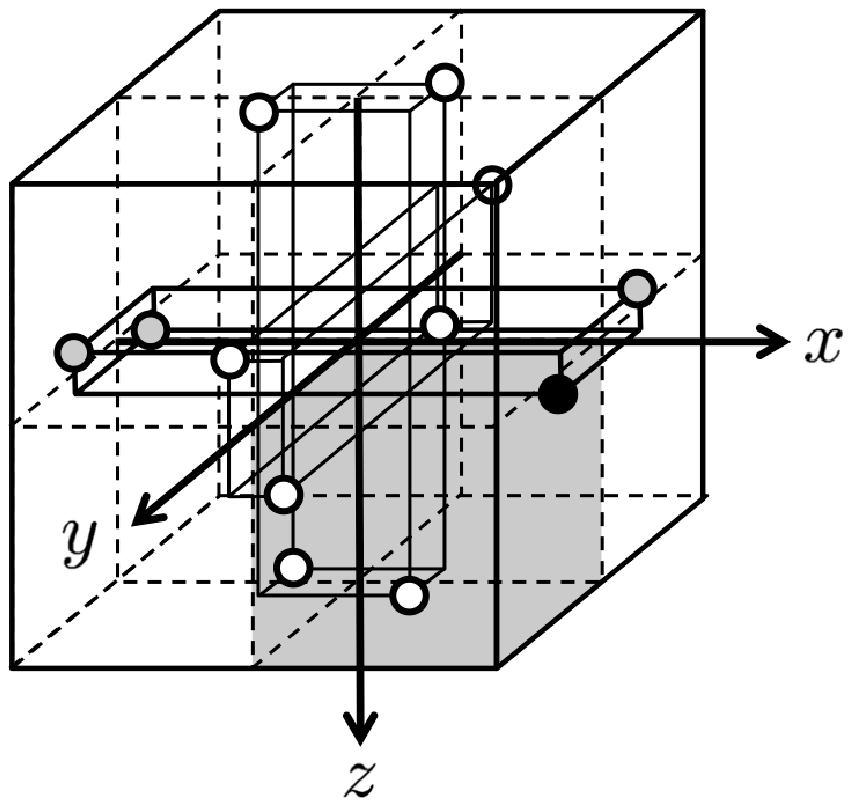}\label{fig:rot3-5}}
\caption{Position of a point of $P$ in the first octant, 
and the corresponding points generated by the $D_2$. 
The first octant is shown in the gray box in (a). 
The black circle is a point of $P$, and the gray circles are 
the points generated by $D_2$. 
The white circles are generated so that none of the three 
rotation axes is recognized.  }
\label{fig:rot3}
\end{figure}

\shortqed 
\end{proof}

Clearly, Property~\ref{property:d2-principal} holds for
the robots since the above discussion 
does not depend on the local coordinate systems.

\end{document}